\documentclass[
final
]{dmtcs-episciences}

\usepackage[utf8]{inputenc}
\usepackage{subfigure}


\usepackage{amsmath,amssymb,stmaryrd,dsfont,paralist,amsthm,eurosym}
\usepackage[format=default,labelfont=bf]{caption}
\usepackage{microtype}
\usepackage[colorinlistoftodos]{todonotes}
\usepackage{hyperref}
\usepackage{multicol} 
\usepackage{stmaryrd,dsfont}
\usepackage{todonotes}
\usepackage{makeidx}         
\usepackage{wasysym}         


\newtheorem{theorem}{Theorem}[section]
\newtheorem{lemma}[theorem]{Lemma}
\newtheorem{proposition}[theorem]{Proposition}
\newtheorem{corollary}[theorem]{Corollary}
\newtheorem{observation}[theorem]{Observation}
\newtheorem{example}[theorem]{Example}






\newcommand{\cA}{\mathcal A}
\newcommand{\cB}{\mathcal B}

\newcommand{\cD}{\mathcal D}

\newcommand{\cG}{\mathcal G}

\newcommand{\cL}{\mathcal L}

\newcommand{\cP}{\mathcal P}

\newcommand{\cV}{\mathcal V}


\newcommand{\rc}{\mathrm{c}}
\newcommand{\rf}{\fin}

\newcommand{\rP}{\mathrm{P}}







\DeclareMathOperator{\n}{-}

\newcommand{\LL}{\mathrm{L}}
\newcommand{\GLL}{\mathrm{GL}}


\renewcommand{\phi}{\varphi}
\renewcommand{\epsilon}{\varepsilon}

\newcommand{\sem}[1]{[\![#1]\!]}

\DeclareMathOperator{\m}{-}

\newcommand{\nat}{\mathbb{N}}
\newcommand{\init}{\mathrm{in}}
\newcommand{\fin}{\mathrm{fin}}
\newcommand{\Ae}{\mathit{Ae}}
\newcommand{\con}{\mathrm{conv}}

\newcommand{\tup}[1]{\langle #1 \rangle}







\newcommand{\MSOL}{\mathrm{MSOL}}

\newcommand{\lab}{\mathrm{lab}}
\newcommand{\nlab}{\mathrm{\ell}}
\newcommand{\edge}{\mathrm{edge}}
\newcommand{\eq}{\mathrm{eq}_\Ae}
\newcommand{\ec}{\mathrm{ec}}
\newcommand{\unique}{\mathrm{uniquebar}}
\newcommand{\first}{\mathrm{first}}
\newcommand{\last}{\mathrm{last}}
\newcommand{\nnext}{\mathrm{next}}
\newcommand{\Free}{\mathrm{Free}}
\newcommand{\closed}{\mathrm{closed}}
\newcommand{\ppath}{\mathrm{path}}
\newcommand{\sstring}{\mathrm{string}}
\newcommand{\sstringlike}{\mathrm{string}\mathord{\m}\mathrm{like}}

\newcommand{\ssucc}{\mathrm{succ}}
\newcommand{\ppred}{\mathrm{pred}}

\newcommand{\edgr}{\mathrm{ed}\mathord{\m}\mathrm{gr}}
\newcommand{\ndgr}{\mathrm{nd}\mathord{\m}\mathrm{gr}}
\newcommand{\eeuro}{\;\text{\euro}\;}

\newcommand{\exclusive}{\mathrm{exclusive}}
\newcommand{\com}{\mathrm{com}}
\newcommand{\tr}{\mathrm{tr}}
\newcommand{\pair}{\mathrm{pair}}
\newcommand{\rel}{\mathrm{rel}}
\newcommand{\beh}{\mathrm{beh}}
\newcommand{\out}{\mathrm{out}}
\newcommand{\inn}{\mathrm{in}}



\newcommand{\true}{\mathrm{true}}
\newcommand{\false}{\mathrm{false}}

\newcommand{\moveup}{\mathrm{up}}
\newcommand{\movedown}{\mathrm{down}}

\newcommand{\TRIV}{\mathrm{TRIV}}
\newcommand{\Triv}{\mathrm{Triv}}
\newcommand{\PDop}{\mathrm{P}}

\newcommand{\ttop}{\mathrm{top}}

\newcommand{\pop}{\mathrm{pop}}
\newcommand{\push}{\mathrm{push}}


\title{A B\"uchi-Elgot-Trakhtenbrot theorem for automata with MSO graph storage}

\author{Joost Engelfriet\affiliationmark{1} \and Heiko Vogler \affiliationmark{2}}
\affiliation{
  LIACS, Leiden University, Leiden, The Netherlands\\
  Technische Universit\"at Dresden, Dresden, Germany}

\keywords{automata with storage, monadic second-order logic,
  graph automata, B\"uchi-Elgot-Trakhtenbrot theorem}

\received{2019-05-03}
\revised{2020-07-06}
\accepted{2020-08-03}


\begin{document}
\publicationdetails{22}{2020}{4}{3}{5424}
\maketitle

\begin{abstract} We introduce MSO graph storage types, and call a storage type MSO-expressible if it is isomorphic to some MSO graph storage type. An MSO graph storage type has MSO-definable sets of graphs as storage configurations and as storage transformations. We consider sequential automata with MSO graph storage and associate with each such automaton a string language (in the usual way) and a graph language; a graph is accepted by the automaton if it represents a correct sequence of storage configurations for a given input string. For each MSO graph storage type, we define an MSO logic which is a subset of the usual MSO logic on graphs. We prove a  B\"uchi-Elgot-Trakhtenbrot theorem, 
both for the string case and the graph case.
Moreover, we prove that (i)~each  MSO graph transduction can be used as storage transformation in an MSO graph storage type, (ii)~every automatic storage type is MSO-expressible, and (iii)~the pushdown operator on storage types preserves the property of MSO-expressibility. Thus, the iterated pushdown storage types are MSO-expressible.
\end{abstract}

\newpage
\tableofcontents
\newpage

\section{Introduction}
\label{sec:introduction}

Starting in the 60's of the previous century, a number of different types of  nondeterministic one-way string automata with additional storage were introduced in order to model different aspects of programming languages or natural languages. Examples of such storages are pushdowns \cite{cho62},  stacks \cite{gingrehar67}, checking-stacks \cite{gre69,eng79},  checking-stack pushdowns \cite{vle76},  nested stacks \cite{aho69}, iterated pushdowns \cite{gre70,mas76,eng86,damgoe86}, queues, and monoids or groups \cite{kam09,zet17}. 
Several general frameworks were considered in which the concept of storage has different names: machines \cite{sco67}, AFA-schemas \cite{gin75} (where AFA stands for abstract family of acceptors), data stores \cite{gol77,gol79}, and storage types  \cite{eng86,engvog86}. 

Intuitively, a storage type $S$ consists of a set $C$ of (storage) configurations, an initial configuration in $C$, a finite set $\Theta$ of instructions, and a meaning function~$m$. The meaning function assigns to each instruction a storage transformation, which is a binary relation on $C$. An automaton $\cA$ with storage type~$S$, for short: $S$-automaton, has a finite set of states with designated initial and final states, and a finite number of transitions of the form  $(q,\alpha,\theta,q')$ where $q,q'$ are states, $\alpha$ is an input symbol or the empty string, and $\theta$ is an instruction. During a computation on an input string, $\cA$ changes its state and reads input symbols consecutively (as for finite-state automata without storage); 
additionally, $\cA$~maintains a configuration in its storage, starting in the initial configuration of~$S$. If the current configuration of the storage is $c$ and $\cA$~executes a transition with instruction $\theta$, then $c$ is replaced by some configuration $c'$ such that $(c,c') \in m(\theta)$; if such a $c'$ does not exist, then $\cA$ cannot execute this transition. 
It is easy to see that pushdown automata, stack automata, nested-stack automata etc.~are particular $S$-automata (cf. \cite{eng86,engvog86} for examples). A~string language is \mbox{\emph{$S$-recognizable}} if there is an $S$-automaton that accepts this language. 
Since we only consider ``finitely encoded'' storage types (which means that $\Theta$ is finite),
there is one $S$-recognizable language of particular interest: the language $\cB(S)\subseteq \Theta^*$
that consists of all \emph{behaviours of~$S$}, i.e., all strings of instructions $\theta_1\cdots\theta_n$ for which there are configurations $c_1,\dots,c_{n+1}$ such that $c_1$ is the initial configuration and $(c_i,c_{i+1})\in m(\theta_i)$ for every $i\in\{1,\dots,n\}$. Intuitively, $\cB(S)$ represents the expressive power of $S$.

A major contribution to the theory of automata with storage is the following result \cite{gingre69,gingrehop69,gingre70,gin75}:  a class $\cL$ of string languages is a 
full principal AFL (abstract family of languages) if and only if there is a finitely encoded AFA-schema (i.e., storage type) $S$ such that $\cL$ is  the class  of all \mbox{$S$-recognizable} string languages.
In fact, $\cL$ is generated by the language $\cB(S)$. 
In \cite{eng86}, recursive \mbox{$S$-automata} and alternating $S$-automata  were investigated, and two characterizations of recursive \mbox{$S$-automata} were proved: (i)~in terms of sequential $\PDop(S)$-automata (where $\PDop$~is the pushdown operator on storage types \cite{gre70,eng86,engvog86,eng91c}) and (ii)~in terms of deterministic (sequential) $S$-automata. Based on the concept of weighted automata \cite{sch61,eil74,salsoi78,kuisal86,berreu88,sak09,drokuivog09}, recently also weighted $S$-automata have been investigated \cite{hervog15,hervog16,vogdroher16,fulhervog18,fulvog19,hervogdro19}.

A fundamental theorem for the class of recognizable (or regular) string languages is the B\"uchi-Elgot-Trakhtenbrot theorem \cite{buc60,buc62,elg61,tra61} (for short: BET-theorem). It states that a string language is recognizable by a finite-state automaton if and only if it is MSO-definable, i.e., definable by a closed formula of monadic second-order logic (MSO logic). 
This theorem has been generalized in several directions: (i)~for structures different from strings, such as, e.g., trees \cite{thawri68,don70}, nested words \cite{alumad09,bol08}, traces \cite{tho90,chogue93}, and pictures \cite{giaresseitho96}, and (ii)~for weighted automata \cite{drogas07,drogas09,gasmon18}. Moreover, (iii)~the BET-theorem was extended to classes of languages which go beyond recognizability by finite-state automata. In \cite{lauschthe94} context-free languages were characterized by an extension of MSO logic in which formulas have the form $\exists M. \varphi$, where $M$ is a matching (of the positions of the given string) and $\varphi$ is a formula of MSO logic (or even first-order logic). A similar result was obtained in \cite{fravou15} for realtime indexed languages.
Inspired by this third direction, in \cite{vogdroher16}, for each storage type $S$ an extended weighted MSO logic was introduced and a BET-theorem for weighted $S$-automata was proved; in that logic formulas have the form  $\exists B. \varphi$ where $B$ is a behaviour of~$S$ (of the same length as the input string) and $\varphi$ is a formula of weighted MSO logic.

The  BET-theorems in (iii) above can be captured by the following scheme. 
Let us consider a class of ``$X$-recognizable'' languages (where, e.g., $X$ is a storage type), and suppose that we have defined 
for every input alphabet $A$ a set of graphs $\cG[X,A]$ and a mapping $\pi\colon \cG[X,A]\to A^*$.
For every string $w\in A^*$, let $\cG[X,w]$ be the set of all graphs $g\in\cG[X,A]$
such that $\pi(g)=w$; intuitively, the graphs in $\cG[X,w]$ are ``extensions'' of the string $w$. 
In this situation, the BET-theorem says that
a language $L\subseteq A^*$ is $X$-recognizable if and only if there is a closed formula $\phi$ of MSO logic on graphs such that 
\[
L=\{w\in A^*\mid \exists g\in \cG[X,w]: \,g\models \phi\}
\] 
where $g\models \phi$ means as usual that $g$ satisfies $\phi$. Or in other words, 
$L$ is \mbox{$X$-recognizable} if and only if $L=\pi(\cG[X,A]\cap G)$ 
for some MSO-definable set of graphs $G$. 
In \cite{lauschthe94} (with $X$-recognizable = context-free), each string (viewed as a graph in the obvious way) is extended with edges between its positions that form a matching (and $\pi$ removes those edges). 
In \cite{vogdroher16} (with $X=S$), each string is extended with an additional labeling of its positions, which forms a behaviour of $S$. Whereas in \cite{lauschthe94} the class of graphs $\cG[X,A]$ is itself MSO-definable (because matchings can be defined by an MSO formula), that is not the case in \cite{vogdroher16}, 
because the language $\cB(S)$ is, in general, not regular. Thus, in \cite{vogdroher16}, an $S$-recognizable language $L$ is expressed by a combination of a formula of MSO logic and the non-recognizable language $\cB(S)$.
Our aim in this paper is to define storage types $S$ for which we can find a BET-theorem for $S$-recognizable languages that satisfies the above scheme, such that every set of graphs $\cG[S,A]$ is MSO-definable.
In that case the BET-theorem is equivalent to saying that $L$ is $S$-recognizable if and only if $L=\pi(G)$ for some MSO-definable subset $G$ of $\cG[S,A]$. 
As a final remark, we observe that according to the above scheme the BET-theorem for trees (see (i) above) 
also leads to 
a BET-theorem for the context-free languages (closely related to the one in~\cite{lauschthe94}). In this case $\cG[X,A]$ is the MSO-definable set of all binary trees~$t$ of which the yield is in $A^*$ (and the internal nodes are labeled by some fixed symbol), and $\pi(t)$ is the yield of $t$. Thus, each string $w$ is extended into trees with yield $w$. Since the context-free languages are the yields of the recognizable tree languages $G\subseteq \cG[X,A]$ (see~\cite[Chapter~III, Theorem~3.4]{gecste84}), they are indeed the yields of the MSO-definable tree languages $G\subseteq \cG[X,A]$. It should be noted that the trees in $\cG[X,A]$ can be viewed as the skeletons of derivation trees of a context-free grammar (in Chomsky normal form).  
Similarly, for a storage type $S$ we will define the set of graphs $\cG[S,A]$ such that its elements can be viewed as skeletons of the computations of $S$-automata.\footnote{For the storage type $S$ of pushdowns this will lead to yet another BET-theorem for the context-free languages.} Roughly speaking, such a skeleton is the sequence $c_1,\dots,c_{n+1}$ of configurations that witnesses a behaviour $\theta_1\cdots\theta_n$ of $S$. Thus, the configurations of $S$ have to be represented by graphs. Moreover, in order to be able to express in MSO logic the relationship between $c_i$ and $c_{i+1}$ caused by the instruction $\theta_i$, the storage transformation $m(\theta)$ of each instruction~$\theta$ also has to be represented by a set of graphs.

For pushdown-like storage types (as, e.g., the first six above-mentioned ones), the configurations and instructions are often explained and illustrated by means of pictures. For example, Figures \ref{fig:pushdown-informal}(a) and (b) show illustrations  of a pushdown configuration and of instances of a push- and a pop-instruction, respectively (cf. \cite[p.~344f]{engvog86} for an example concerning nested stacks over some storage type~$S$). Indeed, such pictures can be formalized as graphs (with pushdown cells as nodes and neighbourhood as edges), and hence, storage transformations can be understood as graph transductions.

\begin{figure}[t]
  \begin{center}
    \includegraphics[scale=0.3,trim={0 2.8cm 0 0.5cm},clip]{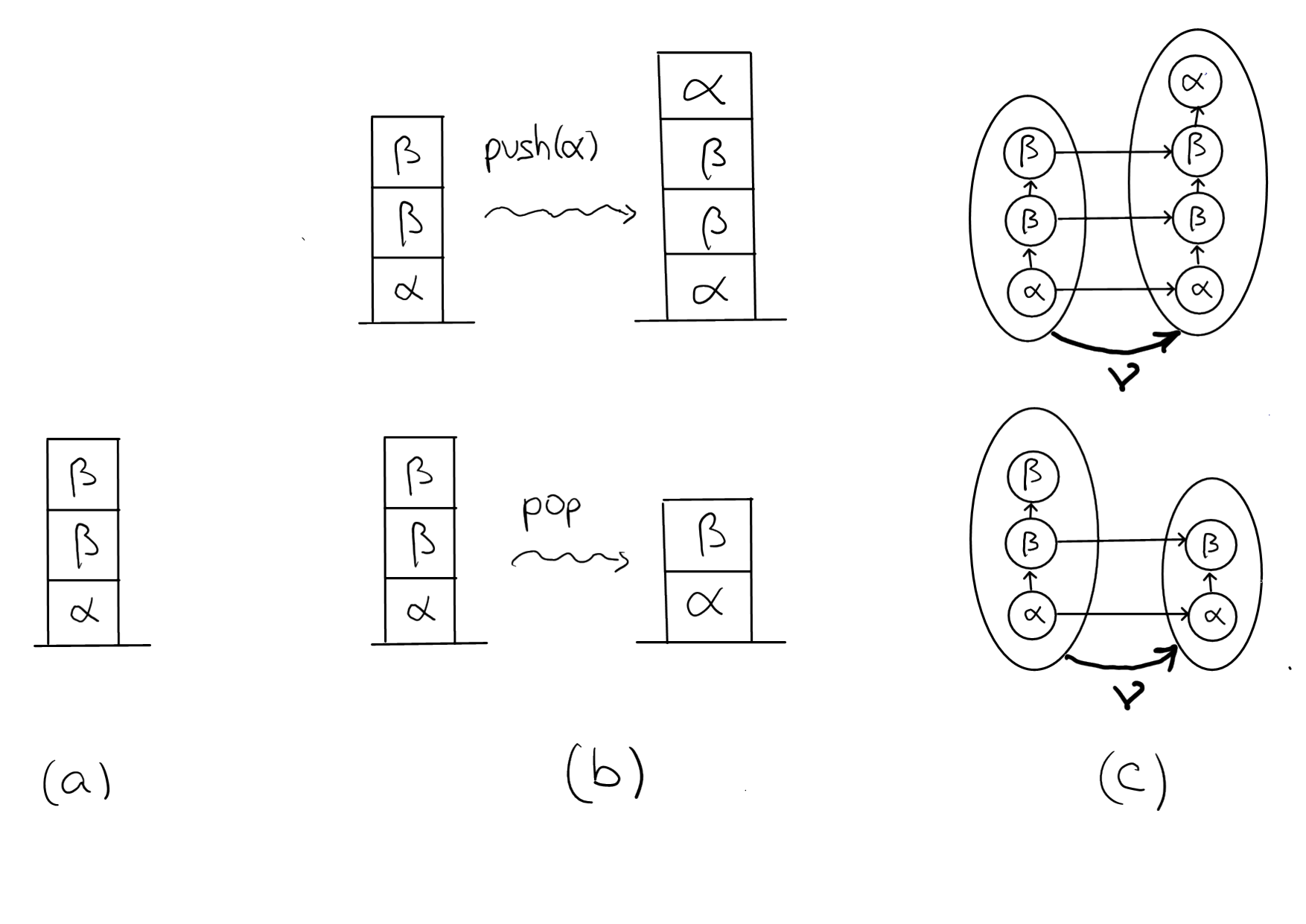}
    \end{center}
  \caption{\label{fig:pushdown-informal} (a) Illustration of a pushdown confguration, (b) illustration of instances of the instructions $\push(\alpha)$ and $\pop$, (c) two pair graphs corresponding to the instances of the instructions shown in (b). In (c), the two components $g_1$ and $g_2$ of a pair graph are surrounded by ovals, and the $\nu$-labeled edge between the ovals represents the $\nu$-labeled edges from each node of $g_1$ to each node of~$g_2$.}
  \end{figure}

  In this paper,  we define particular storage types for which the set of configurations is an MSO-definable set of graphs.  Moreover, each instruction $\theta$ is a closed MSO formula 
that defines a set of so-called \mbox{\emph{pair graphs}}. 
The storage transformation $m(\theta)$ is specified by the formula $\theta$ as follows. Intuitively, 
a \mbox{pair graph} is a graph that is partitioned into two component graphs $g_1$ and $g_2$, which are two configurations, one before the execution of the instruction and one after execution; to express this, there are $\nu$-labeled edges from each node of $g_1$ to each node of $g_2$ (where $\nu$ stands for `next');
moreover, there can be additional edges between $g_1$ and $g_2$ (intermediate edges) which model the similarity of the two configurations (cf. Figure~\ref{fig:pushdown-informal}(c) for examples of pair graphs which represent instances of the instructions $\push(\alpha)$ and $\pop$, respectively; in this case, the intermediate edges between $g_1$ and $g_2$ show which nodes of $g_2$ can be viewed as copies of nodes of $g_1$, unchanged by the instruction). By dropping the $\nu$-labeled edges and the intermediate edges we obtain the ordered pair $(g_1,g_2)$ which is an element of the graph transduction specified by the MSO formula $\theta$, i.e., the storage transformation $m(\theta)$. We call such a storage type an \mbox{\emph{MSO graph storage type}}.
We  say that a storage type is  \emph{MSO-expressible} if it is isomorphic to some MSO graph storage type.

We study $S$-automata $\cA$ where $S$ is an MSO graph storage type. To simplify the discussion in this Introduction, we will assume that $\cA$ has no $\epsilon$-transitions, i.e., $\alpha\neq\epsilon$ in every transition $(q,\alpha,\theta,q')$.
We also assume that the graphs defined by the MSO formulas of $S$ do not have $A$-labeled edges.

The $S$-automaton $\cA$ accepts a string language $\LL(\cA)$ over some input alphabet~$A$ and a graph language $\GLL(\cA)$.
The string language $\LL(\cA)$ is defined in the usual way as for automata with arbitrary storage, i.e., the configurations are kept in a private memory. But we can also view $\cA$ as graph acceptor. Then the sequence of configurations, assumed by the string acceptor $\cA$ while accepting a string $w\in A^*$, is made public and, together with the string $w$, forms the input for the graph acceptor $\cA$. So to speak, the graph acceptor $\cA$ accepts the storage protocols of the string acceptor $\cA$. In order to describe such storage protocols, we define \emph{string-like graphs}. Intuitively, each string-like graph $g$ is a graph that consists of a sequence of component graphs; their order is provided by $A$-labeled edges (similarly to the $\nu$-edges in pair graphs) and the sequence of labels of these edges is called the trace of $g$ (which corresponds to the input string $w$ above). Each component is a configuration of the MSO graph storage type $S$, and the first component is the initial configuration of $S$. Moreover, between consecutive components intermediate edges may occur that model the similarity of the respective configurations (as in pair graphs); cf. Figure~\ref{fig:comput-intro} for an example.
We denote the set of all such string-like graphs by $\cG[S,A]$. It should be intuitively clear that $\cG[S,A]$ is MSO-definable. Note that every string-like graph $g$ with trace $w\in A^*$ can be viewed as an ``extension'' of the string $w$; thus, `trace' is the mapping $\pi: \cG[S,A]\to A^*$ in the scheme of BET-theorems sketched above.
  \begin{figure}[t]
  \begin{center}
    \includegraphics[scale=0.37,trim={0 6.2cm 0 4cm},clip]{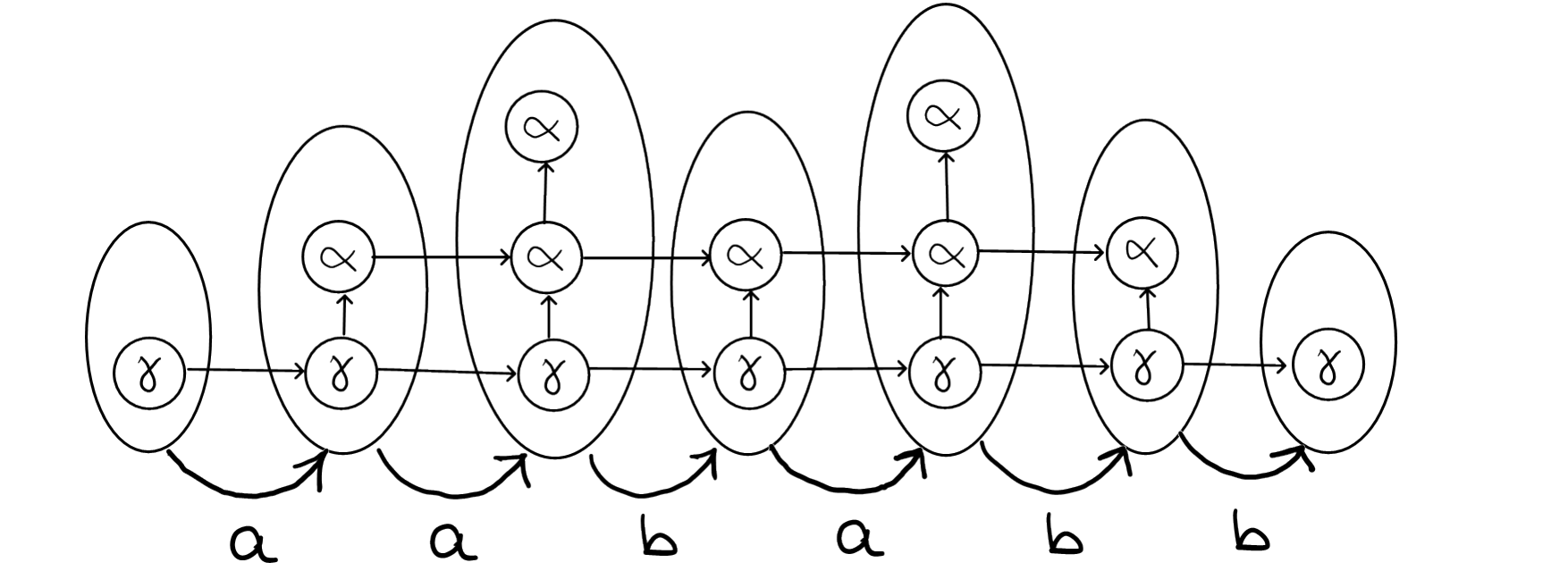}
    \end{center}
\caption{\label{fig:comput-intro} A string-like graph $g$ with seven components (surrounded by ovals). Each component represents a pushdown configuration (formalized as graph). Starting from the initial configuration $\gamma$, the sequence of components results from the execution of the instructions $\push(\alpha)$, $\push(\alpha)$, $\pop$, $\push(\alpha)$, $\pop$, and $\pop$. The trace of $g$ is $aababb$, where $a,b \in A$ are input symbols. An \mbox{$a$-labeled} edge from one oval to another represents $a$-labeled edges from each node of the one component to each node of the other component, and similarly for $b$-labeled edges.} 
\end{figure}
The graph acceptor $\cA$ accepts a string-like graph $g\in\cG[S,A]$ with $n+1$ components ($n\geq 0$), 
if there is a sequence
\[
(q_1,\alpha_1,\theta_1,q_2) \cdots (q_n,\alpha_n,\theta_n,q_{n+1})
\]
of transitions of $\cA$ such that (i) the state sequence $q_1\cdots q_{n+1}$ obeys the usual conditions, 
(ii)~$\alpha_1\cdots\alpha_n$ is the trace of $g$, and (iii) for each $i \in \{1,\ldots,n\}$, the restriction of $g$ to its $i$th and $(i+1)$st components (including the intermediate edges)  is a pair graph that satisfies the formula $\theta_i$ (after having replaced each $A$-label by~$\nu$). In view of (iii) the sequence $\theta_1\cdots\theta_n$ is a behaviour of $S$ (i.e., an element of $\cB(S)\subseteq \Theta^*$), which we will call a \emph{behaviour of $S$ on $g$}. The graph language $\GLL(\cA)$ accepted by $\cA$ is the set of all string-like graphs that are accepted by $\cA$.
A graph language $L\subseteq \cG[S,A]$ is \emph{$S$-recognizable} if there exists an $S$-automaton $\cA$ such that $L=\GLL(\cA)$.

Our first two main results are BET-theorems, one for sets of  string-like graphs and one for string languages, accepted by $S$-automata over the input alphabet $A$. 
Unfortunately we cannot exactly follow the scheme of BET-theorems sketched above. 
Instead of using arbitrary MSO formulas on the graphs of $\cG[S,A]$, as in that scheme, 
we have to restrict ourselves to a specific subset of that logic, tailored to the storage type $S$. 

Thus, we introduce the special logic $\MSOL(S,A)$ which can be viewed as a subset of the usual MSO logic for graphs.
Each formula $\phi$ of $\MSOL(S,A)$ has two levels: an outer level that refers to the fact that 
a string-like graph $g$ is a sequence of graph components connected by $A$-labeled edges (string aspect of~$g$), and an inner level that deals with the consecutive graph components of $g$ as configurations of $S$ (storage behaviour aspect of $g$). To express the storage behaviour aspect,
the inner level of $\phi$ consists of subformulas of $\phi$ of the form $\nnext(\theta,x,y)$ where $\theta \in \Theta$. (Recall that $\Theta$ is a set of closed MSO formulas.) The formula $\nnext(\theta,x,y)$ holds for $g$ if the nodes $x$ and $y$ are in the $i$th and $(i+1)$st component of $g$ (respectively) for some $i$, and the restriction of $g$ to these components (including the intermediate edges, and with the $A$-labels replaced by~$\nu$) satisfies the formula~$\theta$.\footnote{Formulas of the form $\nnext(\theta,x,y)$ play a similar role as formulas of the form $B(x)=(p,f)$ in the logic presented in~\cite{vogdroher16}.}
The outer level of $\phi$, which is the remainder of~$\phi$, is built up as usual (with negation, disjunction, and first-order and second-order existential quantification) from the above subformulas $\nnext(\theta,x,y)$ and the following atomic subformulas. To express the string aspect, there is no need for atomic formulas that can test the label of a node, but there are atomic formulas $\edge_\alpha(x,y)$ that can test whether there is an edge from $x$ to $y$ with label $\alpha$, for $\alpha\in A$. Moreover, the atomic formula $x \in X$ is replaced by the atomic formula $x \eeuro X$, which holds for~$g$ if $x \in X$ or there is a node $y\in X$ in the same component of $g$ as $x$. It should be intuitively clear that the logic $\MSOL(S,A)$ can be viewed as a subset of the usual MSO logic for graphs  (cf. Observation~\ref{obs:msolsubset}). 
A set of string-like graphs $L\subseteq \cG[S,A]$ is \emph{$\MSOL(S,A)$-definable} if 
there exists a closed formula $\phi \in \MSOL(S,A)$ such that 
\[
L = \{g\in\cG[S,A] \mid g \models \beh \wedge \phi\} 
\]
where the formula $\beh \in \MSOL(S,A)$ guarantees the existence of an $S$-behaviour on $g$.
Similarly, a string language $L \subseteq A^*$ is \emph{$\MSOL(S,A)$-definable} 
if there exists a closed formula $\phi \in \MSOL(S,A)$ such that 
\[
L = \{w \in A^* \mid \exists g\in \cG[S,w]: g  \models \beh \wedge \phi\}
\]
where $\cG[S,w]$ is the set of all $g\in\cG[S,A]$ that have trace $w$.
Then our first two main results state, for every MSO graph storage type $S$ and alphabet $A$, that
\begin{compactitem}
\item for every graph language $L \subseteq \cG[S,A]$, $L$ is $S$-recognizable if and only if
it is $\MSOL(S,A)$-definable (cf. Theorem~\ref{thm:main}), and
\item for every string language $L \subseteq A^*$, $L$ is $S$-recognizable if and only if it is $\MSOL(S,A)$-definable (cf. Theorem~\ref{thm:main-string}).
\end{compactitem}

The remaining three main results concern the question: which storage types are MSO-expressible?
We call a binary relation $R$ on graphs \emph{MSO-expressible} if there is a closed formula $\theta$ of MSO logic for graphs such that $\theta$ defines a set $\LL(\theta)$ of pair graphs and, roughly speaking, $R$ is obtained from $\LL(\theta)$ by dropping all the $\nu$-labeled edges and the intermediate edges. We prove that
\begin{compactitem}
\item every MSO graph transduction is MSO-expressible (cf. Theorem \ref{thm:MSO-graph-transduction-MSO-expressible}) 
\end{compactitem}
where an MSO graph transduction is induced by a (nondeterministic) MSO graph transducer \cite{bloeng00,coueng12}. Thus, if the storage transformations of a  storage type~$S$ are MSO graph transductions, then $S$ is MSO-expressible, i.e., isomorphic to an MSO graph storage type.
We also prove that
\begin{compactitem}
\item every automatic storage type is MSO-expressible (cf. Theorem \ref{thm:automatic}) 
\end{compactitem}
where a storage type $S$ is called automatic if the set $C$, together with the binary relations 
$m(\theta)$ for every $\theta\in\Theta$, is an automatic structure~\cite{khomin,rub08}.

Finally, we consider the above-mentioned pushdown operator $\PDop$ on storage types and prove that
\begin{compactitem}
\item  for every storage type $S$, if $S$ is MSO-expressible, then so is $\PDop(S)$ (cf.  Theorem \ref{thm:P}).
\end{compactitem}
Consequently, the $n$-iterated pushdown storage $\PDop^n$ is MSO-expressible (cf. Corollary \ref{cor:Pn}). We denote the class of all string languages that are accepted by $\PDop^n$-automata by $\PDop^n\mathord{\n}\mathrm{REC}$. 
The family $(\PDop^n\mathord{\n}\mathrm{REC} \mid n \in \mathbb{N})$ was investigated intensively \cite{wan74,engsch77,engsch78,dam82,damgoe86,eng91c}.  It is an infinite hierarchy of classes of string languages which starts with the classes of regular languages ($n=0)$, context-free languages ($n=1$), and indexed languages ($n=2$).

\section{Preliminaries}
\label{sec:preliminaries}

\subsection{Mathematical Notions}

We denote the set $\{0,1,2,\ldots\}$ of natural numbers by $\mathbb{N}$. 
For each $n \in \mathbb{N}$ we denote the set $\{i \in \mathbb{N} \mid 1 \le i \le n\}$ by $[n]$. Thus, in particular, $[0] = \emptyset$. For sets $A$ and $B$, we denote a total function (or: mapping) $f$ from $A$ to $B$ by $f: A \to B$. For a nonempty set $A$, a partition of $A$ is a set $\{A_1,\dots,A_n\}$ of mutually disjoint nonempty subsets of $A$ such that $\bigcup_{i\in[n]}A_i=A$. An \emph{ordered partition} of~$A$ is a sequence $(A_1,\dots,A_n)$ of distinct sets such that $\{A_1,\dots,A_n\}$ is a partition of~$A$. 

For a set $A$, we denote by $A^*$ the set of all sequences $(a_1,\dots,a_n)$ with $n\in\nat$ and $a_i\in A$ for every $i\in[n]$. The empty sequence (with $n=0$) is denoted by~$\epsilon$, and $A^+$ denotes the set of nonempty sequences. A sequence $(a_1,\dots,a_n)$ is also called a string over $A$, and it is then written as $a_1\cdots a_n$. The length of $w=a_1\cdots a_n$ is $n$, also denoted by $|w|$. 
An \emph{alphabet} is a finite and nonempty set. For an alphabet $A$,
a subset of $A^*$ is called a language over~$A$, or (when necessary) a string language over $A$. 

\begin{quote}\emph{In the rest of the paper, we let $\Sigma$ and $\Gamma$ denote arbitrary alphabets if not specified otherwise. } \end{quote}

\subsection{Graphs and Monadic Second-Order Logic} 
\label{sec:graphs-MSO}

We use $\Sigma$ and $\Gamma$ as alphabets of node labels and edge labels, respectively. A~\emph{graph over $(\Sigma,\Gamma)$} is a tuple $g = (V,E,\nlab)$
where $V$ is a nonempty  finite set (of \emph{nodes}), $E \subseteq V \times \Gamma \times V$ (set of \emph{edges}) such that $u\neq v$ for every $(u,\gamma,v)\in E$, and $\nlab: V \to \Sigma$ (\emph{node-labeling function}). Note that we only consider graphs that are nonempty and do not have loops; 
moreover, multiple edges must have distinct labels. 
For a graph $g$ we denote its sets of nodes and edges by $V_g$ and $E_g$, respectively, and its node-labeling function by $\nlab_g$.  For $\Delta\subseteq \Gamma$, an edge $(u,\gamma,v)$ is called a $\Delta$-edge if $\gamma\in\Delta$; for $\gamma\in\Gamma$ we write $\gamma$-edge for $\{\gamma\}$-edge. The set of all graphs over $(\Sigma,\Gamma)$ is denoted by $\cG_{\Sigma,\Gamma}$. A subset of $\cG_{\Sigma,\Gamma}$ is also called a graph language over $(\Sigma,\Gamma)$.

We will view isomorphic graphs to be the same. Thus, we consider abstract graphs. As usual, 
we use a concrete graph to define the corresponding abstract graph. 

Let $g=(V,E,\nlab)$ be a graph over $(\Sigma,\Gamma)$.
For a node $u\in V$ and an edge label $\gamma\in\Gamma$ we define its incoming and outgoing \emph{neighbours} (with respect to $\gamma$-edges)
by $\inn_\gamma(u)=\{v\in V \mid (v,\gamma,u)\in E\}$ and 
$\out_\gamma(u)=\{v\in V \mid (u,\gamma,v)\in E\}$, respectively.
Now let $\Delta\subseteq \Gamma$. We define $u,u'\in V$ to be \emph{$\Delta$-equivalent}, 
denoted by $u\equiv_\Delta u'$, if 
$\inn_\delta(u)=\inn_\delta(u')$ and $\out_\delta(u)=\out_\delta(u')$ for every $\delta\in\Delta$.
Since $g$ has no loops, there are no $\Delta$-edges between $\Delta$-equivalent nodes. 
It is also easy to see that, for every $\delta\in\Delta$, the equivalence relation $\equiv_\Delta$ 
is a congruence with respect to the $\delta$-edges, i.e., for every $u,u',v,v'\in V$, 
if $(u,\delta,v)\in E$, $u \equiv_\Delta u'$, and $v \equiv_\Delta v'$, then $(u',\delta,v')\in E$.

Let $g=(V,E,\nlab)$ be a graph over $(\Sigma,\Gamma)$. For a nonempty set $V'\subseteq V$, 
the \emph{subgraph of $g$ induced by~$V'$} is the graph $g[V']=(V',E',\nlab')$ where 
$E'=\{(u,\gamma,v)\in E\mid u,v\in V'\}$ and $\nlab'$ is the restriction of $\nlab$ to~$V'$.
For every $\Delta\subseteq\Gamma$ and $\gamma\in\Gamma$, 
we denote by $\lambda_{\Delta,\gamma}(g)$ the graph that is obtained from $g$ 
by changing every edge label in $\Delta$ into $\gamma$; 
thus, e.g., if $(u,\gamma_1,v)$ and $(u,\gamma_2,v)$ are distinct edges of $g$ 
and $\gamma_1, \gamma_2 \in \Delta$, then these two edges collapse to one edge $(u,\gamma,v)$.

Let $w=\gamma_1\cdots \gamma_n$ be a string over $\Gamma$, 
for some $n\in\nat$ and $\gamma_i\in \Gamma$ for each $i\in[n]$.
The graph $g=(V,E,\nlab)$ is a \emph{string graph for $w$} if
$V =[n+1]$ and  $E = \{(i,\gamma_i,i+1) \mid i \in [n]\}$.
Thus, string graphs for $w$ only differ in their node-labeling functions.
A graph is a \emph{string graph} if it is a string graph for some $w \in \Gamma^*$.

We use monadic second-order logic to describe properties of graphs. 
This logic has \emph{node variables} (first-order variables), like $x,x_1,x_2,\ldots, y,z$ and \emph{node-set variables} (second-order variables), like $X,X_1,X_2, \ldots,Y,Z$. A \emph{variable} is a node variable or a node-set variable. For a given graph $g$ over $(\Sigma,\Gamma)$, each node variable ranges over $V_g$, and each node-set variable ranges over the set of subsets of $V_g$.  

The \emph{set of MSO-logic formulas over $\Sigma$ and $\Gamma$}, denoted by  $\MSOL(\Sigma,\Gamma)$, is the smallest set $M$ of expressions such that
\begin{compactenum}
\item[$(1)$] for every $\sigma\in\Sigma$ and $\gamma \in \Gamma$, the set $M$ contains the expressions $\lab_\sigma(x)$,  $\edge_\gamma(x,y)$, and $(x\in X)$, which are called atomic formulas, and
\item[$(2)$] if $\varphi,\psi \in M$, then $M$ contains the expressions $(\neg \varphi)$, $(\varphi \vee \psi)$, $(\exists x. \varphi)$, and $(\exists X. \varphi)$.
\end{compactenum}
We will drop parentheses around subformulas if they could be reintroduced without ambiguity. 
We will use macros like $x=y$, $X\subseteq Y$,  $\varphi \rightarrow \psi$,  $\varphi \leftrightarrow \psi$, $\varphi \wedge \psi$, $\forall x. \varphi$, $\forall X.\varphi$, $\true$, and $\false$,  with their obvious definitions. We abbreviate $\forall x. \forall y. \varphi$ by $\forall x,y. \varphi$ and similarly for more than two variables and for  existential quantification. 
Moreover, for every $\Delta\subseteq \Gamma$, we use the macros
\begin{align*}
\edge_\Delta(x,y) &= \bigvee\limits_{\gamma\in\Delta}\edge_\gamma(x,y), \\
\closed_\Delta(X) &= \forall x,y. ((\edge_\Delta(x,y) \wedge x \in X) \rightarrow y \in X), \text{ and}\\
\ppath_\Delta(x,y) &= \forall X. ((\closed_\Delta(X) \wedge x \in X) \rightarrow y \in X)\enspace,
\end{align*}
where the formula $\ppath_\Delta(x,y)$ means that there is a directed path from $x$ to~$y$ consisting of $\Delta$-edges.

In the usual way, we define the set $\Free(\varphi)$ of free variables of a formula $\varphi$.
If, say, $\{x,Y,z\}\subseteq \Free(\varphi)$, then we also write $\phi$ as $\phi(x,Y,z)$. 
For a set $\cV$ of variables, we denote the set 
$\{\varphi \in \MSOL(\Sigma,\Gamma)\mid \Free(\varphi)\subseteq\cV\}$ 
by $\MSOL(\Sigma,\Gamma,\cV)$. 
Each $\varphi \in \MSOL(\Sigma,\Gamma,\emptyset)$ is called \emph{closed}.

Let $g$ be a graph over $(\Sigma,\Gamma)$. Moreover, let $\cV$ be a set of variables and let
$\varphi \in \MSOL(\Sigma,\Gamma,\cV)$.
A~\emph{$\cV$-valuation on $g$} is a mapping $\rho$ that assigns to each node variable of $\cV$ an element of $V_g$ and to each node-set variable of $\cV$ a subset of $V_g$. 
In the usual way, we define the \emph{models relationship} $(g,\rho) \models \varphi$ to mean that~$g$, with the values of its free variables provided by $\rho$, \emph{satisfies}~$\phi$. 
Note that $(g,\rho) \models \lab_\sigma(x)$ if and only if $\ell_g(\rho(x))=\sigma$, and 
$(g,\rho) \models \edge_\gamma(x,y)$ if and only if $(\rho(x),\gamma,\rho(y))\in E_g$.
If, say, $\{x,Y,z\} \subseteq \cV$, 
then we also write $(g,\rho',\rho(x),\rho(Y),\rho(z)) \models \varphi(x,Y,z)$
instead of $(g,\rho) \models \varphi$, where $\rho'$ is the restriction of~$\rho$ to $\cV\setminus \{x,Y,z\}$.
If $\phi$ is closed, then we write $g \models \varphi$ instead of $(g,\emptyset) \models \varphi$, and 
we define $\LL(\phi)=\{g\in \cG_{\Sigma,\Gamma}\mid g \models \varphi\}$.
A graph language  $L \subseteq \cG_{\Sigma,\Gamma}$ is \emph{$\MSOL(\Sigma,\Gamma)$-definable}
(or just MSO-definable, when $\Sigma$ and $\Gamma$ are clear from the context) 
if there is a closed formula $\varphi \in \MSOL(\Sigma,\Gamma)$ such that $L = \LL(\varphi)$.

A set of closed formulas $\Phi\subseteq \MSOL(\Sigma,\Gamma)$ is \emph{exclusive} if its elements are mutually exclusive, i.e., $\LL(\varphi)\cap \LL(\psi)=\emptyset$ for all distinct $\varphi,\psi\in\Phi$.

For a formula $\phi\in\MSOL(\Sigma,\Gamma,\cV)$ and a node-set variable $Y\notin\cV$, the \emph{relativization of $\phi$ to $Y$} is the formula $\phi|_Y\in \MSOL(\Sigma,\Gamma,\cV\cup\{Y\})$ that is obtained from $\phi$ by restricting all quantifications of $\phi$ to $Y$. Formally, $\phi|_Y = \phi$ for every atomic formula, and
\[
\begin{array}{ll}
(\neg\phi)|_Y = \neg(\phi|_Y), & 
(\exists x.\phi)|_Y = \exists x.(x\in Y \wedge \phi|_Y),\\[1mm]
(\phi\vee \psi)|_Y=\phi|_Y\vee\psi|_Y, &
(\exists X.\phi)|_Y = \exists X.(X\subseteq Y \wedge \phi|_Y). 
\end{array}
\]
Let $g=(V,E,\nlab)$ be a graph over $(\Sigma,\Gamma)$, let $V'$ be a nonempty subset of $V$, and 
let $\rho$ be a $\cV$-valuation on the induced subgraph $g[V']$. 
Then, $(g[V'],\rho)\models \phi$ if and only if $(g,\rho,V')\models \psi(Y)$,
where $\psi=\phi|_Y$. 

\begin{example}\rm \label{ex:strings}
We show that the set of string graphs over $(\Sigma,\Gamma)$ is MSO-definable. For this, we define a closed MSO-logic formula $\sstring_{\Gamma}$ in $\MSOL(\Sigma,\Gamma)$  such that 
for each $g \in \cG_{\Sigma,\Gamma}$ we have
\[
g \models \sstring_{\Gamma} \ \text{ if and only if } \ g \text{ is a string graph over $(\Sigma,\Gamma)$}.
\]
Each string graph has a unique first node and a unique last node:
\begin{align*}
\first(x) &= (\neg \exists y. \edge_\Gamma(y,x)) \wedge \forall z. ((\neg \exists y. \edge_\Gamma(y,z)) \rightarrow z=x)\\
\last(x) &= (\neg \exists y. \edge_\Gamma(x,y)) \wedge \forall z. ((\neg \exists y. \edge_\Gamma(z,y)) \rightarrow z=x)\enspace.
\end{align*}
Moreover, each node has at most one successor and at most one predecessor: 
\begin{align*}
\ssucc_{\le 1}(x) &= \forall y,z. (\edge_\Gamma(x,y) \wedge \edge_\Gamma(x,z) \rightarrow y=z)\\
\ppred_{\le 1}(x) &= \forall y,z. (\edge_\Gamma(y,x) \wedge \edge_\Gamma(z,x) \rightarrow y=z) \enspace.
\end{align*}
In a string graph, there is at most one edge between two nodes:
\begin{align*}
\exclusive(x,y) &= \bigwedge\limits_{\gamma\in\Gamma} 
(\edge_\gamma(x,y) \rightarrow \neg \bigvee\limits_{\delta\in \Gamma\setminus\{\gamma\}} \edge_\delta(x,y)) \enspace.
\end{align*}
Since a string graph is connected, we eventually let 
\begin{align*}
\sstring_{\Gamma} = &\ \exists x. \first(x) \wedge \exists x. \last(x) \\
& \wedge \forall x. (\ssucc_{\le 1}(x) \wedge \ppred_{\le 1}(x)) \\
& \wedge \forall x,y. \,\exclusive(x,y) \\
& \wedge \forall x,y,z. (\first(x) \wedge \last(z) \rightarrow \ppath_\Gamma(x,y) \wedge \ppath_\Gamma(y,z))\enspace. 
\end{align*}
\qed
\end{example}

\subsection{Regular Languages}\label{sec:reg-lang}

Let $A$ be an alphabet. 
A (nondeterministic) \emph{finite-state automaton} over $A$ is a tuple $\cA=(Q,Q_\init,Q_\rf,T)$
where $Q$ is a finite set of states, $Q_\init\subseteq Q$ 
is the set of initial states, 
$Q_\rf\subseteq Q$ is the set of final states, and $T$ is a finite set of transitions. 
Each transition is of the form $(q,a,q')$ with $q,q'\in Q$ and $a\in A$. 
Let $w=a_1\cdots a_n$ be a string over $A$, with $n \in \nat$ and $a_i \in A$ for each $i \in [n]$. 
The string $w$ is \emph{accepted by} $\cA$ 
if there exist $q_1,\dots,q_{n+1}\in Q$
such that $q_1\in Q_\init$, $q_{n+1}\in Q_\rf$, and
$(q_i,a_i,q_{i+1})\in T$ for every $i\in[n]$.
The language $\LL(\cA)$ accepted by $\cA$ consists of all strings over $A$ that 
are accepted by $\cA$. 
A language $L\subseteq A^*$ is \emph{regular} if $L=\LL(\cA)$ for some finite-state automaton $\cA$ over $A$. 

Instead of defining an MSO logic for strings, we follow the equivalent approach of representing every string 
by a string graph (as defined in Section~\ref{sec:graphs-MSO}) and using the MSO logic for graphs.  
For $w\in A^*$ we define $\edgr(w)$ to be the unique string graph for $w$ in $\cG_{\{*\},A}$.
Each node of $\edgr(w)$ is labeled by $*$, and the edges of $\edgr(w)$ are labeled by the symbols 
that occur in $w$. Obviously, $\edgr(w)$ is a unique graph representation of the string $w$, 
cf.~\cite[p.~232]{enghoo01}. So, as a logic for strings over $A$ we will use $\MSOL(\{*\},A)$,
and we view a language $L \subseteq A^*$ to be MSO-definable if the graph language 
$\edgr(L)=\{\edgr(w)\mid \text{\mbox{$w\in L$}}\}$ is $\MSOL(\{*\},A)$-definable. 

The classical BET-theorem for strings can now be formulated as follows, 
see, e.g., \cite[Proposition~9]{enghoo01}.

\begin{proposition}\rm \label{pro:MSO-string-graph} 
A language $L \subseteq A^*$ is regular if and only if 
$\edgr(L)$ is $\MSOL(\{*\},A)$-definable. 
\end{proposition}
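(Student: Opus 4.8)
The plan is to establish the two directions of Proposition~\ref{pro:MSO-string-graph} by essentially transporting the classical Büchi--Elgot--Trakhtenbrot construction through the encoding $w\mapsto\edgr(w)$. Since $\edgr(w)$ is a string graph over $(\{*\},A)$ whose node set is $[\,|w|+1\,]$ and whose edges record the letters of $w$, the graph $\edgr(w)$ is nothing but the standard relational word structure for $w$ (with the single node label $*$ carrying no information), so the equivalence follows from the classical result once we match up the two notions of MSO-definability. I would spell this out as a direct construction in both directions rather than merely citing~\cite{enghoo01}.

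For the direction ``regular $\Rightarrow$ MSO-definable'': given a finite-state automaton $\cA=(Q,Q_\init,Q_\rf,T)$ with $L(\cA)=L$, I would build a closed formula $\phi_\cA\in\MSOL(\{*\},A)$ that first asserts $\sstring_A$ (so that only string graphs are in $L(\phi_\cA)$, using Example~\ref{ex:strings}), and then guesses an accepting run. Concretely, existentially quantify node-set variables $(X_q)_{q\in Q}$ intended to partition $V_g$ so that $x\in X_q$ means ``$\cA$ is in state $q$ just before reading the edge leaving $x$'' (with the last node's membership recording the final state reached). I would then conjoin: the $X_q$ are pairwise disjoint and cover $V_g$; the unique $\first$-node lies in some $X_{q}$ with $q\in Q_\init$; the unique $\last$-node lies in some $X_q$ with $q\in Q_\rf$; and the local transition constraint $\forall x,y.\bigl(\edge_a(x,y)\wedge x\in X_q\rightarrow \bigvee_{(q,a,q')\in T} y\in X_{q'}\bigr)$ for each $a\in A$ and $q\in Q$. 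Standard arguments give $\edgr(L)=L(\phi_\cA)$ within the string graphs, and the $\sstring_A$ conjunct removes all other graphs.

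For the direction ``MSO-definable $\Rightarrow$ regular'': given closed $\phi$ with $\edgr(L)=L(\phi)$, I would invoke the classical BET-theorem for word structures: there is a finite-state automaton recognizing exactly the set of words $w$ such that the relational structure of $w$ satisfies the word-translation of $\phi$. The only thing to check is the dictionary between $\MSOL(\{*\},A)$ on string graphs and MSO on word structures: $\edge_a(x,y)$ corresponds to ``$y$ is the successor of $x$ and position $x$ carries letter $a$'', $\lab_*(x)$ is simply $\true$, and $x\in X$ is unchanged; quantifiers and Booleans match verbatim. Since the map $w\mapsto\edgr(w)$ is a bijection onto the string graphs over $(\{*\},A)$ and preserves satisfaction under this dictionary, regularity of $\{w\mid \edgr(w)\models\phi\}$ follows, and this set is $L$.

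The main obstacle, such as it is, is purely bookkeeping: making sure the formula $\phi$ is applied only to genuine string graphs (handled by relativizing to, or conjoining with, $\sstring_A$, whose correctness is Example~\ref{ex:strings}), and verifying that the one-node-label convention introduces no spurious freedom. There is no deep content here beyond the classical theorem; the work is in setting up the encoding carefully enough that both implications are immediate.
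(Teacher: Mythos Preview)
The paper does not give its own proof of this proposition; it simply cites \cite[Proposition~9]{enghoo01} and, in the paragraph following the statement, explains why the $\edgr$ formulation is equivalent to the more familiar $\ndgr$ formulation (namely, the maps $\edgr(w)\leftrightarrow\ndgr(w)$ are MSO graph transductions, so MSO-definability transfers by backwards translation). Your proposal is therefore more detailed than the paper's treatment, and your direction ``regular $\Rightarrow$ definable'' via run-encoding set variables is correct as sketched.

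Your direction ``definable $\Rightarrow$ regular'', however, has a gap. The dictionary you propose sends $\edge_a(x,y)$ to ``$y$ is the successor of $x$ and position $x$ carries letter $a$'', but $\edgr(w)$ has $|w|+1$ nodes while the classical word structure $\ndgr(w)$ has only $|w|$ nodes; hence first-order and second-order quantifiers range over different domains, and the translation cannot be a mere relabeling of atomic formulas that ``preserves satisfaction''. The empty word already exposes the mismatch: $\edgr(\epsilon)$ has one node, whereas $\ndgr(\epsilon)$ has none (and is not even a graph in the paper's sense). To repair this you can either (a) run the standard induction on the structure of $\phi$ directly for the $\edgr$ encoding, showing at each step that the defined relation is recognizable, or (b) follow the paper's own suggestion and pass through the MSO transductions between $\edgr(w)$ and $\ndgr(w)$, after which the classical $\ndgr$ version of the BET-theorem applies.
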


Intuitively, the nodes of $\edgr(w)$ can be viewed as the ``positions'' of the string $w=a_1\cdots a_n$,
where there is a position between each pair $(a_i,a_{i+1})$ of symbols of $w$, plus one position at the beginning of $w$ and one position at its end.
A finite-state automaton visits these $n+1$ positions from left to right. 
The atomic formula $\edge_a(x,y)$ of $\MSOL(\{*\},A)$ means that the symbol $a$ is 
between positions $x$ and~$y$ (and the atomic formula $\lab_*(x)$ is always true).
There is another unique graph representation of strings that corresponds more closely to the classical proof of the BET-theorem for strings: $\text{nd-gr}(w)$ is the string graph $(V,E,\nlab)\in\cG_{A,\{*\}}$ 
with $V=[n]$, $E=\{(i,*,i+1)\mid i\in[n-1]\}$, and $\nlab(i)=a_i$ for every $i\in[n]$. In this representation 
the string $w$ has a ``position'' at each symbol $a_i$ (so the nodes of $\text{nd-gr}(w)$ are again the 
positions of $w$), a finite-state automaton visits these $n$ positions from left to right (and falls off the 
end of~$w$ in a final state), and the atomic formula $\lab_a(x)$ of $\MSOL(A,\{*\})$ means that 
the symbol $a$ is at position $x$ (and the atomic formula $\edge_*(x,y)$ is true whenever $x$ and $y$ are neighbouring positions). 
Now the BET-theorem says that $L$ is regular if and only if 
$\text{nd-gr}(L)$ is $\MSOL(A,\{*\})$-definable. It is shown in~\cite[Proposition~9]{enghoo01} 
that these two variants of the BET-theorem for strings are equivalent, because the transformations 
from $\edgr(w)$ to $\text{nd-gr}(w)$ and back, are simple MSO graph transductions 
(in the sense of~\cite[Chapter~7]{coueng12},  cf. Section~\ref{sec:msogratra}).

\subsection{Storage Types and $S$-Automata}
\label{sec:storagetypes}

In the literature, 
automata that make use of an auxiliary storage
can test the current storage configuration by means of a predicate, 
and transform it by means of a deterministic instruction. 
General frameworks to define automata with a particular type of storage 
were considered, e.g., in \cite{gin75,sco67,eng86,engvog86}.
We will consider nondeterministic automata only, and hence predicates are not needed:  
they can be viewed as special instructions (see below). 
For more generality, we also allow our instructions to be nondeterministic (as in \cite{gol77,gol79}).
On the other hand, we only consider finitely encoded storage types \cite{gin75}, i.e., storage types that have only finitely many instructions. For pushdown-like storage types it means that the pushdown alphabet must be fixed
(which, as is well known, is not a restriction).

A \emph{storage type} is a tuple $S=(C,c_\init,\Theta,m)$ such that 
$C$ is a set (of \emph{storage configurations}), 
$c_\init\in C$ (the \emph{initial storage configuration}),
$\Theta$ is a finite set (of \emph{instructions}),
and $m$ is the \emph{meaning function} that associates a binary relation 
$m(\theta)\subseteq C\times C$ with every $\theta\in \Theta$. 

For every automaton $\cA$ with storage type $S$, the storage configuration at the start 
of $\cA$'s computations should be $c_\init$.
Every instruction $\theta\in\Theta$ executes the \emph{storage transformation} $m(\theta)$; 
if $(c,c')\in m(\theta)$, then, intuitively, $c$ and $c'$ are the storage configurations before and after 
execution of the instruction $\theta$, respectively.
Note that a test on the storage configuration, i.e., a Boolean function $\tau: C\to \{0,1\}$, can be modeled (as usual) by two ``partial identity'' instructions $\theta_0$ and $\theta_1$ such that $m(\theta_i)=\{(c,c)\mid \tau(c)=i\}$. 

Two storage types $S=(C,c_\init,\Theta,m)$ and $S_*=(C_*,(c_\init)_*,\Theta_*,m_*)$ 
are \emph{isomorphic} if there are bijections between $C$ and $C_*$ 
and between $\Theta$ and $\Theta_*$, such that
$m_*(\theta_*)=\{(c_*,c'_*)\mid (c,c')\in m(\theta)\}$ for every $\theta\in\Theta$, 
where $x_*$ denotes the bijective image of $x$ 
(and thus, in particular, $(c_\init)_*$ is the bijective image of~$c_\init$).

We now turn to the automata that use the storage type $S$.
Let $A$ be an alphabet (of input symbols). For technical reasons we will use a special symbol~$e$ 
(not in $A$) to represent the empty string $\epsilon$. For simplicity, we will denote the set 
$A\cup\{e\}$ by $\Ae$.
Moreover, we denote by $\mu_e$ the string homomorphism from $\Ae$ to $A$
that erases $e$, i.e., $\mu_e(e)=\varepsilon$ and $\mu_e(a)=a$ for every $a\in A$. 

For a storage type $S=(C,c_\init,\Theta,m)$ and an alphabet $A$, an \emph{$S$-automaton over $A$} is a tuple 
$\cA=(Q,Q_\init,Q_\rf,T)$ where $Q$ is a finite set of states,
$Q_\init\subseteq Q$ is the set of initial states, $Q_\rf\subseteq Q$ is the set of final states,
and $T$ is a finite set of transitions. Each transition is of the form $(q,\alpha,\theta,q')$ with $q,q'\in Q$,
$\alpha\in \Ae$, and $\theta\in\Theta$. 

A transition $(q,\alpha,\theta,q')$ will be called an $\alpha$-transition. Intuitively, for $a\in A$, 
an $a$-transition consumes the input symbol $a$, whereas an $e$-transition does not consume input
(and is usually called an $\epsilon$-transition).

An \emph{instantaneous description} of $\cA$ is a triple $(q,w,c)$ such that 
$q\in Q$, \mbox{$w\in A^*$}, and $c\in C$. 
It is \emph{initial} if $q\in Q_\init$ and $c=c_\init$, 
and it is \emph{final} if $q\in Q_\rf$.
For every transition $\tau=(q,\alpha,\theta,q')$ in $T$ we define the binary relation $\vdash^\tau$ 
on the set of instantaneous descriptions: for all $w\in A^*$ and $c,c'\in C$, 
we let $(q,\mu_e(\alpha)w,c)\vdash^\tau (q',w,c')$ if $(c,c')\in m(\theta)$. 
The \emph{computation step relation} of $\cA$ is the binary relation $\vdash\,=\bigcup_{\tau\in T} \vdash^\tau$.
A string $w\in A^*$ is \emph{accepted by} $\cA$ if there exist 
an initial instantaneous description $(q_\init,w,c_\init)$ 
and a final instantaneous description $(q_\rf,\varepsilon,c)$
such that $(q_\init,w,c_\init)\vdash^* (q_\rf,\varepsilon,c)$.
Such a sequence of computation steps is called a \emph{run} of $\cA$ on $w$.
The language $\LL(\cA)$ accepted by~$\cA$ consists of all strings over $A$ 
that are accepted by $\cA$.  
A language $L\subseteq A^*$ is \emph{$S$-recognizable (over $A$)} if $L=\LL(\cA)$ for some $S$-automaton $\cA$ over $A$. The class of \mbox{$S$-recognizable} languages over any alphabet will be denoted by~$S$-REC.
Two storage types $S$ and~$S'$ are \emph{language equivalent} if $S$-REC = $S'$-REC.
Obviously, isomorphic storage types are language equivalent.

\begin{example}\rm \label{ex:stackstring}
We consider the stacks introduced in~\cite{gingrehar67}, in a slight but equivalent variation. Intuitively, a stack is a pushdown over some alphabet $\Omega$, i.e., a nonempty sequence of cells, with the additional ability  of inspecting the contents of all its cells. For this purpose, the stack maintains a ``stack pointer'', which points at the current cell.
In our variation the stack allows the instructions 
$\push(\omega)$, $\pop(\omega)$, $\movedown(\omega)$, and $\moveup(\omega)$ having the following 
meaning: $\push(\omega)$ pushes the symbol $\omega$ on top of the stack, $\pop(\omega)$ pops the top symbol $\omega$, $\movedown(\omega)$ moves the pointer from a cell with content $\omega$ down to the cell below, and $\moveup(\omega)$ moves it from a cell with content $\omega$ up to  the cell above.
As usual, the push- and pop-instructions can only be executed when the stack pointer is at the top of the stack, which remains true after the execution of these instructions.
Figure \ref{fig:new-original-stack-operations} shows examples of these instructions, where we use the stack alphabet $\Omega = \{\alpha,\beta,\gamma\}$.\footnote{Here we use the symbol $\alpha$ as stack symbol and not as arbitrary element of $Ae$.} 
\begin{figure}[t]
  \begin{center}
    \includegraphics[scale=0.4,trim={0 13cm 0cm 5cm},clip]{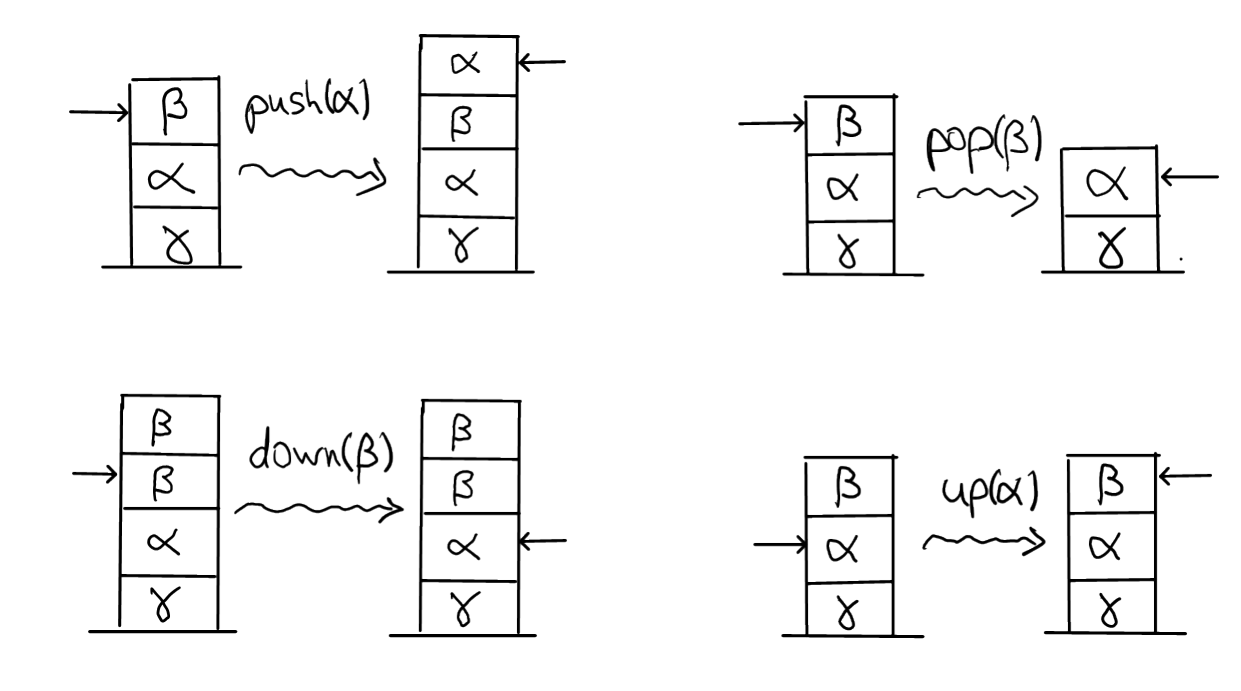}
    \end{center}
\caption{\label{fig:new-original-stack-operations} An illustration of instances of the stack instructions $\push(\alpha)$, $\pop(\beta)$, $\movedown(\beta)$, and $\moveup(\alpha)$.} 
\end{figure}

We will formalize this storage as the storage type $\mathrm{Stack}$. 
To this aim we define the alphabet $\overline{\Omega}= \{\overline{\alpha},\overline{\beta},\overline{\gamma}\}$. 
Then $\mathrm{Stack} = (C,c_\init,\Theta,m)$ 
is the storage defined as follows. 
First, we define stack configurations to be strings over $\Omega\cup\overline{\Omega}$ that contain exactly one occurrence of a symbol in $\overline{\Omega}$, i.e., $C=\Omega^*\,\overline{\Omega}\,\Omega^*$. The last symbol of such a string represents the top of the stack, 
and the unique occurrence of a  barred symbol indicates the position of the stack pointer.
Thus, in Figure~\ref{fig:new-original-stack-operations}, the instruction $\movedown(\beta)$ transforms 
the stack represented by the string $\gamma\alpha\overline{\beta}\beta$ into the stack represented by the string $\gamma\overline{\alpha}\beta\beta$.
Second, $c_\init=\overline{\gamma}$, i.e., the initial stack configuration consists of one cell 
that contains $\gamma$. 
Third, and finally, $\Theta$~consists of all instructions mentioned above, and
$m(\push(\alpha))=\{(w\,\overline{\omega},w\,\omega\,\overline{\alpha}) \mid w\in\Omega^*,\omega\in\Omega\}$, $m(\pop(\alpha))$ is the inverse of $m(\push(\alpha))$, $m(\moveup(\alpha))=\{(w\,\overline{\alpha}\,\omega \,w',w\,\alpha\,\overline{\omega}\,w')\mid w,w'\in\Omega^*,\omega\in\Omega\}$, 
$m(\movedown(\alpha))=\{(w\,\omega\,\overline{\alpha}\,w',w\,\overline{\omega}\,\alpha \,w')\mid w,w'\in\Omega^*,\omega\in\Omega\}$, and similarly for $\beta$ and $\gamma$. 
It is a straightforward exercise to show that the class $\mathrm{Stack}$-REC
of $\mathrm{Stack}$-recognizable languages equals the class of languages accepted by the (one-way, nondeterministic) stack automata of~\cite{gingrehar67}. 

Let $A=\{0,1\}$, and let us consider a Stack-automaton $\cA$ over $A$ that accepts the language $\{ww^\mathrm{R}w\mid w\in A^+\}$, where $w^\mathrm{R}$ is the reverse of the string~$w$. 
We define $\cA= (Q,Q_\init,Q_\rf,T)$ 
with $Q=\{q_1,q_2,q_3,q_4\}$, $Q_\init=\{q_1\}$, and $Q_\rf=\{q_4\}$. 
Let $\sigma:A\to\Omega$
such that $\sigma(0)=\alpha$ and $\sigma(1)=\beta$. 
The set $T$ contains the following transitions, for every $a\in A$. 
\begin{itemize}
\item \noindent
push-phase:

$(q_1,a,\push(\sigma(a)),q_1)$

\item \noindent
movedown-phase:

$(q_1,a,\movedown(\sigma(a)),q_2)$

$(q_2,a,\movedown(\sigma(a)),q_2)$

\item \noindent
moveup-phase:

$(q_2,e,\moveup(\gamma),q_3)$

$(q_3,a,\moveup(\sigma(a)),q_3)$

$(q_3,a,\pop(\sigma(a)),q_4)$
\end{itemize}
The automaton first reads $w$ from the input and pushes its $\sigma$-image symbol by symbol on the stack.
Second, it nondeterministically decides to move down the stack and read $w^\mathrm{R}$ from the input, until it arrives at the bottom symbol~$\gamma$. Third, it uses the $e$-transition to move one cell up, and then moves up the stack reading~$w$. Finally, it nondeterministically decides that it is at the top of the stack, and pops the top symbol while reading it. 
\qed
\end{example}

\begin{example}\rm \label{ex:triv}
The \emph{trivial storage type} (modulo isomorphism) is the storage type 
$\Triv=(C,c_\init,\Theta,m)$ such that $C=\{c\}$,
$c_\init=c$, and $\Theta=\{\theta\}$ with $m(\theta)=\{(c,c)\}$.
It should be clear that a $\Triv$-automaton can be viewed as a finite-state automaton 
that is also allowed to have \mbox{$e$-transitions},
and hence, as is well known, $\Triv$-REC is the class of regular languages.
\qed
\end{example}

Let us define $\cB(S)\subseteq \Theta^*$ to be the set of all
strings $\theta_1\cdots \theta_n$ (with $n\in\nat$ and $\theta_i\in\Theta$ for every $i\in[n]$), 
for which there exist $c_1,\dots,c_{n+1}\in C$ such that 
$c_1=c_\init$ and $(c_i,c_{i+1})\in m(\theta_i)$ for every $i\in[n]$ 
(cf. the  definition of $L_{\cD}$ in \cite[p.~148]{gin75}).
We call such sequences \emph{storage behaviours} or, in particular, $S$-behaviours. 
The next lemma characterizes the $S$-recognizable languages (cf. \cite[Lemma 5.2.3]{gin75}).

\begin{lemma}\rm \label{lm:decomposition2} 
A language $L\subseteq A^*$ 
is $S$-recognizable if and only if there exists a regular language 
\mbox{$R \subseteq (\Ae \times \Theta)^*$} such that
\begin{align*}
L = \{w \in A^* \mid \ & \text{there exist } n\in\nat, 
               \alpha_1,\ldots,\alpha_n\in \Ae, \text{ and } \theta_1,\ldots,\theta_n \in \Theta \\
        & \text{such that } \mu_e(\alpha_1 \cdots \alpha_n)=w, \,\theta_1 \cdots \theta_n \in \cB(S), 
                \text{ and }  \\
        & (\alpha_1,\theta_1) \cdots (\alpha_n,\theta_n) \in R \} \enspace.
\end{align*}
\end{lemma}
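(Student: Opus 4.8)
The plan is to prove both directions of the equivalence by translating between $S$-automata and finite-state automata over the product alphabet $\Ae \times \Theta$, using the fact that a behaviour of $S$ is exactly a sequence of instructions admitting a witnessing sequence of configurations.

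\textbf{From $S$-recognizable to the decomposition.} Suppose $L = L(\cA)$ for an $S$-automaton $\cA = (Q,Q_\init,Q_\rf,T)$. First I would form the finite-state automaton $\cA'$ over the alphabet $\Ae \times \Theta$ obtained by forgetting the storage: its states are $Q$, its initial and final states are $Q_\init$ and $Q_\rf$, and for each transition $(q,\alpha,\theta,q') \in T$ it has the transition $(q,(\alpha,\theta),q')$. Let $R = L(\cA')$; this is regular. I then claim that $R$ witnesses the right-hand side. Indeed, if $w$ is accepted by $\cA$, the accepting run uses transitions $(q_i,\alpha_i,\theta_i,q_{i+1})$ with $h_e(\alpha_1\cdots\alpha_n)=w$, and the instantaneous descriptions give configurations $c_1 = c_\init, c_2, \dots, c_{n+1}$ with $(c_i,c_{i+1})\in m(\theta_i)$, so $\theta_1\cdots\theta_n \in \cB(S)$; and the same transitions, read as transitions of $\cA'$, show $(\alpha_1,\theta_1)\cdots(\alpha_n,\theta_n)\in R$. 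Conversely, given $w$, indices, $\alpha_i$, $\theta_i$ satisfying the three conditions, the membership $(\alpha_1,\theta_1)\cdots(\alpha_n,\theta_n)\in R$ yields a state sequence $q_1\in Q_\init,\dots,q_{n+1}\in Q_\rf$ with $(q_i,\alpha_i,\theta_i,q_{i+1})\in T$, while $\theta_1\cdots\theta_n\in\cB(S)$ supplies configurations $c_1=c_\init,\dots,c_{n+1}$ with $(c_i,c_{i+1})\in m(\theta_i)$; assembling these gives a run $(q_1,w,c_\init)\vdash^*(q_{n+1},\epsilon,c_{n+1})$, so $w\in L(\cA)$.

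\textbf{From the decomposition to $S$-recognizable.} Conversely, suppose a regular $R\subseteq(\Ae\times\Theta)^*$ is given with $L$ of the stated form, and let $\cA_R = (Q,Q_\init,Q_\rf,T_R)$ be a finite-state automaton with $L(\cA_R)=R$. I would define the $S$-automaton $\cA = (Q,Q_\init,Q_\rf,T)$ over $A$ by setting $T = \{(q,\alpha,\theta,q') \mid (q,(\alpha,\theta),q')\in T_R\}$. The same back-and-forth argument as above (now read in the other direction) shows $L(\cA)=L$: an accepting run of $\cA$ on $w$ corresponds to a word of $R$ together with a witnessing configuration sequence, hence to the three conditions on the right-hand side, and vice versa. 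The only mild subtlety is bookkeeping with the empty-string symbol $e$: one must keep $h_e$ straight, noting that $h_e(\alpha_1\cdots\alpha_n)=w$ is precisely the condition relating the input consumed along a run of $\cA$ to the first components of the letters of $\cA_R$'s accepted word, which is exactly how $\vdash^\tau$ was defined via $(q,h_e(\alpha)w,c)\vdash^\tau(q',w,c')$.

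\textbf{Main obstacle.} There is no deep obstacle here; the statement is essentially a repackaging of the definition of $\cB(S)$ and of $S$-automata, paralleling \cite[Lemma 5.2.3]{gin75}. The step requiring the most care is simply verifying that the two simulations are mutually inverse at the level of runs — i.e., that forgetting the storage component and then re-attaching a witnessing configuration sequence recovers exactly the accepting runs of $\cA$ — and, as noted, handling the $e$ versus $\epsilon$ distinction consistently so that input lengths and instruction-sequence lengths are correctly aligned via $h_e$.
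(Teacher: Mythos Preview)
Your proposal is correct and follows essentially the same approach as the paper: construct the finite-state automaton $\cA'$ over $\Ae\times\Theta$ from the $S$-automaton $\cA$ by replacing each transition $(q,\alpha,\theta,q')$ with $(q,(\alpha,\theta),q')$, and observe that this transformation is a bijection between $S$-automata and finite-state automata over $\Ae\times\Theta$, so it handles both directions at once. The paper's proof is terser (it leaves the run-level verification as ``straightforward''), but your more explicit unpacking of the two simulations is exactly what that verification amounts to.
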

\begin{proof}
For every $S$-automaton $\cA = (Q,Q_\init,Q_\rf,T)$ over $A$
we construct the finite-state automaton $\cA' = (Q,Q_\init,Q_\rf,T')$ over $\Ae\times\Theta$ such that 
\[
T'=\{(q,(\alpha,\theta),q')\mid (q,\alpha,\theta,q') \in T\} \enspace. 
\]
It is straightforward to show, using the definitions of $\LL(\cA)$, $\cB(S)$, and $\LL(\cA')$, 
that $L=\LL(\cA)$ and $R=\LL(\cA')$ satisfy the requirements. 
Since the transformation 
of~$\cA$ into~$\cA'$ is a bijection between $S$-automata over $A$ and 
finite-state automata over $\Ae\times\Theta$, this proves the lemma. 
\end{proof}

If in Lemma~\ref{lm:decomposition2} we replace the regular language $R$ by 
a closed formula \mbox{$\phi\in\MSOL(\{*\},\Ae\times\Theta)$}, 
and the expression $(\alpha_1,\theta_1) \cdots (\alpha_n,\theta_n) \in R$
by the expression $\edgr((\alpha_1,\theta_1) \cdots (\alpha_n,\theta_n)) \models \phi$,
as we are allowed to do by Proposition~\ref{pro:MSO-string-graph},
then we essentially obtain the BET-theorem for the storage type $S$
as proved in~\cite{vogdroher16}, where it is generalized to weighted $S$-automata.

It is well known that, under appropriate additional conditions on $S$, the class $S$-REC of $S$-recognizable languages is closed under the full AFL operations \cite[p.~19]{gin75}.
As an example, 
we show, using Lemma~\ref{lm:decomposition2}, that if $S$ has a reset instruction (as in~\cite{gol79}), 
then $S$-REC is closed under concatenation and Kleene star (cf.~\cite[Theorem~3.4]{gol79}). 

Let $S=(C,c_\init,\Theta,m)$ be a storage type. 
A \emph{reset} is an instruction $\theta\in\Theta$ 
such that $m(\theta)=C\times \{c_\init\}$. 

\begin{lemma}\rm \label{lm:reset2}
If $S$ is a storage type that has a reset, then $S$-REC is closed under concatenation and Kleene star.
\end{lemma}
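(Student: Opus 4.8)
The plan is to derive both closure properties from the characterisation of $S$-recognizability in Lemma~\ref{lm:decomposition2}, using a single structural fact about the behaviour set $\cB(S)$. Fix a reset $\theta_r\in\Theta$, so that $m(\theta_r)=C\times\{c_\init\}$. The fact I would isolate first is: for all $k\ge 0$ and $u_1,\dots,u_k\in\Theta^*$,
\[
u_1\,\theta_r\,u_2\,\theta_r\cdots u_k\,\theta_r\in\cB(S)\quad\Longleftrightarrow\quad u_i\in\cB(S)\text{ for every }i\in[k].
\]
The direction ``$\Leftarrow$'' is by concatenating witnessing configuration sequences: a sequence witnessing $u_i\in\cB(S)$ starts at $c_\init$ and reaches some $c_i\in C$, and since $(c_i,c_\init)\in m(\theta_r)$ we may append the reset step and then continue with a witnessing sequence for $u_{i+1}$. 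The direction ``$\Rightarrow$'' is where the precise form of $m(\theta_r)$ matters: in a witnessing sequence $c_1,\dots$ for the left-hand side we have $c_1=c_\init$, so every prefix is again a behaviour (giving $u_1\in\cB(S)$), and immediately after each reset step the configuration is forced to be $c_\init$ because $m(\theta_r)=C\times\{c_\init\}$, so each ``contiguous factor'' $u_i$ of the sequence begins at $c_\init$ and is therefore itself a behaviour. Here I use only the elementary observations that every prefix of an $S$-behaviour is an $S$-behaviour and that every contiguous factor of an $S$-behaviour that starts at $c_\init$ is an $S$-behaviour; the case $k=0$ records $\epsilon\in\cB(S)$.

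For concatenation, let $L_1,L_2\subseteq A^*$ be $S$-recognizable; by Lemma~\ref{lm:decomposition2} choose regular languages $R_1,R_2\subseteq(\Ae\times\Theta)^*$ such that, writing $\pr_1$ and $\pr_2$ for the maps sending $(\alpha_1,\theta_1)\cdots(\alpha_n,\theta_n)$ to $h_e(\alpha_1\cdots\alpha_n)$ and to $\theta_1\cdots\theta_n$ respectively, we have $L_j=\{\pr_1(u)\mid u\in R_j,\ \pr_2(u)\in\cB(S)\}$ for $j\in\{1,2\}$. I would take $R=R_1\cdot\{(e,\theta_r)\}\cdot R_2$, which is regular because the regular languages over $\Ae\times\Theta$ are closed under concatenation, and check that $L_1L_2=\{\pr_1(u)\mid u\in R,\ \pr_2(u)\in\cB(S)\}$; Lemma~\ref{lm:decomposition2} then yields $S$-recognizability of $L_1L_2$. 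For ``$\subseteq$'', given $w_j\in L_j$ pick $u_j\in R_j$ with $\pr_1(u_j)=w_j$ and $\pr_2(u_j)\in\cB(S)$ and put $u=u_1(e,\theta_r)u_2\in R$; then $\pr_1(u)=w_1\,h_e(e)\,w_2=w_1w_2$ and $\pr_2(u)=\pr_2(u_1)\,\theta_r\,\pr_2(u_2)\in\cB(S)$ by the ``$\Leftarrow$'' direction above. For ``$\supseteq$'', by definition of $R$ any $u\in R$ is of the form $u_1(e,\theta_r)u_2$ with $u_j\in R_j$; then $\pr_2(u_1)\,\theta_r\,\pr_2(u_2)=\pr_2(u)\in\cB(S)$, so by ``$\Rightarrow$'' each $\pr_2(u_j)\in\cB(S)$, whence $\pr_1(u_j)\in L_j$ and $\pr_1(u)=\pr_1(u_1)\pr_1(u_2)\in L_1L_2$.

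For the Kleene star, let $L\subseteq A^*$ be $S$-recognizable with regular $R\subseteq(\Ae\times\Theta)^*$ as in Lemma~\ref{lm:decomposition2}, and set $R_0=(R\cdot\{(e,\theta_r)\})^*$, which is again regular. The same argument, now applied to an arbitrary factorisation $u=u_1(e,\theta_r)\cdots u_k(e,\theta_r)$ with $k\ge 0$ and $u_i\in R$, shows $L^*=\{\pr_1(u)\mid u\in R_0,\ \pr_2(u)\in\cB(S)\}$: by the displayed equivalence $\pr_2(u)\in\cB(S)$ holds exactly when $\pr_2(u_i)\in\cB(S)$ for all $i$, and correspondingly $\pr_1(u)=\pr_1(u_1)\cdots\pr_1(u_k)\in L^*$ exactly when each $\pr_1(u_i)\in L$; the empty factorisation $k=0$ gives $u=\epsilon$ with $\pr_2(\epsilon)=\epsilon\in\cB(S)$ and $\pr_1(\epsilon)=\epsilon$, so $\epsilon\in L^*$ is covered. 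Applying Lemma~\ref{lm:decomposition2} once more gives $S$-recognizability of $L^*$.

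I do not expect a genuine obstacle. The only point needing care is the ``$\Rightarrow$'' half of the displayed equivalence for $\cB(S)$, where one must invoke that $m(\theta_r)$ is \emph{all} of $C\times\{c_\init\}$ (so that the post-reset configuration is literally $c_\init$, not merely ``some configuration from which the rest could start''), and where one should state explicitly the two elementary facts about prefixes and $c_\init$-anchored factors of $S$-behaviours. Everything else is routine bookkeeping with Lemma~\ref{lm:decomposition2} and the closure of the regular languages under concatenation and Kleene star.
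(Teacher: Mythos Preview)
Your proposal is correct and follows essentially the same approach as the paper: both reduce to Lemma~\ref{lm:decomposition2}, insert the pair $(e,\theta_r)$ between the regular languages, and use the key observation that a reset splits a behaviour into independent behaviours. The only cosmetic differences are that the paper states the splitting fact for two factors (without a trailing reset) rather than $k$ factors, and uses $(R(e,\chi))^*R\cup\{\epsilon\}$ for the Kleene star where you use the equivalent $(R(e,\theta_r))^*$.
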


\begin{proof} 
By Lemma~\ref{lm:decomposition2}, every $S$-recognizable language $L$ 
can be ``defined'' by a regular language $R \subseteq (\Ae \times \Theta)^*$. 
For $i\in\{1,2\}$, let $L_i\subseteq A^*$ be defined by the regular language $R_i$.
Let $\chi$ be a reset.
Now let $L$ be the language defined by the regular language $R_1(e,\chi)R_2$.
We observe that, since $\chi$ is a reset, 
$\theta_1 \cdots \theta_n\chi\eta_1\cdots\eta_m$ 
is in $\cB(S)$ if and only if $\theta_1 \cdots \theta_n$ and 
$\eta_1\cdots\eta_m$ are in~$\cB(S)$.
By Lemma~\ref{lm:decomposition2} (applied to $L$), $w\in L$ if and only if there exist $\alpha_1,\ldots,\alpha_n,\beta_1,\dots,\beta_m\in \Ae$, and $\theta_1,\ldots,\theta_n,\eta_1,\dots,\eta_m \in \Theta$ such that 
$(\alpha_1,\theta_1) \cdots (\alpha_n,\theta_n) \in R_1$,
$(\beta_1,\eta_1) \cdots (\beta_m,\eta_n) \in R_2$,
$\mu_e(\alpha_1 \cdots \alpha_ne\beta_1\cdots\beta_m)=w$, and 
$\theta_1 \cdots \theta_n\chi\eta_1\cdots\eta_m\in\cB(S)$.
%
And, again by Lemma~\ref{lm:decomposition2} (applied to $L_1$ and $L_2$) and by the above observation,
that is equivalent to the existence of $w_1\in L_1$ and $w_2\in L_2$ such that $w=w_1w_2$.
Thus, $L$ is the concatenation $L_1 L_2$ of $L_1$ and $L_2$. 

Similarly, if $L\subseteq A^*$ is defined by $R$, 
then $L^*$ is defined by the regular language $(R(e,\chi))^*$.
\end{proof}

By standard techniques it can be shown that if $S$ has an identity, i.e., 
an instruction $\theta$ such that $m(\theta)=\{(c,c)\mid c\in C\}$, then $S$-REC is a full trio, 
i.e., closed under finite-state transductions. 
It is even a full principal trio, generated by the language $\cB(S)$ 
(cf. again \cite[Lemma 5.2.3]{gin75}).
We finally mention that $S$-REC is closed under union for every storage type $S$.

\section{MSO Graph Storage Types}

As stated in the Introduction, our aim in this paper is to define storage types~$S$
for which we can prove a BET-theorem for $S$-recognizable languages such that 
it satisfies the mentioned scheme and every set of graphs $\cG[S,A]$ is MSO-definable. For this, 
we will consider storage types $S=(C,c_\init,\Theta,m)$ such that $C$ is an MSO-definable set of graphs and,
moreover, $m(\theta)$ is represented by an MSO-definable set of graphs for every $\theta\in\Theta$. 
Since $m(\theta)\subseteq C\times C$, i.e., $m(\theta)$~is a set of ordered pairs of graphs, 
this raises the question how to represent a pair of graphs as one single graph, 
and how to define a graph transformation by an MSO-logic formula for such graphs.

\subsection{Pair Graphs}
\label{sec:pairgraphs}

Let $\Sigma$ and $\Gamma$ be alphabets of node labels and edge labels, respectively, 
as in Section~\ref{sec:graphs-MSO}. To model ordered pairs of graphs in $\cG_{\Sigma,\Gamma}$,
we use a special edge label $\nu$ that is not in $\Gamma$. 

A \emph{pair graph} over $(\Sigma,\Gamma)$ is a graph $h$ over $(\Sigma,\Gamma\cup\{\nu\})$ 
for which there is an ordered partition $(V_1,V_2)$ of $V_h$ such that, for every $u,v\in V_h$,
$(u,\nu,v) \in E_h$ if and only if $u\in V_1$ and $v\in V_2$.
The set of all pair graphs over $(\Sigma,\Gamma)$ is denoted by $\cP\cG_{\Sigma,\Gamma}$;
note that this notation does not mention $\nu$.

For a pair graph $h$ as above, 
we call $V_1$ and $V_2$ the \emph{components} of $h$.
Obviously, the above requirements uniquely determine the ordered partition $(V_1,V_2)$.
Thus, we define the ordered pair of graphs represented by $h$ as follows: 
\[
\pair(h)=(h[V_1],h[V_2]) \in \cG_{\Sigma,\Gamma}\times \cG_{\Sigma,\Gamma} \enspace,
\] 
and for a set $H$ of pair graphs we define
\[
\rel(H)=\pair(H)=\{\pair(h)\mid h\in H\}\subseteq \cG_{\Sigma,\Gamma}\times \cG_{\Sigma,\Gamma} \enspace.
\]

Clearly, for given graphs $g_1,g_2\in \cG_{\Sigma,\Gamma}$ 
there is at least one pair graph $h$ 
in~$\cP\cG_{\Sigma,\Gamma}$ such that $\pair(h)=(g_1,g_2)$,
but in general there are many such pair graphs, because there is no restriction on the $\Gamma$-edges 
between the components $V_1$ and $V_2$ of~$h$.
These ``intermediate'' edges can be used to model the (eventual) similarity between $g_1$ and $g_2$,
and allow the description of this similarity by means of an MSO-logic formula to be satisfied by $h$.

A relation $R\subseteq \cG_{\Sigma,\Gamma}\times \cG_{\Sigma,\Gamma}$ is \emph{MSO-expressible} 
if there are an alphabet $\Delta$ and an MSO-definable set of pair graphs 
$H\subseteq \cP\cG_{\Sigma,\Gamma\cup\Delta}$ such that $\rel(H)=R$. 
The alphabet $\Delta$ allows the intermediate edges to carry arbitrary finite information, 
whenever that is necessary. 
We will prove in Section~\ref{sec:msogratra} that all MSO graph transductions 
(in the sense of~\cite[Chapter~7]{coueng12}) are MSO-expressible, using the ``duplicate names'' of those transductions as elements of $\Delta$.
In fact, the notion of MSO-expressibility is inspired by the ``origin semantics'' of 
MSO graph transductions (see, e.g., \cite{boj14,bojdavguipen17,bosmuspenpup18}; 
pair graphs generalize the ``origin graphs'' of \cite{bojdavguipen17}).

\begin{example}\rm
\label{ex:pairgraphs}
As a very simple example, let $\Sigma=\{*\}$ and $\Gamma=\{\gamma\}$, and let 
$C=\{\edgr(\gamma^n)\mid n\in\nat\}$ be the set of all string graphs over $(\Sigma,\Gamma)$.
We show that the identity on $C$ is MSO-expressible by a formula~$\phi$
such that $\LL(\phi)\subseteq \cP\cG_{\Sigma,\Gamma}$ (thus, $\Delta=\emptyset$). 
The set $H=\LL(\varphi)$ consists of all graphs $h$ over $(\Sigma,\Gamma\cup\{\nu\})$
such that 
\begin{figure}[t]
  \begin{center}
    \includegraphics[scale=0.4,trim={2cm 4cm 10cm 2cm},clip]{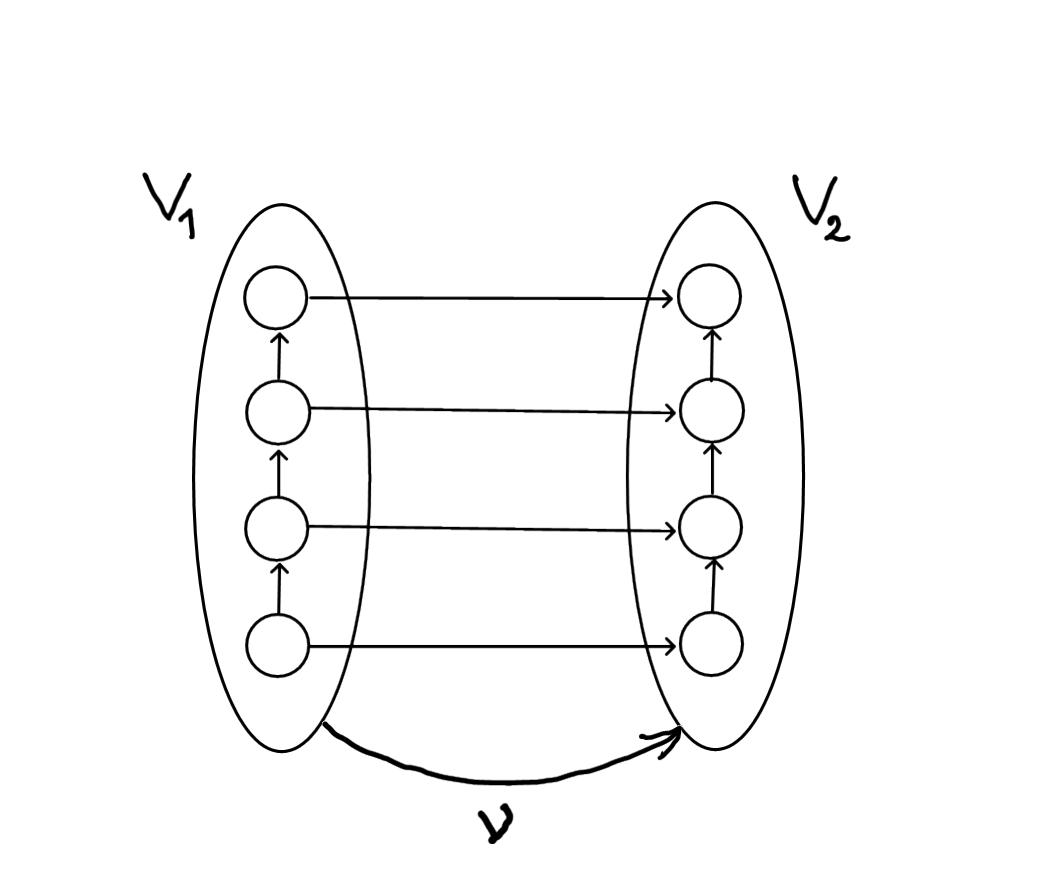}
    \end{center}
\caption{\label{fig:pair-graph} A pair graph $h\in \LL(\phi)$ such that 
$\pair(h)=(\edgr(\gamma^3),\edgr(\gamma^3))$.
All nodes have label $*$, and all straight edges have label $\gamma$.
The components $V_1$~and $V_2$ of $h$ are represented by ovals.
The $\nu$-edge from the first to the second oval represents all sixteen $\nu$-edges 
from the nodes of $V_1$ to the nodes of $V_2$.} 
\end{figure}
$V_h=V_1\cup V_2$ where $V_1=\{u_1,\dots,u_{n+1}\}$ 
and $V_2=\{v_1,\dots,v_{n+1}\}$ for some $n\in\nat$, 
and $E_h$ consists of 
\begin{compactitem}
\item the edges $(u_i,\gamma,u_{i+1})$ and $(v_i,\gamma,v_{i+1})$ for every $i\in[n]$,
which turn $V_1$ and $V_2$ into string graphs, 
\item the intermediate edges $(u_i,\gamma,v_i)$ for every $i\in[n+1]$, and 
\item the edges $(u_i,\nu,v_j)$ for every $i,j\in[n+1]$, 
which turn $h$ into a pair graph with the ordered partition $(V_1,V_2)$.
\end{compactitem}
It should be clear that $\pair(h)=(\edgr(\gamma^n),\edgr(\gamma^n))$, and hence 
$\rel(H)= \{(g,g)\mid g\in C\}$.
An example of a pair graph in $H$ is shown in Figure~\ref{fig:pair-graph}.

To show that $H$ is MSO-definable, we now describe the graphs $h\in H$ in such a way 
that the existence of~$\varphi$ should be clear to the reader. 
First, the set of nodes of $h$ is partitioned into two nonempty sets $X_1$ and~$X_2$ 
(node-set variables that correspond to $V_1$ and $V_2$ above), such that $h$ is a pair graph 
with ordered partition $(X_1,X_2)$. This part of $\phi$ can be obtained directly 
from the definition of pair graph. Second, for each $i\in\{1,2\}$, 
the subgraph $h[X_i]$ of $h$ induced by $X_i$
should satisfy the formula $\psi=\sstring_\Gamma$ of Example~\ref{ex:strings}; this can be expressed by
the relativization $\psi|_{X_i}$ of $\psi$ to $X_i$.
Third, the intermediate edges form a bijection between $X_1$ and $X_2$.
Moreover, that bijection should be a graph isomorphism  
between the induced subgraphs $h[X_1]$ and $h[X_2]$, i.e., 
for all $u,u'\in X_1$ and $v,v'\in X_2$, 
if $(u,\gamma,u'),(u,\gamma,v),(u',\gamma,v')\in E_h$, then $(v,\gamma,v')\in E_h$.
This ends the description of the graphs $h\in H$. 

We note that the intermediate edges $(u_i,\gamma,v_i)$ between the two components of $h$ are essential. 
If we drop them from each $h\in H$, then the resulting set of pair graphs is not MSO-definable. 
\qed
\end{example}

\subsection{Graph Storage Types}
\label{sec:graphstorage}

As observed at the beginning of this section,
we are interested in storage types $(C,c_\init,\Theta,m)$ such that $C$~is an MSO-definable set of graphs 
and, for every $\theta\in\Theta$, $m(\theta)$ is MSO-expressible, i.e., it is the binary relation on $C$ 
determined by an MSO-definable set of pair graphs. 

A storage type $S=(C,c_\init,\Theta,m)$ is an \emph{MSO graph storage type over $(\Sigma,\Gamma)$} 
if 
\begin{itemize}
\item $C=\LL(\phi_\rc)$ for some closed formula $\phi_\rc$ in $\MSOL(\Sigma,\Gamma)$, 

\item $\Theta$ is an exclusive set of closed formulas in $\MSOL(\Sigma,\Gamma\cup\{\nu\})$ such that $\LL(\theta)\subseteq \cP\cG_{\Sigma,\Gamma}$
for every $\theta\in\Theta$, and

\item $m(\theta) = \rel(\LL(\theta))$ for every $\theta\in\Theta$.
\end{itemize}
Note that $\Theta$ is required to be exclusive, which means that $\LL(\theta)$ and $\LL(\theta')$ are disjoint for distinct formulas $\theta$ and $\theta'$ in $\Theta$. 
Note also that for every formula $\theta\in\Theta$,
if $h\in \LL(\theta)\subseteq \cP\cG_{\Sigma,\Gamma}$ and $\pair(h)=(g_1,g_2)$, then 
intuitively, $g_1$ and $g_2$ are the storage configurations before and after 
execution of the instruction $\theta$.

From now on we will specify an MSO graph storage type $S=(C,c_\init,\Theta,m)$ as
$S=(\phi_\rc,g_\init,\Theta)$, such that $C=\LL(\phi_\rc)$, $c_\init = g_\init$, 
and $m$ is fixed by the above requirement.
An example of an MSO graph storage type will be given below in Example~\ref{ex:stack}.

By definition, the storage transformations of an MSO graph storage type~$S$ over $(\Sigma,\Gamma)$ are 
MSO-expressible with $\Delta=\emptyset$. However, it may well be that $C$ is an MSO-definable subset of 
$\cG_{\Sigma,\Gamma'}$ for some subset $\Gamma'$ of $\Gamma$, in which case the MSO-expressible storage transformations of $S$ may employ $\Delta=\Gamma\setminus\Gamma'$. Thus, arbitrary MSO-expressible relations can be used as storage transformations of an MSO graph storage type.
Similarly, the requirement that $\Theta$ is exclusive, is not restrictive (with respect to isomorphism of storage types).
If an instruction $\theta_0\in \Theta$ overlaps with another instruction $\theta_1\in \Theta$,
i.e., $\LL(\theta_0)\cap \LL(\theta_1)\neq\emptyset$, then we can take two edge labels $d_0$ and $d_1$,
add them to $\Gamma$, change every pair graph in $\LL(\theta_i)$ by adding $d_i$-edges 
from all nodes of its first component to all nodes of its second component, and change 
$\theta_i$ into  $\theta_i \wedge \forall x,y. ((\edge_{d_i}(x,y) \leftrightarrow \edge_\nu(x,y)) 
\wedge \neg \edge_{d_{1-i}}(x,y))$. Then we obtain an 
isomorphic MSO graph storage type over $(\Sigma,\Gamma\cup\{d_0,d_1\})$ that satisfies the requirement.
BET-theorems can be shown for MSO graph storage types without this requirement (cf. the first paragraph of Section~\ref{sec:MSO-expres}), but we have adopted it for technical convenience.
Altogether, the next observation can easily be shown.

\begin{observation}\label{ob:isom}\rm 
If $S=(C,c_\init,\Theta,m)$ is a storage type
such that $C$ is an MSO-definable set of graphs and,  
for every $\theta\in\Theta$, the storage transformation $m(\theta)$ is MSO-expressible, 
then $S$ is isomorphic to an MSO graph storage type.
\end{observation}

The closure properties of the class $S$-REC of $S$-recognizable languages, 
discussed in Section~\ref{sec:storagetypes}, also hold, of course, 
for every MSO graph storage type $S=(\phi_\rc,g_\init,\Theta)$ over $(\Sigma,\Gamma)$.
Note that we can always (if we so wish) enrich~$\Theta$ 
with a reset, as follows. 
For a graph $g\in \LL(\phi_\rc)$, let $h$ be the unique pair graph 
such that $\pair(h)=(g,g_\init)$ 
and there are no $\Gamma$-edges between the components of~$h$. Obviously, the set of all such graphs $h$ 
is MSO-definable by a formula~$\theta$, which is then a reset. In the case where 
$\Theta\cup\{\theta\}$ is not exclusive, we can add (dummy) $\Gamma$-edges between the components of $h$  
with a new label (which, possibly, has to be added to $\Gamma$). 
Similarly we can add an identity instruction to~$\Theta$, cf. Example~\ref{ex:pairgraphs}.

\begin{example}\label{ex:stack}
\rm
We define an MSO graph storage type $\mathrm{STACK} = (\phi_\rc,g_\init,\Psi)$ that is isomorphic to the storage type $\mathrm{Stack}=(C,c_\init,\Theta,m)$ of Example~\ref{ex:stackstring}.
Let $\Omega = \{\alpha,\beta,\gamma\}$ and $\overline{\Omega}= \{\overline{\alpha},\overline{\beta},\overline{\gamma}\}$, as in Example~\ref{ex:stackstring}.
To model stacks and stack transformations as graphs, 
we define the alphabet $\Sigma = \Omega \cup \overline{\Omega}$ of node labels, 
and the alphabet $\Gamma = \{*,d\}$ of edge labels.  
The symbol $d$ will be used to label the intermediate edges of pair graphs; 
it is not really needed, but will be useful later. 
\begin{figure}[t]
  \begin{center}
    \includegraphics[scale=0.4,trim={0 18cm 3cm 0},clip]{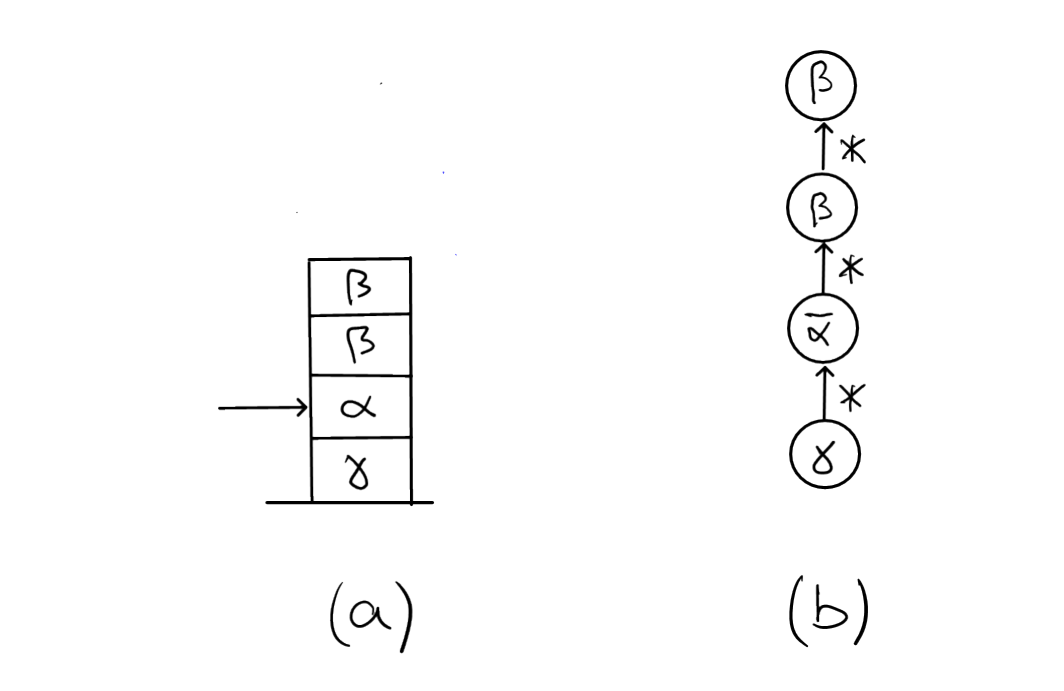}
    \end{center}
\caption{\label{fig:stack-graph} (a) A stack configuration and (b) its representation as a graph over $(\Sigma,\{*\})$.}  
\end{figure}
First, each stack $w\in C=\Omega^*\,\overline{\Omega}\,\Omega^*$ is represented by the string graph
$\text{nd-gr}(w)\in\cG_{\Sigma,\{*\}}$, as defined in Section~\ref{sec:reg-lang}.
Figure~\ref{fig:stack-graph} shows an example of a stack and its representation as a graph in $\cG_{\Sigma,\Gamma}$ (with $w=\gamma\,\overline{\alpha}\,\beta\,\beta$).

The closed formula $\varphi_\rc \in \MSOL(\Sigma,\{*,d\})$ 
such that $\LL(\varphi_\rc)$ is the set of all possible stack configurations, is defined by
\begin{align*}
\varphi_\rc &=  \sstring_{\Gamma} \wedge \forall x,y.(\neg\,\edge_d(x,y)) \wedge \unique \\
\unique &= (\exists x.\lab_{\,\overline{\Omega}\,}(x)) \wedge  \forall x,y.( \lab_{\,\overline{\Omega}\,}(x) \wedge \lab_{\,\overline{\Omega}\,}(y) \to (x=y))\\
\lab_{\,\overline{\Omega}\,}(x) &= \bigvee_{\omega \in \Omega}\lab_{\,\overline{\omega}\,}(x)
\end{align*}
where $\sstring_{\Gamma}$ 
is the formula of Example~\ref{ex:strings}. 

Second, $g_\init=\ndgr(\overline{\gamma})$. 
Third, and finally, the set $\Psi$ of STACK instructions consists of all formulas 
$\psi_\theta \in \MSOL(\Sigma,\{*,d,\nu\})$ that model a stack instruction $\theta\in\Theta$. 
We will show three examples for $\theta$:
$\push(\alpha)$, $\pop(\alpha)$, and $\moveup(\beta)$.
The formulas for the other stack instructions in $\Theta$ can be obtained in a similar way.

\underline{$\theta=\push(\alpha)$:} 
We describe the formula $\psi_\theta$ similarly to Example~\ref{ex:pairgraphs}.
The set $\LL(\psi_\theta)$ consists of all graphs $h=(V,E,\nlab)$
such that (see Figure~\ref{fig:push-alpha-beta}(b) for an example)
\begin{figure}[t]
  \begin{center}
    \includegraphics[scale=0.4,trim={0 10cm 1cm 6cm},clip]{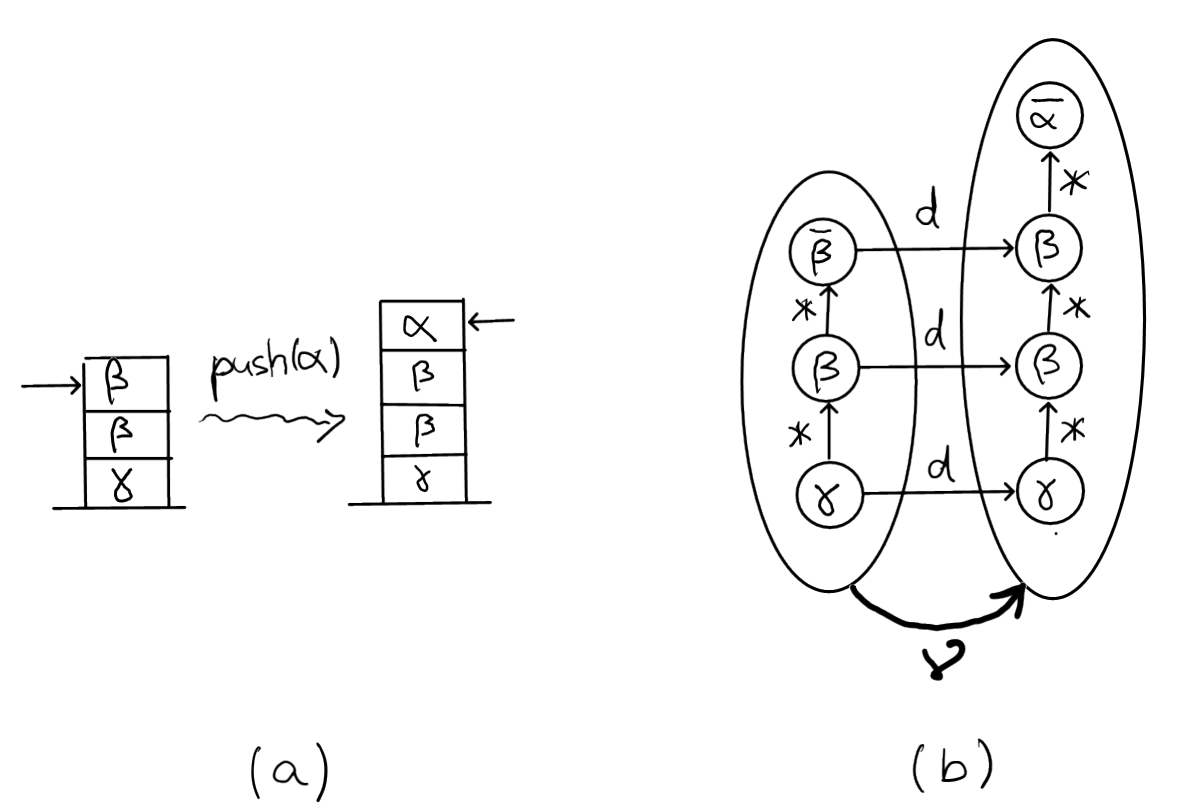}
    \end{center}
\caption{\label{fig:push-alpha-beta} (a) An instance of the execution of
the stack instruction $\theta=\push(\alpha)$.  (b) A pair graph $h$ in $\LL(\psi_\theta)$ that realizes (a).}
\end{figure}
\begin{itemize}
\item[$(1)$] $V=V_1\cup V_2$ where $V_1=\{u_1,\dots,u_n\}$ 
  and $V_2=\{v_1,\dots,v_n,v_{n+1}\}$ for some $n\geq 1$;

\item[$(2)$] $E$ consists of
\begin{itemize}
\item the edges $(u_i,*,u_{i+1})$ and $(v_j,*,v_{j+1})$ for every $i\in[n-1]$ and $j\in[n]$,
which turn $V_1$ and $V_2$ into string graphs, 
\item the intermediate edges $(u_i,d,v_i)$ for every $i\in[n]$, and 
\item the edges $(u_i,\nu,v_j)$ for every $i\in[n]$ and $j\in[n+1]$,
which turn $h$ into a pair graph with the ordered partition $(V_1,V_2)$;
\end{itemize}
\item[$(3)$] the node label function $\nlab$ satisfies
\begin{itemize}
\item $\nlab(v_i)\in\Omega$ for every $i\in[n]$, 
\item $\nlab(v_{n+1})=\overline{\alpha}$, 
\item $\nlab(u_i)=\nlab(v_i)$ for every $i\in[n-1]$, and
\item $\nlab(u_n)=\overline{\nlab(v_n)}$.
\end{itemize}
\end{itemize}
Intuitively, $h[V_1]$ and $h[V_2]$ are the stacks 
before and after execution of the push-instruction.
The $d$-edge from $u_i$ to $v_i$ indicates that $v_i$ is a copy (or duplicate) of $u_i$.

To show that this set of graphs is MSO-definable, we now describe the graphs $h\in \LL(\psi_\theta)$ 
in a suggestive way, as in Example~\ref{ex:pairgraphs}.
First, the set $V$ of nodes of $h$ is partitioned into two nonempty sets $X_1$ and $X_2$, 
such that $h$ is a pair graph with ordered partition $(X_1,X_2)$. 
Second, for each $i\in\{1,2\}$, the induced subgraph $h[X_i]$ should satisfy the formula $\varphi_\rc$,
i.e., $h$ satisfies $(\phi_\rc)|_{X_i}$.
Third, the $d$-edges form a bijection from $X_1$ to $X_2\setminus\{t_2\}$ 
where $t_2$ is the top of $X_2$, i.e., the unique element of $X_2$ that has no outgoing $*$-edge. 
Moreover, that bijection should be a graph isomorphism between $h[X_1]$ and $h[X_2\setminus\{t_2\}]$
(disregarding node labels), i.e., for all $u,u'\in X_1$ and $v,v'\in X_2$, 
if $(u,*,u'),(u,d,v),(u',d,v')\in E$, then $(v,*,v')\in E$. Fourth and finally,
the requirements in (3) above should be satisfied by $\nlab$. Let $t_1$ be the top of $X_1$.
If $(u,d,v)\in E$ and $u\neq t_1$, then $\nlab(u)=\nlab(v)\in\Omega$.
If $(t_1,d,v)\in E$, then $\nlab(t_1)=\overline{\nlab(v)}$.
And $\nlab(t_2)=\overline{\alpha}$.  This ends the description of the graphs $h\in \LL(\psi_\theta)$. 

\underline{$\theta=\pop(\alpha)$:}
The pair graphs in $\LL(\psi_\theta)$ are obtained from those in $\LL(\psi_{\push(\alpha)})$,
as described in the previous example, by inverting all $\nu$-edges and $d$-edges
(see Figure~\ref{fig:pop-alpha}(b) for an example). 
Thus, they have the ordered partition $(V_2,V_1)$. 
The construction of the formula $\psi_{\pop(\alpha)}$ is symmetric to 
the construction of the formula $\psi_{\push(\alpha)}$.

\begin{figure}[t]
  \begin{center}
     \includegraphics[scale=0.4,trim={0 10cm 0 8cm},clip]{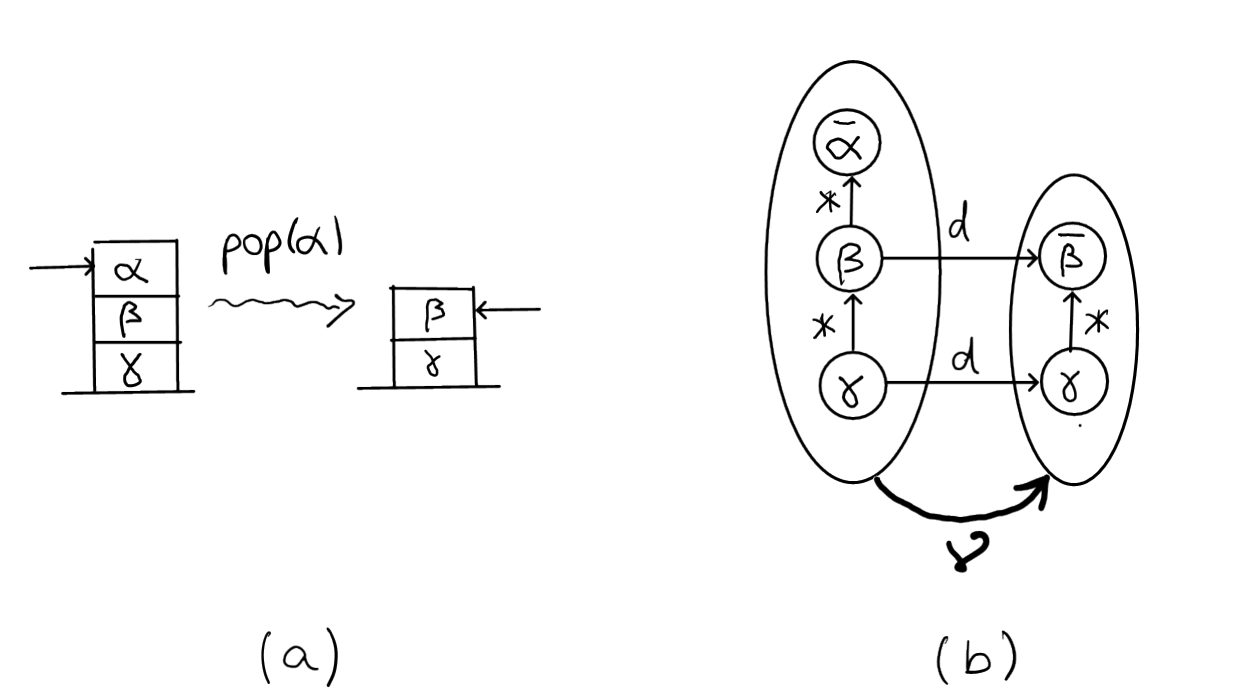} 
\end{center}
\caption{\label{fig:pop-alpha} (a) An instance of the execution of the stack instruction $\theta=\pop(\alpha)$. (b) A graph $h \in \LL(\psi_\theta)$ that realizes (a).} 
\end{figure}

\underline{$\theta=\moveup(\beta)$:} 
The set $\LL(\psi_\theta)$ consists of all graphs $h=(V,E,\nlab)$
such that (see Figure~\ref{fig:moveup-beta}(b) for an example)

\begin{figure}[t]
  \begin{center}
    \includegraphics[scale=0.4,trim={0 10cm 0 8cm},clip]{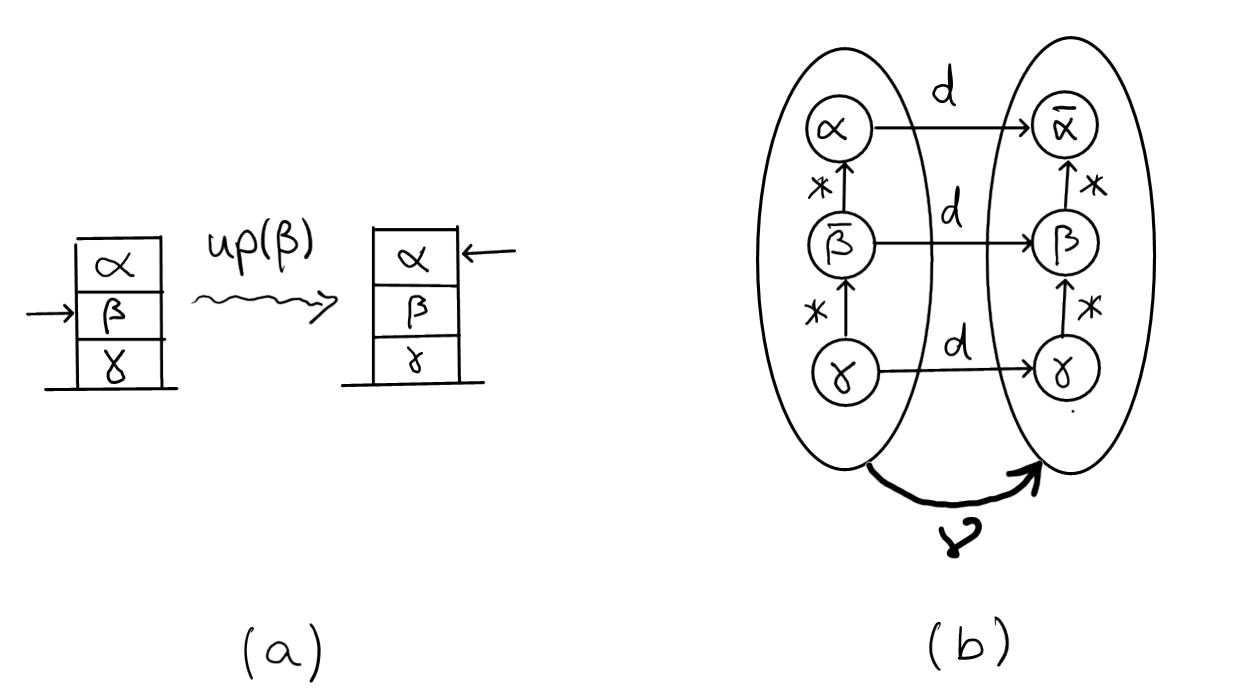}
    \end{center}
\caption{\label{fig:moveup-beta} 
(a) An instance of the execution of the stack instruction $\theta=\moveup(\beta)$. (b) A graph $h\in \LL(\psi_\theta)$ that realizes (a).} 
\end{figure}

\begin{itemize}
\item[$(1)$] $V=V_1\cup V_2$ where $V_1=\{u_1,\dots,u_n\}$ 
  and $V_2=\{v_1,\dots,v_n\}$ for some $n\geq 2$;
\item[$(2)$] $E$ consists of 
\begin{itemize} 
\item the edges $(u_i,*,u_{i+1})$ and $(v_i,*,v_{i+1})$ for every $i\in[n-1]$,
which turn $V_1$ and $V_2$ into string graphs, 
\item the intermediate edges $(u_i,d,v_i)$ for every $i\in[n]$, and 
\item the edges $(u_i,\nu,v_j)$ for every $i,j\in[n]$,
which turn $h$ into a pair graph with the ordered partition $(V_1,V_2)$;
\end{itemize}
\item[$(3)$] the node label function $\nlab$ satisfies 
the following requirements for some $i \in [n-1]$:
\begin{itemize}
\item $\nlab(u_i)=\overline{\beta}$,
\item $\nlab(u_j)\in\Omega$ for every $j \in [n]\setminus \{i\}$,
\item $\nlab(v_{i+1})=\overline{\nlab(u_{i+1})}$, 
\item $\nlab(v_i)=\beta$, and
\item $\nlab(v_j)=\nlab(u_j)$ for every $j \in [n]\setminus \{i,i+1\}$.
\end{itemize}
\end{itemize}
We now describe the graphs $h\in \LL(\psi_\theta)$ in a suggestive way. 
The first two steps are the same as for $\theta = \push(\beta)$.
Third, the $d$-edges form a bijection from~$X_1$ to~$X_2$. 
Moreover, that bijection should be a graph isomorphism between $h[X_1]$ and $h[X_2]$
(disregarding node labels). Finally,
the requirements in~(3) above should be satisfied by $\nlab$. There should exist an element $p_1$ of~$X_1$ with label~$\overline{\beta}$, and an element~$p_1'$ of $X_1$ such that $(p_1,*,p_1') \in E$. 
Let $(p_1,d,p_2) \in E$ and $(p_1',d,p_2') \in E$. 
Then $\nlab(p_2')= \overline{\nlab(p_1')}$ and $\nlab(p_2)=\beta$. 
\qed
\end{example}

\begin{example}\rm \label{ex:MSO-graph-storage-triv} 
The storage type $\Triv$ from Example~\ref{ex:triv} is isomorphic to
the MSO graph storage type 
$\TRIV= (\phi_\rc,g_\init,\{\theta\})$ over $(\{*\},\emptyset)$ such that 
$\LL(\phi_\rc)=\{g_\init\}$ where $g_\init$ is the graph with one $*$-labeled node (and no edges), and 
\mbox{$\LL(\theta)=\{h\}$} where $h$ is the (pair) graph 
with two $*$-labeled nodes and a $\nu$-labeled edge from one node to the other. 
\qed
\end{example}

\section{Graph Automata}

Let $S=(\phi_\rc,g_\init,\Theta)$ be an MSO graph storage type over $(\Sigma,\Gamma)$ 
and let $\cA=(Q,Q_\init,Q_\rf,T)$ be an $S$-automaton over the input alphabet $A$. 
Recall from Section~\ref{sec:storagetypes} that $\Ae=A\cup\{e\}$, 
where $e\notin A$ represents the empty string. 
Since the storage configurations of $\cA$, and its storage transformations, are specified by (MSO-definable) sets of graphs in $S$, we can imagine a different interpretation of~$\cA$, viz. as a finite-state automaton that accepts graphs. Rather than keeping track of its storage configurations in private memory, the automaton $\cA$ checks that its input graph represents, in addition to an input string $w\in A^*$, a correct sequence of storage configurations corresponding to a run of $\cA$ on $w$. Moreover, $\cA$ also checks that the input graph contains the intermediate edges (between the storage configurations) corresponding to the pair graphs of the instructions $\theta\in\Theta$ applied by $\cA$ in that run. 
A possible input graph of $\cA$ will be called a ``string-like'' graph, because it represents both a string over $A$, and a sequence of graphs with intermediate edges between consecutive graphs. More precisely, it represents a string over $\Ae$, taking into account the $e$-transitions of $\cA$.
Thus, the length of the sequence of graphs is the length of that string plus one. 
The sequence of graphs will be determined by  $\Ae$-edges (similar to the $\nu$-edges in pair graphs).

Since the input graphs will contain both $\Gamma$-edges and $\Ae$-edges,
we assume, without loss of generality, that $\Gamma\cap \Ae=\emptyset$.
Thus, we consider graphs over $(\Sigma,\Gamma\cup \Ae)$.
We first define ``string-like'' graphs, and then the way in which an $S$-automaton $\cA$ 
can be viewed as an acceptor of such graphs. 
An example of a string-like graph is shown in Figure \ref{fig:comput} 
(for $S=\mathrm{STACK}$ and $A=\{0,1\}$).

\subsection{String-like Graphs}
\label{sec:string-like-graphs}

A graph $g=(V,E,\nlab)\in\cG_{\Sigma,\Gamma\cup\Ae}$ is \emph{string-like} (over $S$ and $A$) if 
there are $n \in \nat$, $\alpha_1,\ldots,\alpha_n\in \Ae$, and an ordered partition 
$(V_1,\dots,V_{n+1})$ of $V$ such that
\begin{itemize}
\item[$(1)$] for every $\gamma\in \Gamma$ and $u,v \in V$, 
if $(u,\gamma,v)\in E$, then either there exists $i\in[n+1]$ 
such that $\{u,v\}\subseteq V_i$ or 
there exists $i\in[n]$ such that $\{u,v\}\subseteq V_i\cup V_{i+1}$;
\item[$(2)$] for every $\alpha\in \Ae$ and $u,v \in V$, 
$(u,\alpha,v) \in E$ if and only if there exists $i\in[n]$ such that 
$\alpha=\alpha_i$, $u\in V_i$, and $v\in V_{i+1}$; 
\item[$(3)$] $g[V_1]=g_\init$.
\end{itemize}
We call each set $V_i$ (with $i \in [n+1]$) a \emph{component of $g$}, and we call
the string $\alpha_1\cdots\alpha_n$ over $\Ae$ the \emph{trace of $g$}.  

Intuitively, $g$ can be viewed as a sequence of graphs $g_1,\dots,g_{n+1}$ over $(\Sigma,\Gamma)$
with additional $\Gamma$-edges between consecutive graphs $g_i$ and $g_{i+1}$; moreover, 
\mbox{$\alpha_i$-edges} are added from every node of $g_i$ to every node of $g_{i+1}$;
finally, we require $g_1$ to be the initial storage configuration of $S$. 
Clearly, the $\Ae$-edges uniquely determine the components $V_1,\dots,V_{n+1}$ and their order, and 
also uniquely determine the trace $\alpha_1\cdots\alpha_n$.
Thus, we define 
\[
\com(g)=(V_1,\dots,V_{n+1}) \text{ and }
\tr(g)=\alpha_1\cdots\alpha_n \in \Ae^* \enspace.
\]
Note that for every $i\in[n+1]$, $g[V_i]=g_i\in\cG_{\Sigma,\Gamma}$, and note
that for every $i\in[n]$, the graph $h=\lambda_{\Ae,\nu}(g[V_i\cup V_{i+1}])$ is a pair graph 
such that $\pair(h)=(g_i,g_{i+1})$, because the mapping $\lambda_{\Ae,\nu}$ 
changes every $\Ae$-edge into a $\nu$-edge.
Vice versa, if $A=\{\nu\}$, then a pair graph $g$ is a string-like graph such that $\tr(g)=\nu$.

We will denote the set of all string-like graphs over $S$ and $A$ by $\cG[S,A]$;
thus, in this notation $(\Sigma,\Gamma)$ and $e$ are implicit. 

If each of $g$'s components is a singleton, then the graph $g'$ that is obtained from $g$ by dropping 
the $\Gamma$-edges, is a string graph, as defined in Section~\ref{sec:graphs-MSO}. 
In particular, if $\Sigma=\{*\}$ and $\tr(g)=\tau \in \Ae^*$, then $g'$ is the string graph $\edgr(\tau)$
defined in Section~\ref{sec:reg-lang}, which is a unique graph representation of the string $\tau$.
Clearly, if $\Sigma=\{*\}$ and $g_\init$ is the graph with one $*$-labeled node (and no edges), then,
among all graphs in $\cG[S,A]$ with trace $\tau$, $\edgr(\tau)$ has
the minimal number of nodes and edges; note that in particular $\edgr(\tau) \in \cG[\TRIV,A]$,
for the MSO graph storage type $\TRIV$ defined in Example~\ref{ex:MSO-graph-storage-triv} in which 
$\Sigma=\{*\}$ and $\Gamma=\emptyset$.

We finally define ``$w$-like'' graphs, where $w$ is a string over the alphabet~$A$.
A~graph $g\in\cG[S,A]$ is \emph{$w$-like} if $\mu_e(\tr(g))=w$ 
(where $\mu_e$  is the string homomorphism from $\Ae$ to $A$
that erases $e$, cf. Section~\ref{sec:storagetypes}).
For instance, the graph in Figure~\ref{fig:comput} is $011001$-like.  
For every string $w\in A^*$, we denote by $\cG[S,w]$ the set of $w$-like graphs in $\cG[S,A]$.
According to the scheme of BET-theorems discussed in the Introduction, every $w$-like graph can be viewed as an ``extension'' of the string~$w$; the mapping $\tr\circ \mu_e: \cG[S,A]\to A^*$ (i.e., $\tr$ followed by $\mu_e$) corresponds to the mapping $\pi$ in that discussion. 

It should be noted that in a string-like graph $g\in\cG[S,A]$, 
two nodes $u$~and~$v$ of~$g$ are in the same component if and only if $u\equiv_\Ae v$,
which means that they have the same neighbours in $g$ (with respect to $\Ae$),
as defined in Section~\ref{sec:graphs-MSO}.
Since $\cG[S,A]\subseteq \cG_{\Sigma,\Gamma\cup\Ae}$,
the logic $\MSOL(\Sigma,\Gamma\cup\Ae)$ will be used to describe properties of string-like graphs.
In that logic we will use the formula 
\[
\eq(x,y) =  \forall z.\bigwedge_{\alpha \in \Ae}((\edge_\alpha(z,x) \leftrightarrow \edge_\alpha(z,y))\wedge
                      (\edge_\alpha(x,z) \leftrightarrow \edge_\alpha(y,z)))
\]
which expresses that the nodes $x$ and $y$ are $\Ae$-equivalent, i.e., 
for every $g\in\cG_{\Sigma,\Gamma\cup\Ae}$
and $u,v\in V_g$, $(g,u,v)\models \eq(x,y)$ if and only if $u\equiv_\Ae v$.

We now prove our intuitive requirement that the set of graphs $\cG[S,A]$ should be MSO-definable, cf. 
the discussion on the scheme of BET-theorems in the Introduction. 

\begin{observation}\label{ob:stringlike}\rm 
The set $\cG[S,A]$ of string-like graphs is MSO-definable.
\end{observation}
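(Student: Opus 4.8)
The plan is to write down an explicit closed formula $\phi_{\sstringlike} \in \MSOL(\Sigma,\Gamma\cup\Ae)$ that captures the three defining conditions of string-like graphs, exploiting the fact that $\Ae$-equivalence is already MSO-expressible via the formula $\eq(x,y)$ introduced just above the observation. The key structural fact to use is that in any candidate graph $g$, the $\Ae$-edges, if they arrange the nodes into an ordered partition $(V_1,\dots,V_{n+1})$ in the required way, must do so with the "$\alpha_i$-edge from every node of $V_i$ to every node of $V_{i+1}$'' pattern; hence the $\equiv_\Ae$-classes are exactly the components, and the induced graph on the quotient (contracting each class to a point) is a string graph. So the formula will say: (i) the $\equiv_\Ae$-classes, linearly ordered by the $\Ae$-edges, form the components; (ii) between consecutive classes all $\Ae$-edges are present and all carry the same label $\alpha_i\in\Ae$, and no other $\Ae$-edges exist; (iii) every $\Gamma$-edge stays within one class or goes from one class to the next; and (iv) the first class, with its induced $\Gamma$-labeled subgraph, equals $g_\init$.

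\emph{First}, I would express the string-graph structure on the quotient. Rather than literally forming the quotient in the logic, I would relativize a variant of the $\sstring$ formula of Example~\ref{ex:strings}: using $\eq$ and $\edge_\Ae$, one can write macros $\first\cc(x)$ (the class of $x$ has no incoming $\Ae$-edge), $\last\cc(x)$, "at most one successor class / predecessor class'' modulo $\eq$, "there is a directed $\Ae$-path from the first class to every class and from every class to the last class'' (again a $\ppath_\Ae$ statement made class-aware by routing through $\eq$). This forces the $\equiv_\Ae$-classes to be linearly ordered $V_1,\dots,V_{n+1}$. \emph{Second}, condition~(2) becomes: for all $x,y$, $\edge_\Ae(x,y)$ holds iff the class of $y$ is the immediate successor of the class of $x$; and for each such consecutive pair, all the $\Ae$-edges between them carry a single label — formalized by $\bigwedge_{\alpha\neq\alpha'\in\Ae}\forall x,y,x',y'.(\eq(x,x')\wedge\eq(y,y')\wedge \edge_\Ae(x,y)\wedge\edge_\Ae(x',y')\to\neg(\edge_\alpha(x,y)\wedge\edge_{\alpha'}(x',y')))$, a finite conjunction since $\Ae$ is finite. \emph{Third}, condition~(1): $\bigwedge_{\gamma\in\Gamma}\forall x,y.(\edge_\gamma(x,y)\to(\eq(x,y)\vee \exists u,v.(\eq(x,u)\wedge\eq(y,v)\wedge\edge_\Ae(u,v))))$.

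\emph{Fourth}, the condition $g[V_1] = g_\init$ (up to isomorphism, as graphs are abstract) is a property of a \emph{fixed finite} graph $g_\init\in\cG_{\Sigma,\Gamma}$, hence first-order expressible: there is a first-order sentence $\chi_{g_\init}$ over $(\Sigma,\Gamma)$ that holds for a graph exactly when it is isomorphic to $g_\init$ (pin down the right number of nodes with their labels, list all edges, and forbid all others). Relativizing $\chi_{g_\init}$ to the variable ranging over the first component — which I can name by "the set of all $z$ with $\first\cc(z)$'', i.e. via a relativization of the form discussed in Section~\ref{sec:graphs-MSO} applied to a node-set variable $Y$ constrained by $\forall z.(z\in Y\leftrightarrow \first\cc(z))$ — yields the final conjunct. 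The conjunction of all four parts is $\phi_{\sstringlike}$, and $L(\phi_{\sstringlike})=\cG[S,A]$ by construction.

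\emph{Main obstacle.} The only genuinely delicate point is bookkeeping the "class-aware'' versions of the string-graph axioms and of $\chi_{g_\init}$: one must be careful that quantifiers intended to range over components are consistently routed through $\eq$, and that relativizing $\chi_{g_\init}$ to the first component correctly reconstructs the induced subgraph $g[V_1]$ — in particular that the "no other edges'' clause of $\chi_{g_\init}$ refers only to edges \emph{within} $V_1$, which is exactly what relativization to $Y$ guarantees. None of this is conceptually hard, and since $\Sigma$, $\Gamma$, $\Ae$ and $g_\init$ are all finite and fixed, every would-be infinite conjunction is in fact finite; so the construction goes through routinely.
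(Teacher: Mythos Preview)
Your proposal is correct and follows essentially the same approach as the paper: identify the components as the $\equiv_{\Ae}$-classes, adapt the formula $\sstring_\Gamma$ of Example~\ref{ex:strings} to be ``class-aware'' via $\eq$, and relativize a characterizing formula for the fixed finite graph $g_\init$ to the first component. The paper is terser---it packages conditions~(1) and~(2) together into a single modified $\sstring$ formula and does not spell out the $\Gamma$-edge clause separately---whereas you treat the $\Gamma$-edge constraint of condition~(1) as an explicit conjunct, which is if anything cleaner; but the underlying idea is identical.
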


\begin{proof} We define a closed formula `$\sstringlike$' in $\MSOL(\Sigma,\Gamma\cup\Ae)$ such that $\LL(\sstringlike) = \cG[S,A] =
\{g \in \cG_{\Sigma,\Gamma\cup\Ae} \mid g \text{ is a string-like graph} \}$. 
We let 
\[
\sstringlike = \phi_2\wedge\phi_1\wedge\phi_3
\]
where $\phi_i$ expresses condition~($i$) of the definition of string-like graphs.

As observed above, the components of a string-like graph are the equivalence classes 
of the equivalence relation $\equiv_\Ae$. 
As observed in Section~\ref{sec:graphs-MSO} for an arbitrary graph, 
the equivalence relation $\equiv_\Ae$ 
is a congruence with respect to the $\Ae$-edges, and 
there are no $\Ae$-edges within an equivalence class. 
Hence, to express condition~(2), it suffices to require that
the equivalence classes of $\equiv_\Ae$ form a string, in the following sense:
the graph with the equivalence classes as nodes and an $\alpha$-edge from one equivalence class to another 
if there is an $\alpha$-edge from every element of the one to every element of the other, is a string graph.
Thus, the formula $\varphi_2$ is obtained from the formula $\sstring_{\Gamma}$ of Example~\ref{ex:strings}
by changing $\Gamma$~into $\Ae$, $z=x$ into $\eq(z,x)$, and $y=z$ into $\eq(y,z)$, everywhere.

To express condition~(1) we define 
\[
\phi_1 = \forall x,y.(\edge_\Gamma(x,y) \rightarrow \eq(x,y) \vee \edge_\Ae(x,y) \vee \edge_\Ae(y,x)) .
\]

To express condition~(3), let $\psi$ be a formula such that $\LL(\psi)=\{g_\init\}$,
and let 
\[
\first(X) = \forall x.(x\in X \leftrightarrow (\neg \exists y. \edge_\Ae(y,x)))
\]
which expresses that $X$ is the first component of the string-like graph.
Then we define $\varphi_3$ to be the formula $\forall X.(\first(X)\to\psi|_X)$.
\end{proof}

\subsection{Graph Acceptors}
\label{sec:automaton-model}

As at the start of the section, 
let $S=(\phi_\rc,g_\init,\Theta)$ be an MSO graph storage type over $(\Sigma,\Gamma)$ 
and let $\cA=(Q,Q_\init,Q_\rf,T)$ be an $S$-automaton over $A$. 
We now interpret $\cA$ as an acceptor of string-like graphs.

Let $g$ be a string-like graph over $S$ and $A$, i.e., $g \in \cG[S,A]$, and let 
$\com(g)=(V_1,\dots,V_{n+1})$ and $\tr(g)=\alpha_1\cdots \alpha_n$, 
for some $n \in \nat$ and $\alpha_i \in \Ae$ for each $i \in [n]$. 
The graph $g$ is \emph{accepted by} $\cA$ 
if there exist $q_1,\dots,q_{n+1}\in Q$ and $\theta_1,\dots,\theta_n\in\Theta$ such that 
(1)~$q_1\in Q_\init$, 
(2)~for every $i\in[n]$ the transition $(q_i,\alpha_i,\theta_i,q_{i+1})$ is in~$T$ and 
$\lambda_{\Ae,\nu}(g[V_i\cup V_{i+1}])\in \LL(\theta_i)$, and 
(3)~$q_{n+1}\in Q_\rf$. 
The graph language $\GLL(\cA)$ accepted by $\cA$ consists of all string-like graphs over $S$ and $A$
that are accepted by $\cA$.

Intuitively, when processing $g$, the automaton visits $V_1,\dots,V_{n+1}$ in that order. 
It visits $V_i$ in state~$q_i$,
and the subgraph $g[V_i]$ can be viewed as the storage configuration of $\cA$ 
at the current moment.
In state~$q_i$ the automaton reads the label $\alpha_i \in \Ae$ of the $\Ae$-edges from $V_i$ to $V_{i+1}$,
and uses an \mbox{$\alpha_i$-transition} $(q_i,\alpha_i,\theta_i,q_{i+1})$ 
to move to $V_{i+1}$ in state $q_{i+1}$, 
changing its storage configuration to $g[V_{i+1}]$, 
provided that the change is allowed by the instruction $\theta_i$, i.e., provided that the pair graph
$\lambda_{\Ae,\nu}(g[V_i\cup V_{i+1}])$ satisfies the formula $\theta_i$. 
The automaton starts at $V_1$ in an initial state and with storage configuration $g[V_1]$,
which is the initial storage configuration $g_\init$ of $S$. 
It accepts $g$ when it arrives at $V_{n+1}$ in a final state. 
When viewed as an acceptor of $\GLL(\cA)$ as above, the automaton~$\cA$ will also be called an 
\emph{MSO graph $S$-automaton}. 
The similarity of these automata to the graph acceptors of~\cite{tho91} will be discussed at the end of this section.

Let $S=(\phi_\rc,g_\init,\Theta)$ be an MSO graph storage type. 
A set of string-like graphs $L\subseteq \cG[S,A]$ is \emph{$S$-recognizable}
if $L=\GLL(\cA)$ for some \mbox{$S$-automaton} $\cA$ over $A$. 

Clearly, if a string-like graph $g$ 
is accepted by an $S$-automaton $\cA$, as described above, then 
the storage configurations $g[V_i]$ witness the fact that 
the sequence $\theta_1 \cdots \theta_n\in\Theta^*$ is an $S$-behaviour,
as defined in Section~\ref{sec:storagetypes}. 
For an arbitrary string-like graph $g \in \cG[S,A]$ 
such that $\com(g)=(V_1,\ldots,V_{n+1})$ for some $n \in \nat$, we define 
the set of \emph{\mbox{$S$-behaviours} on $g$}, denoted by $\cB(S,g)$, 
to be the set of all strings $\theta_1 \cdots \theta_n \in \Theta^* $ such that
$\lambda_{\Ae,\nu}(g[V_i \cup V_{i+1}]) \models \theta_i$ for every $i \in [n]$.
Thus, $\cB(S,g)\subseteq\cB(S)$.
It follows immediately from the exclusiveness of $\Theta$ that 
$\cB(S,g)$ is either a singleton or empty; and as observed above, 
it is nonempty if $g$ is accepted by an $S$-automaton. 
In other words, a string-like graph that is accepted by an $S$-automaton represents 
a unique $S$-behaviour. The next lemma is a straightforward characterization 
of the $S$-recognizable graph languages.

\begin{lemma}\rm \label{lm:decomposition} 
A graph language $L\subseteq \cG[S,A]$ 
is $S$-recognizable if and only if there exists a regular language 
$R \subseteq (\Ae \times \Theta)^*$ such that
      \begin{align*}
        L = \{g \in \cG[S,A] \mid \ & \text{there exist $n\in\nat$, $\alpha_1,\ldots,\alpha_n\in \Ae$, and $\theta_1,\ldots,\theta_n \in \Theta$}\\
        & \text{such that $\tr(g) = \alpha_1 \cdots \alpha_n$, \,$\theta_1 \cdots \theta_n \in \cB(S,g)$, and} \\
        & (\alpha_1,\theta_1) \cdots (\alpha_n,\theta_n) \in R\} \enspace.
      \end{align*}
\end{lemma}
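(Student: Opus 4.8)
The plan is to mimic exactly the proof of Lemma~\ref{lm:decomposition2}, transferring it from the string-acceptor view to the graph-acceptor view of an $S$-automaton. Given an $S$-automaton $\cA = (Q, Q_\init, Q_\rf, T)$ over $A$, I would associate with it the finite-state automaton $\cA' = (Q, Q_\init, Q_\rf, T')$ over the alphabet $\Ae \times \Theta$, where $T' = \{(q,(\alpha,\theta),q') \mid (q,\alpha,\theta,q') \in T\}$, and then show that $R = L(\cA')$ is the regular language required in the statement. Since $T \mapsto T'$ is a bijection between transition sets, the map $\cA \mapsto \cA'$ is a bijection between $S$-automata over $A$ and finite-state automata over $\Ae \times \Theta$; this takes care of the ``only if'' direction once the equivalence below is established, and for the ``if'' direction one uses the inverse map to recover an $S$-automaton from an arbitrary regular $R$ (after first realizing $R$ by a finite-state automaton over $\Ae \times \Theta$).

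The heart of the argument is to unfold the definition of $g \in \GL(\cA)$ and match it term by term against the right-hand side of the claimed equality. Fix $g \in \cG[S,A]$ with $\com(g) = (V_1,\dots,V_{n+1})$ and $\tr(g) = \alpha_1 \cdots \alpha_n$, which are well defined since the $\Ae$-edges of $g$ determine the components and their order uniquely. By definition, $g$ is accepted by $\cA$ iff there are $q_1,\dots,q_{n+1} \in Q$ and $\theta_1,\dots,\theta_n \in \Theta$ with $q_1 \in Q_\init$, $q_{n+1} \in Q_\rf$, and, for each $i\in[n]$, both $(q_i,\alpha_i,\theta_i,q_{i+1}) \in T$ and $\lambda_{\Ae,\nu}(g[V_i \cup V_{i+1}])\in L(\theta_i)$. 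Now observe two things. First, the conjunction over $i$ of the conditions $\lambda_{\Ae,\nu}(g[V_i \cup V_{i+1}]) \models \theta_i$ is, by the very definition of $\cB(S,g)$, equivalent to $\theta_1 \cdots \theta_n \in \cB(S,g)$. Second, the existence of states $q_1,\dots,q_{n+1}$ with $q_1 \in Q_\init$, $q_{n+1} \in Q_\rf$, and $(q_i,\alpha_i,\theta_i,q_{i+1}) \in T$ for all $i$ is, by construction of $\cA'$, precisely the statement that $(\alpha_1,\theta_1) \cdots (\alpha_n,\theta_n) \in L(\cA')$. Combining these yields: $g \in \GL(\cA)$ iff there exist $n\in\nat$, $\alpha_1,\dots,\alpha_n\in\Ae$, and $\theta_1,\dots,\theta_n\in\Theta$ with $\tr(g) = \alpha_1 \cdots \alpha_n$, $\theta_1 \cdots \theta_n \in \cB(S,g)$, and $(\alpha_1,\theta_1)\cdots(\alpha_n,\theta_n) \in L(\cA')$, which is exactly the displayed characterization with $R = L(\cA')$. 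The degenerate case $n = 0$ (where $g = g_\init$) is handled uniformly: it corresponds to the empty word, $\epsilon \in \cB(S,g_\init)$ holds trivially, and $g_\init \in \GL(\cA)$ iff $Q_\init \cap Q_\rf \neq \emptyset$ iff $\epsilon \in L(\cA')$.

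I do not expect a genuine obstacle here, since the statement is the graph-acceptor counterpart of Lemma~\ref{lm:decomposition2}; the only point needing a little care is to note that the pair graph $\lambda_{\Ae,\nu}(g[V_i \cup V_{i+1}])$ depends only on $g$ and $i$, so that the canonical decomposition of a string-like graph into its components $(V_1,\dots,V_{n+1})$ lines up with the position-by-position reading performed by $\cA'$. Once this is observed, the equivalence is a direct unwinding of the definitions of $\GL(\cA)$, $\cB(S,g)$, and $L(\cA')$, exactly as in the proof of Lemma~\ref{lm:decomposition2}.
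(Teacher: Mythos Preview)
Your proposal is correct and follows essentially the same approach as the paper: construct the finite-state automaton $\cA'$ over $\Ae\times\Theta$ from $\cA$ by $T'=\{(q,(\alpha,\theta),q')\mid (q,\alpha,\theta,q')\in T\}$, observe that this is a bijection between $S$-automata and finite-state automata over $\Ae\times\Theta$, and then check directly from the definitions of $\GL(\cA)$, $\cB(S,g)$, and $L(\cA')$ that $R=L(\cA')$ works. The paper's proof is terser (it just says ``it follows directly from the definitions''), whereas you spell out the unwinding explicitly, but the argument is the same.
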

\begin{proof} 
The proof is similar to the one of Lemma~\ref{lm:decomposition2}. 
For every $S$-automaton $\cA = (Q,Q_\init,Q_\rf,T)$ over $A$
we construct the finite-state automaton $\cA' = (Q,Q_\init,Q_\rf,T')$ over $\Ae\times\Theta$ 
as in the proof of Lemma~\ref{lm:decomposition2}, i.e., 
\[
T'=\{(q,(\alpha,\theta),q')\mid (q,\alpha,\theta,q') \in T\}.
\]
It follows directly from the definitions 
of $\GLL(\cA)$, $\cB(S,g)$, and $\LL(\cA')$, that $L=\GLL(\cA)$ and $R=\LL(\cA')$ satisfy the requirements. 
Since the transformation of $\cA$ into $\cA'$ is a bijection between $S$-automata over $A$ and 
finite-state automata over $\Ae\times\Theta$, this proves the lemma. 
\end{proof}

\begin{example}\rm \label{ex:STACK-automaton}
We continue Example~\ref{ex:stack} (of the MSO graph storage type STACK) and 
consider the STACK-automaton $\cA = (Q,Q_\init,Q_\rf,T)$ over $A=\{0,1\}$ 
that is obtained from the Stack-automaton $\cA$ of Example~\ref{ex:stackstring} 
by changing in every transition the instruction $\theta$ into $\psi_\theta$. 
Due to the isomorphism of the storage types Stack and STACK, 
the (string) language $\LL(\cA)$ accepted by~$\cA$ is still 
$\{ww^\mathrm{R}w\mid w\in A^+\}$. 
It should be clear that for every graph $g$ in the graph language $\GLL(\cA)$ accepted by $\cA$ 
there is a nonempty string $w$ over $A$ such that $\tr(g)=ww^\mathrm{R}ew$.
As an example, a graph $g\in \GLL(\cA)$ such that $\tr(g)=0110e01$ 
is displayed in Figure~\ref{fig:comput}.\footnote{The reader might think of a snake 
that has eaten eight elephants (see~\cite{des46} for the case of one elephant).}
\begin{figure}[t]
  \begin{center}
    \includegraphics[scale=0.35,trim={0 7cm 0 3cm},clip]{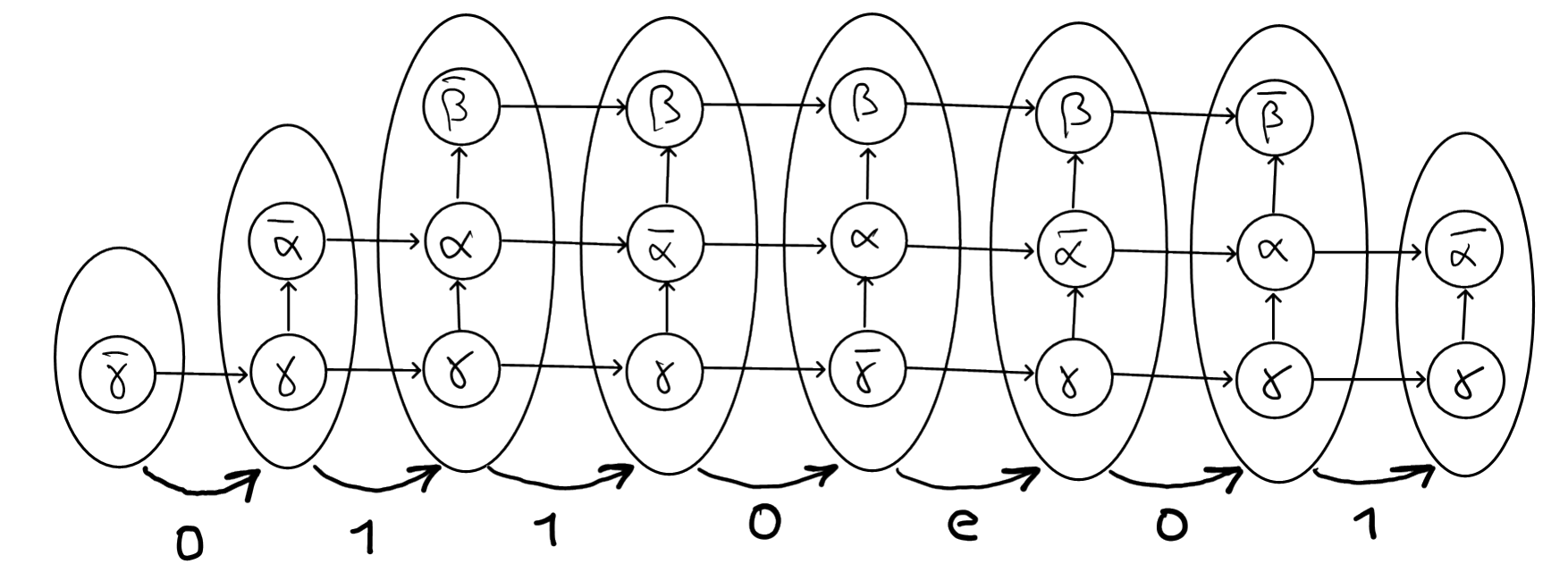}
    \end{center}
\caption{\label{fig:comput} A string-like graph $g\in \GLL(\cA)$ such that $\tr(g)=0110e01$.
The vertical edges have label $*$. The straight horizontal edges have label $d$.
As in Figure~\ref{fig:pair-graph}, the components of $g$ are represented by ovals. An $\Ae$-edge 
from one oval to another symbolizes all edges with that label from each node of the one component 
to each node of the other.} 
\end{figure}
The (unique) STACK-behaviour $b\in \cB(\mathrm{STACK},g)$ is  
\[
b = \push(\alpha);\push(\beta);\movedown(\beta);\movedown(\alpha);
\moveup(\gamma);\moveup(\alpha);\pop(\beta) 
\]
where we wrote the formulas $\psi_\theta$ as $\theta$, and separated them by semicolons.
Thus, $g$ represents both the string $0110e01$ and the behaviour $b$.
Intuitively, the MSO graph STACK-automaton $\cA$ accepts $g$ because it can check that, 
as an acceptor of $\LL(\cA)$, it has a run on input $011001$  
with the storage behaviour $b$. 
\qed
\end{example}

Intuitively, one would expect that a string $w$ over $A$ is accepted by $\cA$ 
if and only if there is a $w$-like graph that is accepted by $\cA$.
This is shown in the next lemma. 
Recall from Section~\ref{sec:string-like-graphs} that 
a string-like graph $g$ is $w$-like if $\mu_e(\tr(g))=w$,
and that the set of all $w$-like graphs is denoted $\cG[S,w]$. 

\begin{lemma}\rm \label{lm:twolang}
Let $S$ be an MSO graph storage type over $(\Sigma,\Gamma)$. 
For every $S$-automaton $\cA$ over $A$, 
$\LL(\cA)=\{w\in A^*\mid \exists \,g\in \cG[S,w]: g\in \GLL(\cA) \}$.
\end{lemma}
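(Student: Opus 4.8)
The plan is to prove both inclusions by relating runs of the $S$-automaton $\cA$, viewed as a string acceptor, to string-like graphs in $\GL(\cA)$, using the behaviour-based characterizations already established. The cleanest route is via Lemma~\ref{lm:decomposition2} and Lemma~\ref{lm:decomposition}: note that these two lemmas produce, from the same automaton $\cA$, the \emph{same} finite-state automaton $\cA'$ over $\Ae\times\Theta$, hence the same regular language $R=L(\cA')$. So it suffices to show that the string language described from $R$ in Lemma~\ref{lm:decomposition2} equals $\{w\in A^*\mid \exists\, g\in\cG[S,w]: g\in\GL(\cA)\}$, where the graph language $\GL(\cA)$ is the one described from $R$ in Lemma~\ref{lm:decomposition}.

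For the inclusion $\subseteq$, suppose $w\in L(\cA)$. By Lemma~\ref{lm:decomposition2} there are $\alpha_1\cdots\alpha_n\in\Ae^*$ and $\theta_1\cdots\theta_n\in\Theta^*$ with $h_e(\alpha_1\cdots\alpha_n)=w$, $\theta_1\cdots\theta_n\in\cB(S)$, and $(\alpha_1,\theta_1)\cdots(\alpha_n,\theta_n)\in R$. Since $\theta_1\cdots\theta_n\in\cB(S)$, there are configurations (graphs) $g_1=g_\init,g_2,\dots,g_{n+1}$ with $(g_i,g_{i+1})\in m(\theta_i)=\rel(L(\theta_i))$; pick for each $i$ a pair graph $h_i\in L(\theta_i)$ with $\pair(h_i)=(g_i,g_{i+1})$. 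I would then assemble these into a single string-like graph $g$: take disjoint copies of $g_1,\dots,g_{n+1}$ (identifying the first with $g_\init$), insert between consecutive copies $V_i,V_{i+1}$ the $\Gamma$-intermediate edges dictated by $h_i$ and, for each $i\in[n]$, an $\alpha_i$-edge from every node of $V_i$ to every node of $V_{i+1}$. One must check this is well-defined: the two occurrences of $g_{i+1}$ inside $h_i$ and $h_{i+1}$ are isomorphic (both equal $g_{i+1}$ as abstract graphs), so the intermediate edges can be consistently glued; the resulting $g$ satisfies conditions (1)--(3) of the definition of string-like graph, so $g\in\cG[S,A]$, $\tr(g)=\alpha_1\cdots\alpha_n$, hence $g\in\cG[S,w]$; and $\lambda_{\Ae,\nu}(g[V_i\cup V_{i+1}])=h_i\in L(\theta_i)$, so $\theta_1\cdots\theta_n\in\cB(S,g)$; together with $(\alpha_1,\theta_1)\cdots(\alpha_n,\theta_n)\in R$ and Lemma~\ref{lm:decomposition} this gives $g\in\GL(\cA)$.

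For the inclusion $\supseteq$, suppose $g\in\cG[S,w]$ with $g\in\GL(\cA)$. By Lemma~\ref{lm:decomposition} there are $\alpha_1,\dots,\alpha_n\in\Ae$ and $\theta_1,\dots,\theta_n\in\Theta$ with $\tr(g)=\alpha_1\cdots\alpha_n$, $\theta_1\cdots\theta_n\in\cB(S,g)$, and $(\alpha_1,\theta_1)\cdots(\alpha_n,\theta_n)\in R$. Since $g$ is $w$-like, $h_e(\alpha_1\cdots\alpha_n)=h_e(\tr(g))=w$. From $\theta_1\cdots\theta_n\in\cB(S,g)\subseteq\cB(S)$ (the inclusion noted just after the definition of $\cB(S,g)$) we get $\theta_1\cdots\theta_n\in\cB(S)$. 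Now all three conditions of Lemma~\ref{lm:decomposition2} are met with this same $R$, so $w\in L(\cA)$.

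I expect the main obstacle to be the gluing step in the $\subseteq$ direction: making precise that the pair graphs $h_1,\dots,h_n$, each carrying a ``before'' and ``after'' configuration, can be amalgamated into one string-like graph with a single shared copy of each $g_i$. The subtlety is purely combinatorial/bookkeeping — we work with abstract (isomorphism classes of) graphs, so one picks concrete representatives, renames node sets to be the disjoint union $V_1\uplus\cdots\uplus V_{n+1}$ with $g[V_1]=g_\init$, transports the intermediate $\Gamma$-edges of each $h_i$ along the chosen isomorphisms, and adds the $\Ae$-edges; then one verifies conditions (1)--(3) literally. Everything else is a direct unfolding of the two decomposition lemmas, which share the construction $\cA\mapsto\cA'$ and hence the same $R$.
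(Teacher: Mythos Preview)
Your proposal is correct and follows essentially the same approach as the paper: both arguments hinge on the observation that Lemmas~\ref{lm:decomposition2} and~\ref{lm:decomposition} produce the same regular language $R$ from $\cA$, and then reduce the equality to the claim that for given $\alpha_1,\dots,\alpha_n$ and $\theta_1,\dots,\theta_n$ there exists $g\in\cG[S,A]$ with $\tr(g)=\alpha_1\cdots\alpha_n$ and $\theta_1\cdots\theta_n\in\cB(S,g)$ if and only if $\theta_1\cdots\theta_n\in\cB(S)$. The paper states this equivalence as a separate Statement and dismisses its proof as clear, whereas you spell out the gluing of pair graphs for the nontrivial direction; your more detailed treatment of that construction is entirely in the spirit of what the paper leaves implicit.
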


\begin{proof}
Let $S=(\phi_\rc,g_\init,\Theta)$ and $\cA=(Q,Q_\init,Q_\rf,T)$.
We have to show that $\LL(\cA)=\LL'(\cA)$, where 
$\LL'(\cA)=\{w\in A^*\mid \exists \,g\in \cG[S,w]: g\in \GLL(\cA) \}$.
Let $R$ be the regular language defined in the proof of both 
Lemma~\ref{lm:decomposition} and Lemma~\ref{lm:decomposition2}.
Then, by the proofs of these two lemmas,
      \begin{align*}
        \GLL(\cA) = \{g \in \cG[S,A] \mid \ & \text{there exist $n\in\nat$, $\alpha_1,\ldots,\alpha_n\in \Ae$, and $\theta_1,\ldots,\theta_n \in \Theta$}\\
        & \text{such that $\tr(g) = \alpha_1 \cdots \alpha_n$, \,$\theta_1 \cdots \theta_n \in \cB(S,g)$, and} \\
        & (\alpha_1,\theta_1) \cdots (\alpha_n,\theta_n) \in R\} 
      \end{align*}
and      \begin{align*}
        \LL(\cA) = \{w \in A^* \mid \ & \text{there exist } n\in\nat, 
             \alpha_1,\ldots,\alpha_n\in \Ae, \text{ and } \theta_1,\ldots,\theta_n \in \Theta \\
        & \text{such that } \mu_e(\alpha_1 \cdots \alpha_n)=w, \,\theta_1 \cdots \theta_n \in \cB(S), 
                \text{ and }  \\
        & (\alpha_1,\theta_1) \cdots (\alpha_n,\theta_n) \in R \} \enspace.
         \end{align*}
Since $\LL'(\cA)= \{w\in A^*\mid \text{ there exists } g\in \GLL(\cA) \text{ such that } \mu_e(\tr(g))=w\}$,
equality of $\LL(\cA)$ and $\LL'(\cA)$ is now an immediate consequence of 
the following statement.

\smallskip
{\bf Statement.}
For every $n\in\nat$, $\alpha_1,\dots,\alpha_n\in\Ae$, and $\theta_1,\dots,\theta_n\in\Theta$, 
the following two conditions are equivalent:
\begin{itemize}
\item[($1$)] there exists $g\in\cG[S,A]$ such that $\tr(g)=\alpha_1\cdots\alpha_n$ 
and $\theta_1\cdots\theta_n \in\cB(S,g)$; 
\item[($2$)] $\theta_1\cdots\theta_n \in\cB(S)$.
\end{itemize}
Note that, by definition of $\cB(S)$, (2) is equivalent to the existence of graphs $g_1,\dots,g_{n+1}\in \LL(\phi_\rc)$ such that $g_1=g_\init$ and $(g_i,g_{i+1})\in \rel(\LL(\theta_i))$ for every $i\in[n]$. From this the equivalence of (1) and (2) should be clear.
\end{proof}

\begin{example}\rm \label{ex:twolang-triv} 
Let $S=\TRIV= (\phi_\rc,g_\init,\{\theta\})$ over $(\{*\},\emptyset)$ be the MSO graph storage type from Example~\ref{ex:MSO-graph-storage-triv} and let $A$ an alphabet.
Clearly, $\cG[S,A] = \edgr(Ae^*)$. 
Let $\cA=(Q,Q_\init,Q_\rf,T)$ be an $S$-automaton, and consider the finite-state automaton
$\cA'=(Q,Q_\init,Q_\rf,T')$ over the alphabet $\Ae$
such that $T'= \{(q,\alpha,q')\mid (q,\alpha,\theta,q')\in T\}$. 
Obviously, $\LL(\cA)=\mu_e(\LL(\cA'))$. Moreover, it 
is easy to see from the definitions that $\GLL(\cA)=\edgr(\LL(\cA'))$.
An equivalent way of expressing Lemma~\ref{lm:twolang} is to say that $\LL(\cA)=\mu_e(\tr(\GLL(\cA)))$.
Hence the above illustrates that lemma, because $\tr(\edgr(\tau))=\tau$ for every $\tau\in\Ae^*$.
\qed
\end{example}

In~\cite{tho91} finite-state graph acceptors are introduced that recognize graphs by tilings.
Roughly speaking, such a graph acceptor $\cA$ consists of a finite set of states and a finite set of transitions (or tiles), which are graphs of which the nodes are additionally labeled by states.
Roughly speaking, $\cA$ accepts a graph $g$ if the nodes of $g$ can be additionally labeled by states
such that, in the resulting graph, every node belongs to an induced subgraph that 
is isomorphic to a tile and contains all edges incident with the node. 
This easily implies, as stated in~\cite[Theorem~2.4]{tho91}, that the graph language accepted by $\cA$ is 
MSO-definable. Moreover, as discussed in~\cite[Theorem~3.1]{tho91}, this framework captures the known 
finite-state automata on strings and trees. Suppose now that we generalize these 
tiling graph acceptors such that the set of tiles is allowed to be an arbitrary MSO-definable set of graphs. 
Then, on string-like graphs, every MSO graph $S$-automaton $\cA$ can be simulated by such a generalized 
tiling graph acceptor $\cA'$ with the same set of states. That should be clear from 
the intuitive description of the way in which $\cA$ accepts $\GLL(\cA)$. 
Indeed (cf. the intermediate transitions in the proof of~\cite[Theorem~3.1]{tho91}), 
if $(q_1,\alpha,\theta_1,q_2)$ and $(q_2,\beta,\theta_2,q_3)$ 
are transitions of $\cA$ and $g$ is a string-like graph with three components $V_1,V_2,V_3$ such that 
$\tr(g)= \alpha\beta$ and $\lambda_{\Ae,\nu}(g[V_i\cup V_{i+1}])\in \LL(\theta_i)$ for $i\in\{1,2\}$, then 
$\cA'$ has a tile that is obtained from $g$ by additionally labeling the nodes of $V_i$ 
by $q_i$, for $i\in\{1,2,3\}$. Additional tiles are needed to handle the initial and final states 
of $\cA$, cf. the proof of~\cite[Theorem~3.1]{tho91}.
This simulation shows, by Observation~\ref{ob:stringlike} and~\cite[Theorem~2.4]{tho91}
(which clearly still holds) that $\GLL(\cA)$ is MSO-definable. That will, of course, be a consequence of 
our first BET-theorem for $S$-automata (Theorem~\ref{thm:main}).

\section{A Logic for String-Like Graphs}
\label{sec:logic-on-string-graphs}

For every MSO graph storage type $S$ we want to design a logic of which the formulas
define the graph languages accepted by MSO graph \mbox{$S$-automata}, and hence
also the (string) languages accepted by \mbox{$S$-automata}, as expressed in Lemma~\ref{lm:twolang}.   
Each formula of the logic has two levels, 
an outer level that only considers the ``string aspect'' of the string-like graph,
and an inner level that only considers the ``storage behaviour aspect'' of the graph. 

Let $S=(\phi_\rc,g_\init,\Theta)$ be an MSO graph storage type over $(\Sigma,\Gamma)$, 
and let $A$ be an alphabet.

The \emph{set of MSO-logic formulas over $S$ and $A$}, denoted by $\MSOL(S,A)$, 
is the smallest set $M$ of expressions such that
\begin{enumerate}
\item[($1$)] for every $\alpha\in \Ae$, the set $M$ contains $\edge_\alpha(x,y)$ and $x\eeuro X$,
 \item[($2$)] for every $\theta\in\Theta$,
the set $M$ contains $\nnext(\theta,x,y)$, 
\item[($3$)] if $\phi,\phi' \in M$, then the set $M$ contains $(\neg \phi)$, $(\phi \vee \phi')$, $(\exists x. \phi)$, and $(\exists X. \phi)$.
  \end{enumerate}
  
For a formula $\phi\in \MSOL(S,A)$, the subformulas $\nnext(\theta,x,y)$
of $\phi$ form its inner level that refers to the storage behaviour aspect, 
whereas the remainder of~$\phi$ forms its outer level that refers to the string aspect.
We define the set $\Free(\varphi)$ of free variables of $\varphi$ in the usual way; in particular,
$\Free(\nnext(\theta,x,y))=\{x,y\}$.

Intuitively, this logic is interpreted for a string-like graph $g$ as follows.

(1)  The meaning of $\edge_\alpha(x,y)$ is the standard one. The meaning of $x\eeuro X$ 
is a variant of the meaning of $x\in X$: either $x\in X$ or
there is an element $y$ of~$X$ such that 
$x$ and $y$ are in the same component of $g$.  

(2) The meaning of $\nnext(\theta,x,y)$ is that $x$ and $y$ belong to consecutive components,
and that the subgraph of $g$ induced by the union of these components (with the $\Ae$-edges replaced by $\nu$-edges) satisfies $\theta$.

(3) The meaning of these formulas is standard. 

\noindent
Formally, let $g \in \cG[S,A]$ be a  string-like graph 
and let $\com(g)=(V_1,\dots,V_{n+1})$ for some $n\in\nat$.
Moreover, let $\phi \in \MSOL(S,A)$
and let $\cV \supseteq \Free(\phi)$. Finally, let $\rho$ be a $\cV$-valuation on $g$. We define the \emph{models relationship} $(g,\rho) \models \phi$ by induction on the structure of $\phi$ as follows.

\begin{itemize}
\item Let $\phi = \edge_\alpha(x,y)$. Then $(g,\rho) \models \phi$ if $(\rho(x),\alpha,\rho(y)) \in E_g$,
as defined for $\MSOL(\Sigma,\Gamma\cup\Ae)$. 

\item Let $\phi = (x\eeuro X)$. Then $(g,\rho) \models \phi$ if 
  $(g,\rho) \models \exists y. (y\in X \wedge \eq(x,y))$.
(Recall the definition of $\eq(x,y)$ before Observation~\ref{ob:stringlike}.)

\item Let $\phi = \nnext(\theta,x,y)$ for some $\theta \in \Theta$. Then $(g,\rho) \models \phi$ if there exists $i\in[n]$ such that $\rho(x) \in V_i$, $\rho(y)\in V_{i+1}$, and $\lambda_{\Ae,\nu}(g[V_i\cup V_{i+1}])\models \theta$.

\item Let $\phi$ be formed according to the third item of the definition of $\MSOL(S,A)$ (i.e., $\phi$ contains at least one occurrence of $\neg$, $\vee$, or $\exists$). Then $(g,\rho) \models \phi$ is defined as for $\MSOL(\Sigma,\Gamma\cup\Ae)$.
\end{itemize}
As in the case of $\MSOL(\Sigma,\Gamma)$, we identify $(g,\emptyset)$ with $g$.

We know from Lemma~\ref{lm:decomposition} that for every $S$-recognizable graph language 
$L \subseteq \cG[S,A]$, $\cB(S,g)\neq \emptyset$ for every $g\in L$.
But, obviously, there are formulas $\phi\in\MSOL(S,A)$ such that 
the graph language $\{g\in\cG[S,A]\mid g\models \phi\}$ does not satisfy this requirement. 
Hence, to obtain a logic equivalent to \mbox{$S$-recognizability}, we need to restrict $\MSOL(S,A)$ to 
formulas that do satisfy the requirement, as follows.
Let $\beh$ be the following closed formula in $\MSOL(S,A)$:
\begin{align*}
\beh = & \ \forall x,y. \bigwedge_{\alpha \in \Ae} \Big(\edge_\alpha(x,y) \rightarrow
          \bigvee_{\theta \in \Theta} \nnext(\theta,x,y)\Big).
\end{align*}

\begin{observation}\label{ob:beta}\rm 
For every $g \in \cG[S,A]$, we have
$g \models \beh \text{ if and only if }\cB(S,g) \neq \emptyset$.
\qed 
\end{observation}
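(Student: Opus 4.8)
The plan is simply to unwind the semantics of $\beh$ and match it, clause by clause, with the definition of $\cB(S,g)$ given just above the statement. Fix $g\in\cG[S,A]$ and write $\com(g)=(V_1,\dots,V_{n+1})$ and $\tr(g)=\alpha_1\cdots\alpha_n$, the $V_i$ being pairwise disjoint and nonempty. First I would recall two facts about the models relationship on string-like graphs. By condition~(2) in the definition of a string-like graph, for $u,v\in V_g$ and $\alpha\in\Ae$ we have $(u,\alpha,v)\in E_g$ if and only if there is an $i\in[n]$ with $\alpha=\alpha_i$, $u\in V_i$, and $v\in V_{i+1}$; hence $(g,\rho)\models\edge_\alpha(x,y)$ holds exactly in that situation. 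And by definition of the inner-level semantics, $(g,\rho)\models\nnext(\theta,x,y)$ holds precisely when $\rho(x)\in V_j$ and $\rho(y)\in V_{j+1}$ for some $j\in[n]$ with $\lambda_{\Ae,\nu}(g[V_j\cup V_{j+1}])\models\theta$.

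Next I would rewrite ``$g\models\beh$''. Spelling out the outer quantifiers and the implication, $g\models\beh$ means: for all $u,v\in V_g$ and all $\alpha\in\Ae$, if $(u,\alpha,v)\in E_g$ then $(g,u,v)\models\bigvee_{\theta\in\Theta}\nnext(\theta,x,y)$. Using the two facts above, together with the fact that $u\in V_i$ and $v\in V_{i+1}$ force $j=i$ in the semantics of $\nnext$ (the components are pairwise disjoint), this is equivalent to: for every $i\in[n]$, every $u\in V_i$, and every $v\in V_{i+1}$, there is a $\theta\in\Theta$ with $\lambda_{\Ae,\nu}(g[V_i\cup V_{i+1}])\models\theta$. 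The key step is to observe that the matrix ``$\lambda_{\Ae,\nu}(g[V_i\cup V_{i+1}])\models\theta$'' depends only on $i$, not on $u$ or $v$, so the existential over $\theta$ may be pulled out in front of the universal quantifiers over $u$ and $v$; since each $V_i$ is nonempty, the statement collapses to: for every $i\in[n]$ there is $\theta_i\in\Theta$ with $\lambda_{\Ae,\nu}(g[V_i\cup V_{i+1}])\models\theta_i$.

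Finally I would note that this last statement is, verbatim, the assertion that there exist $\theta_1,\dots,\theta_n\in\Theta$ such that $\theta_1\cdots\theta_n\in\cB(S,g)$, i.e., that $\cB(S,g)\neq\emptyset$ (the case $n=0$ being trivial, since then $g\models\beh$ holds vacuously and $\cB(S,g)=\{\epsilon\}$). There is essentially no obstacle here; the two points that deserve a sentence of care are the disjointness of the components, which forces the index appearing in the semantics of $\edge_\alpha$ and of $\nnext$ to coincide, and the quantifier commutation in the previous paragraph, which is valid precisely because the inner formula is independent of the witness nodes $u$ and $v$.
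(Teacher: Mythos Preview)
Your argument is correct. The paper itself gives no proof of this observation---it is stated with an immediate \qed, so there is nothing to compare against; you have simply supplied the routine verification that the authors regarded as self-evident.
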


A set of string-like graphs $L \subseteq \cG[S,A]$ is \emph{$\MSOL(S,A)$-definable} if 
there exists a closed  formula $\phi \in \MSOL(S,A)$ such that 
\[
L = \{g\in \cG[S,A] \mid g\models \beh \wedge \phi\} \enspace.
\]
Similarly, a string language $L \subseteq A^*$ is \emph{$\MSOL(S,A)$-definable} 
if there exists a closed formula $\phi \in \MSOL(S,A)$ such that 
\[
L = \{w \in A^* \mid \exists \,g\in\cG[S,w]: \;g\models \beh \wedge \phi\}
\]
(or in words, $L$ consists of all strings $w$ for which there exists a $w$-like graph that satisfies
the formula $\beh \wedge \phi$). 
An equivalent formulation is that $L \subseteq A^*$ is $\MSOL(S,A)$-definable if there exists an 
$\MSOL(S,A)$-definable graph language $G \subseteq \cG[S,A]$ such that 
$L = \{w \in A^* \mid \exists \,g\in\cG[S,w]: 
\;g\in G\} = \mu_e(\tr(G))$. 

\begin{example}\rm \label{ex:formula}
A formula that defines the graph language $\GLL(\cA)$ accepted by 
the STACK-automaton $\cA= (Q,Q_\init,Q_\rf,T)$ of Example~\ref{ex:STACK-automaton}, 
has a structure that is familiar from the proof of the classical BET-theorem. It is the formula 
$\beh\wedge \phi_\cA$ such that
\begin{align*}
\phi_\cA = &\ \exists X_1,X_2,X_3,X_4. (\,\mathrm{part}(X_1,X_2,X_3,X_4) \\
& \wedge \forall x.(\first(x)\to x\eeuro X_1) \\
& \wedge \forall x.(\last(x)\to x\eeuro X_4) \\
& \wedge \phi_0 \wedge \phi_1 \wedge \phi_e) 
\end{align*}
with the following subformulas.
First,   
\begin{align*}
\mathrm{part}(X_1,X_2,X_3,X_4) = &\ \forall x. 
\bigvee_{i\in[4]} (x\eeuro X_i \wedge \neg\bigvee_{j\in[4]\setminus\{i\}} x\eeuro X_j)
\end{align*}
which expresses that $X_1,\dots,X_4$ define a partition of the set of nodes of the string-like graph into unions of components.
Second, 
\begin{align*}
\first(x) &= (\neg \exists y. \edge_\Ae(y,x)) \\
\last(x) &= (\neg \exists y. \edge_\Ae(x,y))
\end{align*}
which express that $x$ is in the first/last component of the graph, respectively.
And third, the formulas $\phi_0$, $\phi_1$, and $\phi_e$ that express the $0$-transitions, $1$-transitions, and $e$-transitions of $\cA$, respectively (see Example~\ref{ex:stackstring}).
\begin{align*}
\phi_e &= \forall x,y. (\edge_e(x,y) \to (x \eeuro X_2 \wedge \nnext(\psi_{\moveup(\gamma)},x,y)\wedge y\eeuro X_3)) \\
\phi_0  &= \forall x,y. (\edge_0(x,y) \to \\
& \Big((x \eeuro X_1 \wedge \nnext(\psi_{\push(\alpha)},x,y)\wedge y\eeuro X_1) \\
& \vee ((x \eeuro X_1 \vee x \eeuro X_2) \wedge \nnext(\psi_{\movedown(\alpha)},x,y)\wedge y\eeuro X_2) \\
& \vee (x \eeuro X_3 \wedge \nnext(\psi_{\moveup(\alpha)},x,y)\wedge y\eeuro X_3) \\
& \vee (x \eeuro X_3 \wedge \nnext(\psi_{\pop(\alpha)},x,y)\wedge y\eeuro X_4)\Big)) 
\end{align*}
The formula $\phi_1$ is obtained from $\phi_0$ by changing $0$ in $1$, and $\alpha$ in $\beta$, everywhere.

Note that, informally speaking, $\GLL(\cA)$ is also defined by 
the simpler formula~$\phi_\cA$
because that formula implies the formula $\beh$ 
for every string-like graph over $S$ and $A$.
\qed
\end{example}

The next observation shows that $\MSOL(S,A)$ can be viewed as a subset of $\MSOL(\Sigma,\Gamma\cup\Ae)$.
It implies that every $\MSOL(S,A)$-definable graph language $L\subseteq \cG[S,A]$ is MSO-definable.
That, on its turn, implies that if $L \subseteq A^*$ is $\MSOL(S,A)$-definable, then there exists an 
MSO-definable graph language $G \subseteq \cG[S,A]$ such that $L = \mu_e(\tr(G))$, cf. the discussion on the scheme of BET-theorems in the Introduction.

\begin{observation}\rm
\label{obs:msolsubset}
For every set of string-like graphs $L\subseteq \cG[S,A]$, if $L$ is
$\MSOL(S,A)$-definable, then $L$~is $\MSOL(\Sigma,\Gamma\cup \Ae)$-definable.
\end{observation}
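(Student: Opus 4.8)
The plan is to translate each construct of $\MSOL(S,A)$ into an equivalent construct of $\MSOL(\Sigma,\Gamma\cup\Ae)$, by exhibiting a mapping $\phi\mapsto\phi^\sharp$ from formulas of $\MSOL(S,A)$ to formulas of $\MSOL(\Sigma,\Gamma\cup\Ae)$ such that, for every string-like graph $g\in\cG[S,A]$ and every valuation $\rho$, we have $(g,\rho)\models\phi$ in the sense of $\MSOL(S,A)$ if and only if $(g,\rho)\models\phi^\sharp$ in the sense of $\MSOL(\Sigma,\Gamma\cup\Ae)$. Given such a translation, if $L$ is $\MSOL(S,A)$-definable via the closed formula $\phi$, then $L=\{g\in\cG[S,A]\mid g\models(\beh\wedge\phi)^\sharp\}$, and since $\cG[S,A]$ is itself MSO-definable by the formula $\sstringlike$ of Observation~\ref{ob:stringlike}, the graph language $L$ equals $L(\sstringlike\wedge(\beh\wedge\phi)^\sharp)$, which is $\MSOL(\Sigma,\Gamma\cup\Ae)$-definable.

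The translation is defined by induction on the structure of $\phi$, and for the three clauses of the definition of $\MSOL(S,A)$ it proceeds as follows. For the atomic formula $\edge_\alpha(x,y)$ we simply set $(\edge_\alpha(x,y))^\sharp=\edge_\alpha(x,y)$, which is already a formula of $\MSOL(\Sigma,\Gamma\cup\Ae)$ with the same meaning by definition. For the atomic formula $x\eeuro X$ we set $(x\eeuro X)^\sharp=\exists y.(y\in X\wedge\eq(x,y))$, where $\eq(x,y)$ is the formula defined just before Observation~\ref{ob:stringlike}; this is faithful precisely because, in a string-like graph, two nodes lie in the same component exactly when they are $\Ae$-equivalent, and this is exactly how the semantics of $x\eeuro X$ was defined. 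For the Boolean and quantifier cases we let $\sharp$ commute with $\neg$, $\vee$, $\exists x$, and $\exists X$; this is sound because the semantics of these connectives in $\MSOL(S,A)$ was explicitly stipulated to agree with the semantics in $\MSOL(\Sigma,\Gamma\cup\Ae)$.

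The main obstacle is the clause for $\nnext(\theta,x,y)$, since its semantics involves passing to the induced subgraph $g[V_i\cup V_{i+1}]$, relabelling all $\Ae$-edges to $\nu$, and checking satisfaction of the MSO formula $\theta\in\MSOL(\Sigma,\Gamma\cup\{\nu\})$. I would handle this as follows. First, there are MSO-definable formulas identifying the components: $\comp_1(x,Z)$ expressing that $Z$ is the component containing $x$, namely $\forall z.(z\in Z\leftrightarrow\eq(x,z))$, and similarly a formula expressing that $Z$ is the component immediately following the one containing $x$ (using that consecutive components are linked by $\Ae$-edges from every node of one to every node of the next, which is an MSO-expressible property). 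Then one writes $(\nnext(\theta,x,y))^\sharp$ as: there exist node-set variables $Z_1,Z_2$ such that $Z_1$ is the component of $x$, $Z_2$ is the component of $y$, $Z_2$ is the successor component of $Z_1$, and the induced subgraph on $Z_1\cup Z_2$, with $\Ae$-edges read as $\nu$-edges, satisfies $\theta$. The last conjunct is obtained by taking $\theta$, replacing each atom $\edge_\nu(u,v)$ by $\edge_\Ae(u,v)$ (legitimate here because within $g[Z_1\cup Z_2]$ the $\Ae$-edges are exactly those that become $\nu$-edges under $\lambda_{\Ae,\nu}$, and there is no genuine $\nu$-edge present), and then relativising the resulting formula to the set variable $Z_1\cup Z_2$ using the relativisation operator $|_{Y}$ recalled in Section~\ref{sec:graphs-MSO}; correctness of this last step is exactly the statement there that $(g[V'],\rho)\models\psi$ iff $(g,\rho,V')\models\psi|_Y$. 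One must be slightly careful that $\theta$ may also mention edge labels in $\Gamma$, but those are untouched by the relabelling and survive the relativisation verbatim, so the induction on $\theta$ goes through. A routine verification that the free variables are as required, and that each translated formula indeed lies in $\MSOL(\Sigma,\Gamma\cup\Ae)$, completes the argument.
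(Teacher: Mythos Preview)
Your proposal is correct and follows essentially the same approach as the paper: translate each $\MSOL(S,A)$ construct into $\MSOL(\Sigma,\Gamma\cup\Ae)$, leaving $\edge_\alpha$ unchanged, unfolding $x\eeuro X$ via $\eq$, and handling $\nnext(\theta,x,y)$ by relativising $\theta$ (with $\edge_\nu$ rewritten as $\edge_\Ae$) to the union of the two relevant components. The paper's version is marginally more economical in that it detects ``$x,y$ lie in consecutive components'' simply by the conjunct $\edge_\Ae(x,y)$ rather than by quantifying a successor-component predicate, and it phrases the component/union extraction with universal quantifiers and an implication (using a macro $\ec(x,X)$ and a $\mathrm{union}(X,Y,Z)$ formula) instead of your existential $Z_1,Z_2$; but these are purely stylistic differences.
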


\begin{proof}
Since the set $\cG[S,A]$ is MSO-definable by Observation~\ref{ob:stringlike},
it suffices to show that for every formula $\phi\in\MSOL(S,A)$ there is a formula $\phi'\in\MSOL(\Sigma,\Gamma\cup\Ae)$ such that, 
for every $g\in \cG[S,A]$ and every valuation $\rho$ on~$g$,
$(g,\rho)\models \phi$ if and only if $(g,\rho)\models \phi'$. 
The translation of $\phi$ into $\phi'$ is straightforward. 
Let the following formula express that $x$ is in the equivalence class $X$ 
of the equivalence relation $\equiv_\Ae$, i.e., 
that $X$ is the component to which $x$ belongs:
\[
\ec(x,X) = \forall y.(y\in X \leftrightarrow \eq(x,y)) \enspace.
\]
We define $\phi'$ to be the formula obtained from $\phi$ by 
the following replacements of subformulas. 
\begin{itemize}
\item Every $x\eeuro X$ is replaced by $\exists y. (y\in X \wedge \eq(x,y))$.
\item Every $\nnext(\theta,x,y)$ is replaced by 
\[
\edge_\Ae(x,y)\wedge \forall X,Y,Z.((\ec(x,X)\wedge \ec(y,Y) \wedge \mathrm{union}(X,Y,Z))\to 
\tilde{\theta}\,|_Z)
\]
where $\mathrm{union}(X,Y,Z)$ expresses that $Z$ is the union of $X$ and $Y$, and 
$\tilde{\theta}$ is obtained from $\theta$ by changing every subformula $\edge_\nu(x,y)$ into $\edge_\Ae(x,y)$.
\end{itemize}
Note that every subformula $\edge_\alpha(x,y)$ remains unchanged. 
\end{proof}

\begin{example}\rm \label{ex:formula-triv} 
Let $S=\TRIV= (\phi_\rc,g_\init,\{\theta\})$ over $(\{*\},\emptyset)$ be the MSO graph storage type 
from Example~\ref{ex:MSO-graph-storage-triv} and let $A$ an alphabet. 
As already observed in Example~\ref{ex:twolang-triv}, $\cG[S,A] = \edgr(Ae^*)$.
In the next paragraph we show that a set of string-like graphs $L\subseteq \cG[S,A]$ is 
$\MSOL(S,A)$-definable if and only if it is MSO-definable (i.e., $\MSOL(\{*\},\Ae)$-definable).

Obviously, for every graph $g\in \cG[S,A] = \edgr(Ae^*)$,
the formula $\nnext(\theta,x,y)$
is equivalent to $\edge_\Ae(x,y)$, for all nodes $x$ and $y$ of $g$. Moreover, 
since $\equiv_\Ae$ is the identity on $V_g$, the formula
$x \eeuro X$ is equivalent to the formula $x\in X$. 
This shows that if $L$ is $\MSOL(S,A)$-definable, then it is $\MSOL(\{*\},\Ae)$-definable
(which is in accordance with Observation~\ref{obs:msolsubset}). 
On the other hand, the formula $\lab_*(x)$ is true for $g$.
Since, by the above, the formula $\beh$ is true for $g$, 
this shows that if $L$ is $\MSOL(\{*\},\Ae)$-definable, then it is $\MSOL(S,A)$-definable.

Consequently, a string language $L\subseteq A^*$ is $\MSOL(\TRIV,A)$-definable if and only if 
there exists an $\MSOL(\{*\},\Ae)$-definable graph language $G\subseteq \edgr(Ae^*)$ such that 
$L= \mu_e(\tr(G))$. Since $\edgr$ is a bijection between $\Ae^*$ and $\edgr(Ae^*)$ with inverse $\tr$
(cf. Example~\ref{ex:twolang-triv}), 
that is if and only if there exists a string language $L'\subseteq \Ae^*$ such that 
$L= \mu_e(L')$ and $\edgr(L')$ is $\MSOL(\{*\},\Ae)$-definable.
\qed
\end{example}

\section{The BET-Theorems for $S$-Automata}
\label{sec:bet}

In this section we prove our B\"uchi-Elgot-Trakhtenbrot theorems: the equivalence of $S$-recognizability and 
$\MSOL(S,A)$-definability, for graph languages \mbox{$L\subseteq \cG[S,A]$}
and for string languages $L\subseteq A^*$, where $S=(\phi_\rc,g_\init,\Theta)$ 
is an MSO graph storage type.
Since we have characterized the $S$-recognizable graph languages in terms of regular languages 
over $\Ae \times \Theta$ in Lemma~\ref{lm:decomposition}, and since 
a language $L$ over $\Ae \times \Theta$ is regular if and only if $\edgr(L)$ is
$\MSOL(\{*\}, \Ae \times \Theta)$-definable
by the classical BET-theorem (Proposition~\ref{pro:MSO-string-graph}),
it now suffices to translate $\MSOL(\{*\}, \Ae \times \Theta)$ into $\MSOL(S,A)$, 
and back, which we do in the next two lemmas.

\begin{lemma}\label{lm:string-edge-graph-rep}\rm 
For each closed formula $\phi\in\MSOL(\{*\}, \Ae \times \Theta)$ there exists a closed formula 
$\phi' \in \MSOL(S,A)$ such that for all $n\in\nat$, $\alpha_1,\ldots,\alpha_n \in \Ae$, 
$\theta_1,\ldots,\theta_n \in \Theta$, and 
$g \in \cG[S,A]$, if $\tr(g)=\alpha_1\cdots\alpha_n$ and 
$\theta_1\cdots\theta_n \in \cB(S,g)$, then
\[
\edgr((\alpha_1,\theta_1) \cdots (\alpha_n,\theta_n)) \models \phi \text{ if and only if } g \models \phi'.
\]
\end{lemma}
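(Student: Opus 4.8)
The goal is to translate a closed MSO formula $\phi$ over the string graphs $\edgr((\alpha_1,\theta_1)\cdots(\alpha_n,\theta_n))$ into a closed formula $\phi'$ of $\MSOL(S,A)$ that has the same truth value on any string-like graph $g$ with the same trace and the same (unique) $S$-behaviour. The plan is to proceed by structural induction on $\phi$, but it is cleaner to first fix the correspondence between the two graphs. Given $g\in\cG[S,A]$ with $\com(g)=(V_1,\dots,V_{n+1})$, $\tr(g)=\alpha_1\cdots\alpha_n$, and $\theta_1\cdots\theta_n\in\cB(S,g)$, let $s=\edgr((\alpha_1,\theta_1)\cdots(\alpha_n,\theta_n))$; this is the string graph with node set $[n+1]$ and an edge labeled $(\alpha_i,\theta_i)$ from $i$ to $i+1$. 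The natural bijection is: node $i$ of $s$ corresponds to the component $V_i$ of $g$, i.e., to the $\equiv_\Ae$-equivalence class of any node in $V_i$. Under this correspondence, a first-order variable of $\phi$ ranging over $V_s=[n+1]$ will be simulated by a first-order variable of $\phi'$ ranging over $V_g$, where we only care which component it lands in; and a node-set variable of $\phi$ ranging over subsets of $[n+1]$ will be simulated by a node-set variable of $\phi'$ ranging over unions of components. This is exactly the situation for which the operators $x\eeuro X$ and $\nnext(\theta,x,y)$ of $\MSOL(S,A)$ were designed.

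Concretely, I would define $\phi'$ from $\phi$ as follows. Replace each atomic formula $\edge_{(\alpha,\theta)}(x,y)$ of $\phi$ by $\edge_\alpha(x,y)\wedge\nnext(\theta,x,y)$: in $g$ this holds (for nodes $x\in V_i$, $y\in V_{i+1}$) exactly when $\alpha_i=\alpha$ and $\lambda_{\Ae,\nu}(g[V_i\cup V_{i+1}])\models\theta$, which since $\theta_1\cdots\theta_n$ is the behaviour on $g$ and $\Theta$ is exclusive is equivalent to $(\alpha_i,\theta_i)=(\alpha,\theta)$, matching the edge of $s$. The atomic formula $\lab_*(x)$ is replaced by $\true$ (all nodes of $s$ are $*$-labeled, and this carries no information). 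The atomic formula $x\in X$ is replaced by $x\eeuro X$: if $X$ corresponds to a union of components, then $x\eeuro X$ holds in $g$ iff $x$'s component is among them, matching membership in $s$. Boolean connectives are kept verbatim. For the quantifiers: $\exists x.\psi$ becomes $\exists x.\psi'$ (no relativization needed — any node of $g$ picks out a component), while $\exists X.\psi$ becomes $\exists X.(\unions(X)\wedge\psi')$, where $\unions(X)$ is an auxiliary $\MSOL(S,A)$-formula asserting that $X$ is closed under $\eq$, i.e., $X$ is a union of components; this formula is easily written, e.g. as $\forall x,y.((x\eeuro X\wedge\eq(x,y))\to y\eeuro X)$, or even more simply $\forall x,y.((x\in X\wedge \eq(x,y))\to y\eeuro X)$ — actually $\forall x.(x\eeuro X \to x\in X)$ suffices together with... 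I would settle the exact shape of $\unions(X)$ in the write-up; the point is only that ``$X$ is a union of $\equiv_\Ae$-classes'' is $\MSOL(S,A)$-expressible. Then I would prove by induction on subformulas $\psi$ of $\phi$, with free variables interpreted compatibly on $s$ and on $g$ (a first-order value $i\in[n+1]$ on $s$ matched by any node of $V_i$ on $g$; a set value $W\subseteq[n+1]$ on $s$ matched by $\bigcup_{i\in W}V_i$ on $g$), that $(s,\sigma)\models\psi$ iff $(g,\rho)\models\psi'$. The base cases are exactly the computations sketched above (using Observation on $\eq$, the semantics of $\nnext$, and the exclusiveness of $\Theta$ together with the hypothesis $\theta_1\cdots\theta_n\in\cB(S,g)$ to pin down $\theta_i$); the Boolean cases are immediate; the first-order quantifier case is immediate since every component is nonempty (so the two quantifier ranges biject); the second-order quantifier case uses that the $\unions(X)$ guard makes the range of $X$ in $g$ biject with the power set of $[n+1]$.

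The main obstacle, and the thing to be careful about, is the second-order quantifier case: $\MSOL(S,A)$ quantifies over arbitrary subsets of $V_g$, not over unions of components, so I must ensure the guard $\unions(X)$ is correct and that the correspondence ``subset $W\subseteq[n+1]$ $\leftrightarrow$ union $\bigcup_{i\in W}V_i$'' is genuinely a bijection onto the sets satisfying the guard — which it is, precisely because the components partition $V_g$. A secondary subtlety is that the equivalence must hold for \emph{every} valuation extending the correspondence, not just closed formulas, so the induction hypothesis must be stated for formulas with free variables and ``compatible'' valuations as above; the final statement then follows by taking all free-variable sets empty. One should also note that the translation is oblivious to $n$, $\alpha_i$, $\theta_i$ and $g$ — $\phi'$ depends only on $\phi$ — so the same $\phi'$ works uniformly, as required by the statement. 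No real calculation is involved beyond unwinding the definitions of $\nnext$, $\eeuro$, and $\eq$, so this lemma should go through smoothly once the bookkeeping of compatible valuations is set up.
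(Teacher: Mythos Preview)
Your approach is essentially the paper's: the same atomic translations ($\edge_{(\alpha,\theta)}(x,y)\mapsto \edge_\alpha(x,y)\wedge\nnext(\theta,x,y)$, $x\in X\mapsto x\eeuro X$, $\lab_*(x)$ discarded) and the same induction over compatible valuations. There is, however, a genuine slip in your treatment of the second-order quantifier.

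You propose $\exists X.\psi \mapsto \exists X.(\unions(X)\wedge\psi')$ and assert that ``$X$ is a union of $\equiv_\Ae$-classes'' is $\MSOL(S,A)$-expressible. It is not. The only atomic formula of $\MSOL(S,A)$ that mentions a set variable is $x\eeuro X$, and its truth value depends only on which components $X$ meets, not on $X$ itself. Hence no $\MSOL(S,A)$ formula with free variable $X$ can distinguish a set from its component-closure, so $\unions(X)$ cannot be written (your candidate $\forall x,y.((x\eeuro X\wedge \eq(x,y))\to y\eeuro X)$ is a tautology, and the variants using $x\in X$ are not in the language).

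The same observation, however, shows that no guard is needed. The paper simply translates $\exists X.\psi$ to $\exists X.\psi'$ and takes as compatibility condition for set variables the relation ``$\rho'(X)\cap V_i\neq\emptyset \Leftrightarrow i\in\rho(X)$''. Every subset of $V_g$ is then compatible with exactly one subset of $[n+1]$, and every subset of $[n+1]$ is compatible with some subset of $V_g$; since all atomic formulas of $\phi'$ respect this relation, the induction goes through for arbitrary $X$. So drop the guard, adjust your compatibility condition for set variables accordingly, and your proof is the paper's proof.
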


\begin{proof}
We may assume that the formulas in $\MSOL(\{*\}, \Ae \times \Theta)$ do not have 
atomic subformulas $\lab_*(x)$, which are always true.
For every formula $\phi\in\MSOL(\{*\}, \Ae \times \Theta)$ we define 
$\phi' \in \MSOL(S,A)$ to be the formula obtained from~$\phi$ by changing every atomic subformula 
$\edge_{(\alpha,\theta)}(x,y)$ into the formula $\edge_\alpha(x,y)\wedge \nnext(\theta,x,y)$ 
and every atomic subformula $x\in X$ into $x\eeuro X$. 
Note that $\Free(\phi')=\Free(\phi)$. 

Now let $g \in \cG[S,A]$, $\tr(g)=\alpha_1\cdots \alpha_n$, and
$\theta_1\cdots\theta_n \in \cB(S,g)$. 
Let $\com(g)=(V_1,\dots,V_{n+1})$, and 
let $\edgr((\alpha_1,\theta_1) \cdots (\alpha_n,\theta_n))$
be the graph $(V,E,\nlab)$ with $V=[n+1]$ and $E=\{(i,(\alpha_i,\theta_i),i+1)\}$.

The lemma follows from the following statement.

\smallskip
{\bf Statement}. Let $\phi\in\MSOL(\{*\}, \Ae \times \Theta)$, and 
let $\cV$ be a set of variables such that $\Free(\phi)\subseteq \cV$. 
Let $\rho$ be a $\cV$-valuation on $\edgr((\alpha_1,\theta_1) \cdots (\alpha_n,\theta_n))$
and let $\rho'$ be a $\cV$-valuation on $g$ such that (1)~$\rho'(x)\in V_{\rho(x)}$, 
for every node variable $x\in\cV$,
and (2)~$\rho'(X)\cap V_i \neq\emptyset$ if and only if $i\in \rho(X)$, 
for every $i\in[n+1]$ and node-set variable $X\in\cV$.
Then 
\[
(\edgr((\alpha_1,\theta_1) \cdots (\alpha_n,\theta_n)),\rho) \models \phi 
\text{ if and only if } (g,\rho') \models \phi' \enspace.
\]

\emph{Proof of Statement.} We prove this statement by induction on the structure of~$\phi$.  It follows from $\theta_1\cdots\theta_n \in \cB(S,g)$ that 

(a) for every $i \in [n]$, $\lambda_{Ae,\nu}(g[V_i \cup V_{i+1}]) \models \theta_i$ and, as a consequence of the exclusiveness of~$\Theta$, $\lambda_{Ae,\nu}(g[V_i \cup V_{i+1}]) \not\models \theta$ for every $\theta \in \Theta \setminus \{\theta_i\}$. 

\noindent
Moreover, since $g \in \cG[S,A]$ and $\tr(g)=\alpha_1\cdots \alpha_n$, we have that 

(b) for every $i \in [n]$, $u \in V_i$, and $v \in V_{i+1}$, the edge $(u,\alpha_i,v) \in E_g$ is the only $Ae$-edge between $u$ and~$v$, and 

(c) for every $u,v \in V_g$, $u \equiv_\Ae v$ if and only if 
there exists $i \in [n+1]$ such that $u\in V_i$ and $v\in V_i$.

Let $\phi=\edge_{(\alpha,\theta)}(x,y)$. 
\begin{align*}
  & (\edgr((\alpha_1,\theta_1) \cdots (\alpha_n,\theta_n)),\rho) \models \edge_{(\alpha,\theta)}(x,y)\\
  \Leftrightarrow  \ & \exists i \in [n]: \rho(i)=i, \rho(y)=i+1, \alpha=\alpha_i, \theta_i=\theta\\
   \Leftrightarrow \ & \exists i \in [n]: \rho'(i)\in V_i, \rho'(y)\in V_{i+1}, \alpha=\alpha_i, \theta_i=\theta\\
    \Leftrightarrow  \ & (g,\rho') \models \edge_\alpha(x,y) \wedge \nnext(\theta,x,y)
  \end{align*}
where the last equivalence follows from (a) and (b).

Let $\phi = (x\in X)$.
\begin{align*}
  & (\edgr((\alpha_1,\theta_1) \cdots (\alpha_n,\theta_n)),\rho) \models (x\in X)\\
  \Leftrightarrow  \ & \exists i \in [n+1]: (\rho(x)=i) \wedge (i \in \rho(X))\\
  \Leftrightarrow  \ & \exists i \in [n+1]: (\rho'(x)\in V_i) \wedge (V_i \cap \rho'(X) \not= \emptyset)\\
  \Leftrightarrow  \ & \exists v\in \rho'(X): \rho'(x) \equiv_\Ae v\\
  \Leftrightarrow  \ & (g,\rho') \models (x \eeuro X)
\end{align*}
where the last but one equivalence uses (c).

If $\phi$ has any other form, then the statement follows by induction. 
Note that for every valuation $\rho$ there is a valuation $\rho'$ that satisfies the requirements in the statement, and vice versa.
\end{proof}

\begin{lemma}\rm \label{lm:string-edge-graph-rep2}
For each closed formula $\phi \in \MSOL(S,A)$ there exists a closed formula
$\phi'\in\MSOL(\{*\},$ $Ae \times \Theta)$  such that 
for every $n\in\nat$, $\alpha_1,\ldots,\alpha_n \in \Ae$,
$\theta_1,\ldots,\theta_n \in \Theta$, and 
$g \in \cG[S,A]$, if $\tr(g)=\alpha_1\cdots\alpha_n$ and 
$\theta_1\cdots\theta_n \in \cB(S,g)$, then
\[
g \models \phi \text{ if and only if }  \edgr((\alpha_1,\theta_1) \cdots (\alpha_n,\theta_n)) \models \phi' .
\]
\end{lemma}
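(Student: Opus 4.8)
The plan is to mirror the proof of Lemma~\ref{lm:string-edge-graph-rep}, reversing the direction of the translation. For each $\phi \in \MSOL(S,A)$ I would define $\phi' \in \MSOL(\{*\},\Ae\times\Theta)$ to be the formula obtained from $\phi$ by the following replacements of atomic subformulas: replace $\edge_\alpha(x,y)$ by $\bigvee_{\theta\in\Theta}\edge_{(\alpha,\theta)}(x,y)$, replace $\nnext(\theta,x,y)$ by $\bigvee_{\alpha\in\Ae}\edge_{(\alpha,\theta)}(x,y)$, and replace $x\eeuro X$ by $x\in X$; the connectives $\neg$, $\vee$ and the quantifiers $\exists x$, $\exists X$ are left unchanged. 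This replacement preserves the set of free variables, so $\phi'$ is closed whenever $\phi$ is.

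Next I would formulate and prove, by induction on the structure of $\phi$, a statement analogous to the Statement in the proof of Lemma~\ref{lm:string-edge-graph-rep}. Fix $g\in\cG[S,A]$ with $\tr(g)=\alpha_1\cdots\alpha_n$ and $\theta_1\cdots\theta_n\in\cB(S,g)$, let $\com(g)=(V_1,\dots,V_{n+1})$, and write $G=\edgr((\alpha_1,\theta_1)\cdots(\alpha_n,\theta_n))$, so $V_G=[n+1]$. Call a $\cV$-valuation $\rho$ on $G$ and a $\cV$-valuation $\rho'$ on $g$ \emph{corresponding} if $\rho'(x)\in V_{\rho(x)}$ for every node variable $x\in\cV$ and, for every $i\in[n+1]$ and node-set variable $X\in\cV$, $\rho'(X)\cap V_i\neq\emptyset$ iff $i\in\rho(X)$. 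The claim is that for corresponding $\rho,\rho'$ and every $\phi\in\MSOL(S,A)$ with $\Free(\phi)\subseteq\cV$ we have $(g,\rho')\models\phi$ iff $(G,\rho)\models\phi'$; taking $\cV=\emptyset$ then yields the lemma. The base cases reuse verbatim the three observations (a), (b), (c) from the proof of Lemma~\ref{lm:string-edge-graph-rep}: behaviour together with exclusiveness of $\Theta$ pins down, for each $i\in[n]$, that $\lambda_{\Ae,\nu}(g[V_i\cup V_{i+1}])\models\theta$ iff $\theta=\theta_i$; string-likeness of $g$ identifies the unique $\Ae$-edge $(u,\alpha_i,v)$ between $u\in V_i$ and $v\in V_{i+1}$ and the trace; and $u\equiv_\Ae v$ iff $u,v$ lie in a common component. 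Concretely, $(g,\rho')\models\edge_\alpha(x,y)$ iff there is $i$ with $\rho'(x)\in V_i$, $\rho'(y)\in V_{i+1}$, $\alpha=\alpha_i$, which by the correspondence is iff $\rho(y)=\rho(x)+1$ and $\alpha_{\rho(x)}=\alpha$, i.e.\ $(G,\rho)\models\bigvee_{\theta\in\Theta}\edge_{(\alpha,\theta)}(x,y)$. For $\nnext(\theta,x,y)$ one additionally invokes observation (a) to rewrite ``$\lambda_{\Ae,\nu}(g[V_i\cup V_{i+1}])\models\theta$'' as ``$\theta=\theta_i$'', obtaining equivalence with $\rho(y)=\rho(x)+1\wedge\theta_{\rho(x)}=\theta$, i.e.\ with $(G,\rho)\models\bigvee_{\alpha\in\Ae}\edge_{(\alpha,\theta)}(x,y)$. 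For $x\eeuro X$ one unfolds it to $\exists y.(y\in X\wedge\eq(x,y))$ and uses (c) together with the correspondence condition on $X$ to get equivalence with $\rho(x)\in\rho(X)$, i.e.\ with $(G,\rho)\models x\in X$.

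The inductive steps for $\neg$ and $\vee$ are immediate, and the quantifier steps are routine once one observes, exactly as in the closing remark of the proof of Lemma~\ref{lm:string-edge-graph-rep}, that the correspondence relation is surjective in both directions: every $\cV$-valuation on $G$ extends to a corresponding $\cV\cup\{x\}$- (or $\cV\cup\{X\}$-) valuation on $g$ and vice versa, using that each component $V_i$ is nonempty. I do not expect a genuine obstacle; the argument is entirely parallel to Lemma~\ref{lm:string-edge-graph-rep}. The one point that needs to be kept explicit is the node-set quantifier step: the translated formula $\exists X.\phi'$ quantifies over subsets of the $n+1$ positions of $G$, whereas $\exists X.\phi$ quantifies over an arbitrary subset $Y$ of $V_g$, which may meet a component in several nodes. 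This mismatch is harmless precisely because the truth of an $\MSOL(S,A)$ formula on a string-like graph depends on a node-set assignment only through the set of components it meets --- which is the very reason $x\in X$ was replaced by $x\eeuro X$ in the definition of $\MSOL(S,A)$ --- so this ``component-invariance'' is the fact I would make sure the induction records.
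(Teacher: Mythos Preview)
Your proposal is correct and uses the same translation of atomic formulas as the paper. However, you take a different route to establish the equivalence.

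You prove directly, by induction on $\phi$, that $(g,\rho')\models\phi$ iff $(G,\rho)\models\phi'$ for corresponding valuations. The paper instead avoids re-running the induction from Lemma~\ref{lm:string-edge-graph-rep}: it observes that applying first the present translation $\phi\mapsto\phi'$ and then the translation of Lemma~\ref{lm:string-edge-graph-rep} yields a formula $\phi''\in\MSOL(S,A)$, and it proves only the simpler statement that $(g,\rho)\models\phi$ iff $(g,\rho)\models\phi''$ for every valuation $\rho$ on~$g$. The base cases then reduce to trivial propositional identities (distributivity plus facts (a) and (b)), and the conclusion follows by invoking Lemma~\ref{lm:string-edge-graph-rep} as a black box: $g\models\phi\Leftrightarrow g\models\phi''\Leftrightarrow G\models\phi'$. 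This buys a shorter proof with no need to discuss corresponding valuations or the quantifier steps again. Your direct approach is perfectly valid and arguably more transparent; note, incidentally, that the ``component-invariance'' you flag is already encoded in your induction hypothesis (stated for all corresponding pairs $\rho,\rho'$), so no extra bookkeeping is needed for the $\exists X$ case beyond constructing a corresponding witness in each direction.
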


\begin{proof}
For every formula $\phi\in\MSOL(S,A)$ we define 
$\phi' \in \MSOL(\{*\}, Ae \times \Theta)$ to be the formula obtained from~$\phi$ 
by the following replacements.
\begin{compactitem}
\item Every subformula 
$\edge_\alpha(x,y)$ is replaced by $\bigvee_{\theta \in \Theta} \edge_{(\alpha,\theta)}(x,y)$.

\item Every $x\eeuro X$ is replaced by $x\in X$.

\item Every $\nnext(\theta,x,y)$ is replaced by $\bigvee_{\alpha \in Ae} \edge_{(\alpha,\theta)}(x,y)$.
\end{compactitem}
Note that $\Free(\phi')=\Free(\phi)$.

Now let $g \in \cG[S,A]$, $\tr(g)=\alpha_1\cdots \alpha_n$, and
$\theta_1\cdots\theta_n \in \cB(S,g)$. 
First, we prove the following statement by induction on the structure of $\phi$.

\smallskip
{\bf Statement.} Let $\phi\in\MSOL(S,A)$, let $\cV$ be a set of variables 
such that $\Free(\phi)\subseteq\cV$, and let $\rho$ be a $\cV$-valuation on $g$.
Then 
\[
(g,\rho) \models \phi \text{ if and only if } (g,\rho) \models \phi'',
\]
where $\phi''$ is the formula in $\MSOL(S,A)$ obtained from $\phi'$ by the transformation defined in the proof of Lemma~\ref{lm:string-edge-graph-rep}.

\emph{Proof of Statement.}
It follows from $\theta_1\cdots\theta_n \in \cB(S,g)$ and Observation~\ref{ob:beta} that 
\[
\text{(a)} \quad (g,\rho) \models \edge_\alpha(x,y)\to\bigvee_{\theta \in \Theta}\nnext(\theta,x,y) 
\]
for all $\alpha\in\Ae$ and $x,y\in\Free(\phi)$. Since $g\in\cG[S,A]$, we also have 
\[
\text{(b)} \quad (g,\rho) \models \nnext(\theta,x,y)\to \bigvee_{\alpha \in \Ae}\edge_\alpha(x,y)
\]
for all $x,y\in\Free(\phi)$.

Let $\phi = \edge_\alpha(x,y)$.  Then $\phi''= \bigvee_{\theta \in \Theta}(\edge_\alpha(x,y) \wedge \nnext(\theta,x,y))$. The statement follows from distributivity of $\wedge$ over $\vee$ and (a).

Let $\phi = \nnext(\theta,x,y)$.  Then $\phi''= \bigvee_{\alpha \in \Ae}(\edge_\alpha(x,y) \wedge \nnext(\theta,x,y))$. The statement follows from distributivity of $\wedge$ over $\vee$ and (b). 

Let $\phi=(x \eeuro X)$. Then $\phi''=\phi$ and the statement trivially holds.

If $\phi$ is of any other form, e.g., $\phi = \phi_1 \wedge \phi_2$, then the statement follows by induction, e.g., from the hypotheses that $(g,\rho) \models \phi_i$ if and only if $(g,\rho) \models \phi_i''$, for each $i \in \{1,2\}$. 
\hfill\emph{End of Proof of Statement.}

\smallskip
Let $\phi \in \MSOL(S,A)$ be a closed formula, let $\phi' \in \MSOL(\{*\},Ae\times \Theta)$ be the closed formula obtained from $\phi$ by the transformation from the beginning of this proof, and let $\phi'' \in \MSOL(S,A)$ be the closed formula obtained from~$\phi'$ by the transformation defined in the proof of Lemma~\ref{lm:string-edge-graph-rep}.  Then: 
\begin{align*}
  & g \models \phi\\
  \Leftrightarrow  \ \ & g \models \phi'' \ \text{(by the Statement)}\\
  \Leftrightarrow  \ \ & \edgr((\alpha_1,\theta_1) \cdots (\alpha_n,\theta_n)) \models \phi' \ \text{(by the proof of Lemma \ref{lm:string-edge-graph-rep})}
\end{align*}
\end{proof}

It is now straightforward to prove our BET-theorems, first for graph languages and then for string languages.

\begin{theorem}\label{thm:main}\rm 
For every MSO graph storage type $S$ and alphabet $A$, a graph language $L\subseteq \cG[S,A]$ 
is $S$-recognizable if and only if $L$ is $\MSOL(S,A)$-definable. 
\end{theorem}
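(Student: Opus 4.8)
The plan is to chain together three earlier results: the decomposition of $S$-recognizable graph languages into a regular language over $\Ae\times\Theta$ (Lemma~\ref{lm:decomposition}), the classical BET-theorem for strings (Proposition~\ref{pro:MSO-string-graph}), and the two translation lemmas between $\MSOL(\{*\},\Ae\times\Theta)$ and $\MSOL(S,A)$ (Lemmas~\ref{lm:string-edge-graph-rep} and~\ref{lm:string-edge-graph-rep2}). The conceptual glue is that a string-like graph $g$ determines its trace $\alpha_1\cdots\alpha_n$ and carries \emph{at most one} $S$-behaviour $\theta_1\cdots\theta_n$ (because $\Theta$ is exclusive), and that $\cB(S,g)\neq\emptyset$ is expressed precisely by the formula $\beh$ (Observation~\ref{ob:beta}). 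Thus the pair word $(\alpha_1,\theta_1)\cdots(\alpha_n,\theta_n)$ is a well-defined function of $g$ on the set of graphs satisfying $\beh$, and everything below is a matter of transporting membership along this correspondence.

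For the ``only if'' direction I would start from an $S$-recognizable $L$, use Lemma~\ref{lm:decomposition} to obtain a regular $R\subseteq(\Ae\times\Theta)^*$, use Proposition~\ref{pro:MSO-string-graph} to obtain a closed $\phi\in\MSOL(\{*\},\Ae\times\Theta)$ with $\edgr(R)=L(\phi)$, and then use Lemma~\ref{lm:string-edge-graph-rep} to obtain a closed $\phi'\in\MSOL(S,A)$; the claim is that $L=\{g\in\cG[S,A]\mid g\models\beh\wedge\phi'\}$. To verify it, fix $g$ with $\tr(g)=\alpha_1\cdots\alpha_n$. If $\cB(S,g)=\emptyset$, then $g\notin L$ by the form of the decomposition in Lemma~\ref{lm:decomposition} and $g\not\models\beh$ by Observation~\ref{ob:beta}, so neither side contains $g$. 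Otherwise $\cB(S,g)=\{\theta_1\cdots\theta_n\}$, $g\models\beh$, and Lemma~\ref{lm:string-edge-graph-rep} gives $g\models\phi'\Leftrightarrow\edgr((\alpha_1,\theta_1)\cdots(\alpha_n,\theta_n))\models\phi\Leftrightarrow(\alpha_1,\theta_1)\cdots(\alpha_n,\theta_n)\in R$; since $\theta_1\cdots\theta_n$ is the only candidate that can witness the existential quantifier in the decomposition, this is exactly $g\in L$.

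For the ``if'' direction I would run the chain in reverse: given $L=\{g\in\cG[S,A]\mid g\models\beh\wedge\phi\}$ with $\phi\in\MSOL(S,A)$ closed, apply Lemma~\ref{lm:string-edge-graph-rep2} to obtain a closed $\phi'\in\MSOL(\{*\},\Ae\times\Theta)$, let $R=\{w\in(\Ae\times\Theta)^*\mid\edgr(w)\models\phi'\}$, which is regular by Proposition~\ref{pro:MSO-string-graph}, and let $L'\subseteq\cG[S,A]$ be the graph language obtained from $R$ via the formula in Lemma~\ref{lm:decomposition}; by that lemma $L'$ is $S$-recognizable, so it suffices to show $L=L'$. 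This is the same case analysis as before on whether $\cB(S,g)$ is empty, now using Lemma~\ref{lm:string-edge-graph-rep2} in place of Lemma~\ref{lm:string-edge-graph-rep}.

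The step requiring the most care is not any single computation but keeping the bookkeeping straight: the translation lemmas only promise the equivalence $g\models\phi\Leftrightarrow\edgr(\cdots)\models\phi'$ \emph{under the hypothesis} $\theta_1\cdots\theta_n\in\cB(S,g)$, and the existential quantifier in the decomposition of Lemma~\ref{lm:decomposition} ranges over all behaviours but is in fact witnessed by a unique one when $\cB(S,g)\neq\emptyset$ and by none otherwise. The conjunct $\beh$ in the definition of $\MSOL(S,A)$-definability, and the requirement $\theta_1\cdots\theta_n\in\cB(S,g)$ in the decomposition, are exactly what is needed to make these two partial descriptions of $g$ agree; once this is internalised the proof is a routine composition of the cited lemmas.
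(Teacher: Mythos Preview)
Your proposal is correct and follows essentially the same approach as the paper: chain Lemma~\ref{lm:decomposition}, Proposition~\ref{pro:MSO-string-graph}, the two translation Lemmas~\ref{lm:string-edge-graph-rep} and~\ref{lm:string-edge-graph-rep2}, and Observation~\ref{ob:beta}. The only cosmetic difference is that the paper presents the argument as a single chain of biconditionals (keeping the existential over $\alpha_i,\theta_i$ throughout and collapsing it via Observation~\ref{ob:beta} at the end), whereas you treat the two directions separately and make the case analysis on whether $\cB(S,g)$ is empty explicit; your remark that exclusiveness of $\Theta$ makes $\cB(S,g)$ a singleton when nonempty is exactly what justifies that collapse.
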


      \begin{proof}  
By Lemma~\ref{lm:decomposition}, $L$ is $S$-recognizable if and only if 
there is a regular language $R \subseteq (\Ae \times \Theta)^*$ such that
      \begin{align*}
        L = \{g \in \cG[S,A] \mid \ & \text{there exist $n\in\nat$, $\alpha_1,\ldots,\alpha_n\in \Ae$, and $\theta_1,\ldots,\theta_n \in \Theta$}\\
              & \text{such that $\tr(g) = \alpha_1 \cdots \alpha_n$, \,$\theta_1 \cdots \theta_n  \in \cB(S,g)$, and} \\
        & (\alpha_1,\theta_1) \cdots (\alpha_n,\theta_n) \in R \}\enspace.
      \end{align*}
        By Proposition~\ref{pro:MSO-string-graph}, the classical BET-theorem for string languages, 
$R$ is regular if and only if there is a closed formula $\phi\in\MSOL(\{*\},\Ae\times\Theta)$ 
such that for every $w\in (\Ae\times\Theta)^*$, $w\in R$ if and only if $\edgr(w)\models \phi$.
Thus, $L$ is $S$-recognizable if and only if there is a closed formula $\phi\in\MSOL(\{*\},\Ae\times\Theta)$ 
such that 
      \begin{align*}
        L = \{g \in \cG[S,A] \mid \ & \text{there exist $n\in\nat$, $\alpha_1,\ldots,\alpha_n\in \Ae$, and $\theta_1,\ldots,\theta_n \in \Theta$}\\
        & \text{such that $\tr(g) = \alpha_1 \cdots \alpha_n$, \,$\theta_1 \cdots \theta_n \in \cB(S,g)$, and} \\        
        & \edgr((\alpha_1,\theta_1) \cdots (\alpha_n,\theta_n)) \models \phi\} \enspace.
      \end{align*}
        Due to Lemmas~\ref{lm:string-edge-graph-rep} and~\ref{lm:string-edge-graph-rep2} 
this holds if and only if there is a closed formula $\phi' \in \MSOL(S,A)$ such that  
      \begin{align*}
        L = \{g \in \cG[S,A] \mid \ & \text{there exist $n\in\nat$, $\alpha_1,\ldots,\alpha_n\in \Ae$, and $\theta_1,\ldots,\theta_n \in \Theta$}\\
        & \text{such that $\tr(g) = \alpha_1 \cdots \alpha_n$, \,$\theta_1 \cdots \theta_n \in \cB(S,g)$, and}\\
        & g \models \phi'\} \enspace,
      \end{align*}
      i.e.,
           \begin{align*}
        L = \{g \in \cG[S,A] \mid \ & \text{there exist $n\in\nat$ and $\theta_1,\ldots,\theta_n \in \Theta$}\\
        & \text{such that $\theta_1 \cdots \theta_n \in \cB(S,g)$ and}\\
        & g \models \phi'\} \enspace.
           \end{align*}
           By Observation~\ref{ob:beta} this holds if and only if there is a closed formula $\phi' \in \MSOL(S,A)$ such that
\[
L = \{g\in \cG[S,A] \mid g\models \beh \wedge \phi'\} \enspace,
\]
i.e., if and only if $L$ is $\MSOL(S,A)$-definable.
\end{proof}

Thus, by Theorem~\ref{thm:main} and Observation~\ref{obs:msolsubset}, every $S$-recognizable 
graph language is MSO definable, cf. the last paragraph of Section~\ref{sec:automaton-model}.

\begin{theorem}\rm \label{thm:main-string}
For every MSO graph storage type $S$ and alphabet $A$, a string language $L\subseteq A^*$ is $S$-recognizable if and only if $L$ is $\MSOL(S,A)$-definable.
\end{theorem}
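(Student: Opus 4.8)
The plan is to obtain Theorem~\ref{thm:main-string} as an almost immediate consequence of Lemma~\ref{lm:twolang} together with the graph-language version of the BET-theorem, Theorem~\ref{thm:main}. First I would unfold the two definitions involved. On the automaton side, $L \subseteq A^*$ is $S$-recognizable iff $L = L(\cA)$ for some $S$-automaton $\cA$ over $A$; by Lemma~\ref{lm:twolang} this means $L = \{w \in A^* \mid \exists\, g \in \cG[S,w]: g \in \GL(\cA)\}$, and since $g$ is $w$-like precisely when $h_e(\tr(g)) = w$, this set is exactly $h_e(\tr(\GL(\cA)))$. Because a graph language $G \subseteq \cG[S,A]$ is $S$-recognizable exactly when $G = \GL(\cA)$ for some $S$-automaton $\cA$, it follows that $L$ is $S$-recognizable if and only if $L = h_e(\tr(G))$ for some $S$-recognizable graph language $G \subseteq \cG[S,A]$.

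Next I would apply Theorem~\ref{thm:main}, which states that $G \subseteq \cG[S,A]$ is $S$-recognizable iff it is $\MSOL(S,A)$-definable. Substituting this equivalence into the characterization just obtained yields: $L$ is $S$-recognizable iff $L = h_e(\tr(G))$ for some $\MSOL(S,A)$-definable graph language $G \subseteq \cG[S,A]$. Finally, I would invoke the equivalent reformulation of $\MSOL(S,A)$-definability for string languages recorded in Section~\ref{sec:logic-on-string-graphs} (just before Example~\ref{ex:formula}), namely that $L \subseteq A^*$ is $\MSOL(S,A)$-definable iff there is an $\MSOL(S,A)$-definable graph language $G \subseteq \cG[S,A]$ with $L = \{w \in A^* \mid \exists\, g \in \cG[S,w]: g \in G\} = h_e(\tr(G))$. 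This is precisely the last statement, so the two characterizations coincide and the theorem follows.

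I do not expect any real obstacle here: every step is a direct application of a result already established in the excerpt, and the proof is essentially a chain of equivalences \[ L \text{ is $S$-recognizable} \iff L = h_e(\tr(\GL(\cA))) \iff L = h_e(\tr(G)),\ G \text{ $S$-recognizable} \iff L = h_e(\tr(G)),\ G \text{ $\MSOL(S,A)$-definable} \iff L \text{ is $\MSOL(S,A)$-definable}. \] The only point needing a moment's care is the bookkeeping identity $\{w \in A^* \mid \exists\, g \in \cG[S,w]: g \in G\} = h_e(\tr(G))$, which is immediate from the definition of $w$-like graph, and the observation that ranging over all $S$-automata on one side corresponds to ranging over all $S$-recognizable graph languages on the other.
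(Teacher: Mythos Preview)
Your proposal is correct and follows essentially the same route as the paper's proof: both derive the string BET-theorem directly from Lemma~\ref{lm:twolang} and the graph BET-theorem (Theorem~\ref{thm:main}), together with the reformulation of $\MSOL(S,A)$-definability for string languages in terms of $h_e(\tr(G))$. The paper presents this chain of equivalences slightly more tersely, but the structure and the ingredients are identical.
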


\begin{proof}
By Lemma~\ref{lm:twolang}, $L$ is $S$-recognizable if and only if there is an $S$-recognizable graph language $G$ such that 
\[
L=\{w\in A^*\mid \exists \,g\in \cG[S,w]: g\in G \} \enspace.
\]
By definition, $L$ is $\MSOL(S,A)$-definable if and only if there is 
a closed formula $\phi \in \MSOL(S,A)$ such that 
\[
L = \{w \in A^* \mid \exists \,g\in\cG[S,w]: 
\;g\models \beh\wedge \phi\} \enspace.
\]
These statements are equivalent by Theorem~\ref{thm:main}. 
\end{proof}

Thus, for MSO graph storage types $S$, we have expressed the $S$-recognizability 
of string languages over an alphabet $A$ in terms of the special logic $\MSOL(S,A)$, which
is a subset of the standard logical language $\MSOL(\Sigma,\Gamma\cup\Ae)$ on graphs,
by Observation~\ref{obs:msolsubset}. 

In view of Examples~\ref{ex:triv}, \ref{ex:MSO-graph-storage-triv}, 
and~\ref{ex:formula-triv}, we obtain as the special case of Theorem~\ref{thm:main-string} where $S=\TRIV$, that a language $L \subseteq A^*$ is regular if and only if 
there exists a language $L'\subseteq \Ae^*$ such that $L=\mu_e(L')$ and $\edgr(L')$ is $\MSOL(\{*\},\Ae)$-definable. That is very close to, but not the same as, Proposition~\ref{pro:MSO-string-graph}.
It can however easily be checked that all our results 
(except for the closure properties after Lemma~\ref{lm:decomposition2})
are also valid when we forbid $S$-automata to have $e$-transitions, and replace $\Ae$ everywhere by $A$,
and $\mu_e$ by the identity on $A$. Then the corresponding analogue of Theorem~\ref{thm:main-string} for $S=\TRIV$ is exactly Proposition~\ref{pro:MSO-string-graph}.

To end this section we state an easy corollary of Theorem~\ref{thm:main}.

\begin{corollary}\rm \label{cor:boolean-algebra}
Let $S$ be an MSO graph storage type
and $A$ an alphabet. Then the class of all $S$-recognizable graph languages 
$L\subseteq \cG[S,A]$ is a Boolean algebra.
\end{corollary}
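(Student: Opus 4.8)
The plan is to invoke Theorem~\ref{thm:main}, which says that a graph language $L\subseteq\cG[S,A]$ is $S$-recognizable if and only if it is $\MSOL(S,A)$-definable, and then to verify the Boolean-algebra axioms at the level of the defining formulas. Recall that $L$ is $\MSOL(S,A)$-definable precisely when $L=L_\phi$ for some closed $\phi\in\MSOL(S,A)$, where $L_\phi=\{g\in\cG[S,A]\mid g\models\beh\wedge\phi\}$. First I would single out the top element of the algebra, namely $L_{\max}=\{g\in\cG[S,A]\mid g\models\beh\}$: this equals $L_{\true}$, so it is $\MSOL(S,A)$-definable; by Observation~\ref{ob:beta} it coincides with $\{g\in\cG[S,A]\mid\cB(S,g)\neq\emptyset\}$; and, since a string-like graph accepted by an $S$-automaton always carries a (unique) $S$-behaviour, every $S$-recognizable graph language is contained in $L_{\max}$. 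The bottom element is $\emptyset=L_{\false}$.

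Next I would carry out the three closure verifications, each of which is routine. Given $\MSOL(S,A)$-definable languages $L_1=L_{\phi_1}$ and $L_2=L_{\phi_2}$, unwinding the models relationship together with elementary propositional reasoning yields $L_1\cup L_2=L_{\phi_1\vee\phi_2}$, $L_1\cap L_2=L_{\neg(\neg\phi_1\vee\neg\phi_2)}$, and $L_{\max}\setminus L_1=L_{\neg\phi_1}$; in each case the witnessing formula is again a closed formula of $\MSOL(S,A)$ by clause~(3) of the definition of that logic, hence each of these languages is $\MSOL(S,A)$-definable, and so $S$-recognizable. Combined with the fact that $\emptyset$ and $L_{\max}$ belong to the class, this establishes closure under finite union, finite intersection, and complementation relative to $L_{\max}$, so the class is a Boolean algebra.

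I expect the only point requiring genuine care to be the meaning of ``complement'': it must be taken within $L_{\max}$, not within $\cG[S,A]$. Indeed $\cG[S,A]\setminus L_{\max}=\{g\in\cG[S,A]\mid\cB(S,g)=\emptyset\}$ is in general nonempty (a string-like graph can link $g_\init$ to a configuration through intermediate edges that satisfy no instruction $\theta\in\Theta$), yet no $S$-recognizable graph language can meet that set, so $\cG[S,A]\setminus L$ is typically not $S$-recognizable. Once one observes that $L_{\max}$ is the largest $S$-recognizable graph language, the corollary follows.
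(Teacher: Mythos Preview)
Your proof is correct and follows essentially the same approach as the paper: invoke Theorem~\ref{thm:main} and verify the Boolean operations at the level of defining formulas, with $L_{\true}$ as top, $L_{\false}=\emptyset$ as bottom, $\phi_1\vee\phi_2$ for union, and $\neg\phi_1$ for complement relative to $L_{\true}$. You add a bit more detail than the paper (the explicit intersection clause and the discussion of why complementation must be taken within $L_{\max}$ rather than $\cG[S,A]$), but the argument is the same.
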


\begin{proof}
For a closed formula $\phi \in \MSOL(S,A)$, 
we denote the graph language $\{g\in \cG[S,A] \mid g\models \beh \wedge \phi\}$ by $\LL(\phi)$.
The largest $S$-recognizable graph language is $\LL(\true)$, 
which is the set of all string-like graphs over $S$ and $A$ that satisfy $\beh$, 
and the smallest is $\LL(\false)=\emptyset$.
Moreover, for closed formulas $\phi_1,\phi_2 \in \MSOL(S,A)$ we have 
$\LL(\phi_1)\cup \LL(\phi_2)=\LL(\phi_1\vee\phi_2)$ and 
$\LL(\true)\setminus \LL(\phi_1)=\LL(\neg\,\phi_1)$.
\end{proof}

\section{MSO-Expressible Storage Types}
\label{sec:MSO-expres}

We will say that a storage type is \emph{MSO-expressible} if it is isomorphic  
to some MSO graph storage type.
If storage type $S'$ is isomorphic to MSO graph storage type $S$, then they are language equivalent, 
i.e., $S'\text{-REC} = S\text{-REC}$,
and hence Theorem~\ref{thm:main-string} can be viewed as a BET-theorem for $S'$ and the logic $\MSOL(S,A)$.
Thus, we wish to know which storage types are MSO-expressible.
It is not difficult to prove that all well-known concrete storage types (such as the nested stack, the queue, 
the Turing tape, etc.) are MSO-expressible. 
For the storage type Stack of Example~\ref{ex:stackstring} we have shown that in Example~\ref{ex:stack}.
In the following three subsections we prove  three general results.

First, if $S=(C,c_\init,\Theta,m)$ is a storage type
such that (as for MSO graph storage types) $C$ is an
MSO-definable set of graphs, and moreover, for every $\theta\in\Theta$,
the storage transformation $m(\theta)$ is an MSO graph transduction 
(in the sense of~\cite[Chapter~7]{coueng12}), then $S$ is MSO-expressible. 

Second, if $S=(C,c_\init,\{\theta_1,\dots,\theta_n\},m)$ is a storage type
such that the set $C$, together with the binary relations 
$m(\theta_1),\dots,m(\theta_n)$, is an automatic structure 
(in the sense of~\cite{khomin,kus09}), then $S$ is \mbox{MSO-expressible}.

Third, if the storage type $S$ is MSO-expressible, 
then so is the storage type $\rP(S)$ 
of which the storage configurations are pushdowns of storage configurations of $S$
(see, e.g., \cite{gre70,eng86,engvog86,eng91c}).

\subsection{MSO Graph Transductions}
\label{sec:msogratra}

Recall from Section~\ref{sec:pairgraphs} that 
a relation $R\subseteq \cG_{\Sigma,\Gamma}\times \cG_{\Sigma,\Gamma}$ is MSO-expressible 
if there are an alphabet~$\Delta$ and an MSO-definable set of pair graphs 
$H\subseteq \cP\cG_{\Sigma,\Gamma\cup\Delta}$ such that $\rel(H)=R$.
To prove the first general result mentioned in the introduction of this section,
it suffices, by Observation~\ref{ob:isom}, to show that all MSO-definable graph transductions are MSO-expressible.

To define MSO graph transductions
we recall basic notions from~\cite[Section~7.1]{coueng12} and apply appropriate (small) modifications 
to restrict them to graphs (as in~\cite{bloeng00}).
As in Section~\ref{sec:graphs-MSO}, we denote 
$\{\varphi \in \MSOL(\Sigma,\Gamma)\mid \Free(\varphi)\subseteq\cV\}$ 
by $\MSOL(\Sigma,\Gamma,\cV)$.

Let $\cV$ be a set of node-set variables, called \emph{parameters}, and 
let $x$ and $x'$ be two distinct node variables. 
An \emph{MSO graph transducer over $(\Sigma,\Gamma,\cV)$} is a tuple $T = (\chi,D,\Psi,\Phi)$ where $\chi$ is an MSO-logic formula in $\MSOL(\Sigma,\Gamma,\cV)$ (\emph{domain formula}), $D$ is a finite set (of \emph{duplicate names}), $\Psi = (\psi_{\sigma,d}(x) \mid \sigma \in \Sigma, d \in D)$ is a family of MSO-logic formulas where each $\psi_{\sigma,d}(x)$ is in $\MSOL(\Sigma,\Gamma,\cV\cup\{x\})$ (\emph{node formulas}), and $\Phi = (\phi_{\gamma,d,d'}(x,x') \mid \gamma \in \Gamma, d,d' \in D)$ 
is a family of MSO-logic formulas where each $\phi_{\gamma,d,d'}(x,x')$ is in $\MSOL(\Sigma,\Gamma,\cV\cup\{x,x'\})$ (\emph{edge formulas}).

The \emph{graph transduction $\sem{T}\subseteq \cG_{\Sigma,\Gamma} \times \cG_{\Sigma,\Gamma}$ induced by $T$} is defined as follows. Let $g_1 = (V_1,E_1,\nlab_1)$ be in $\cG_{\Sigma,\Gamma}$,
and let $\rho$ be a $\cV$-valuation on $g_1$ such that $(g_1,\rho)\models \chi$. Then $\sem{T}$
contains the pair $(g_1,g_2)$, where the graph $g_2=(V_2,E_2,\nlab_2)$ is defined by 
\begin{itemize}
\item $V_2 = \{(d,u)   \mid d \in D, u \in V_1,  \text{ and there is exactly one } \sigma \in \Sigma \text{ such that } (g_1,\rho,u) \models \psi_{\sigma,d}(x)\}$,

\item $E_2 = \{((d,u),\gamma,(d',u'))   \mid (d,u), (d',u') \in V_2, \text{ and  } (g_1,\rho,u,u') \models \phi_{\gamma,d,d'}(x,x')\}$,

\item $\nlab_2 = \{((d,u),\sigma) \mid (d,u) \in V_2, \sigma \in \Sigma,  \text{ and } (g_1,\rho,u) \models \psi_{\sigma,d}(x)\}$.
\end{itemize}
Clearly, the graph $g_2$ is uniquely determined by $g_1$ and $\rho$, 
and will therefore be denoted by $T(g_1,\rho)$.
Thus, 
\[
\sem{T}=\{(g,T(g,\rho)))\mid g\in\cG_{\Sigma,\Gamma}, \rho \text{ is a $\cV$-valuation on } g, 
\text{ and } (g,\rho)\models \chi\} \enspace.
\]

A relation $R\subseteq \cG_{\Sigma,\Gamma} \times \cG_{\Sigma,\Gamma}$ is an \emph{MSO graph transduction} if there is an MSO graph transducer $T$ such that $R = \sem{T}$.

It is easy to see that the class of MSO-expressible graph relations is closed under inverse,
i.e., if $R$ is MSO-expressible, then $R^{-1}$ is also MSO-expressible
(just reverse the $\nu$-edges).
Since the class of MSO graph transductions is not closed under inverse, 
this shows that there exist MSO-expressible graph relations 
that are not MSO graph transductions.\footnote{The relation 
$R=\{(\edgr(w),\edgr(\epsilon))\mid w\in \Gamma^*\} 
\subseteq \cG_{\{*\},\Gamma}\times \cG_{\{*\},\Gamma}$ 
is an MSO graph transduction. Since $\sem{T}(g)$ is finite for every MSO graph transducer $T$ 
and every input graph~$g$, $R^{-1}$ is not an MSO graph transduction.
\label{foot:inverse}}

\begin{theorem}\rm \label{thm:MSO-graph-transduction-MSO-expressible}
Every MSO graph transduction is MSO-expressible. 
\end{theorem}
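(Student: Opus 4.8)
The plan is to compile an MSO graph transducer $T=(\chi,D,\Psi,\Phi)$ over $(\Sigma,\Gamma,\cV)$ directly into a closed MSO formula that defines a set of pair graphs whose associated relation equals $\sem{T}$. I would take $\Delta=D$ as the alphabet of extra edge labels used on the intermediate edges of pair graphs. The point is that every node of an output graph $T(g_1,\rho)$ has the shape $(d,u)$ with $d\in D$ and $u\in V_{g_1}$, so it ``originates'' from the input node $u$ via the duplicate name $d$; this is recorded by a $d$-labelled intermediate edge from $u$ to $(d,u)$. The parameter valuation $\rho$ need not be stored in the graph: since the first component of the pair graph will be a faithful copy of $g_1$, the valuation $\rho$ can be recovered by existentially quantifying, inside the formula, over node-set variables confined to that first component.

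Concretely, for each $(g_1,g_2)\in\sem{T}$ I would fix a $\cV$-valuation $\rho$ on $g_1$ with $(g_1,\rho)\models\chi$ and $g_2=T(g_1,\rho)$, and let $h$ be the pair graph over $(\Sigma,\Gamma\cup\Delta)$ with first component a disjoint copy of $g_1$, second component $g_2$, a $d$-edge from $u$ to $(d,u)$ for every node $(d,u)$ of $g_2$, and all $\nu$-edges from the first to the second component. Then $\pair(h)=(g_1,g_2)$ (the output $g_2$ is a nonempty loop-free graph because $(g_1,g_2)\in\sem{T}\subseteq\cG_{\Sigma,\Gamma}\times\cG_{\Sigma,\Gamma}$, which is also what makes $h$ a legitimate pair graph). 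Letting $H$ be the set of all pair graphs obtained in this way, over all pairs in $\sem{T}$ and all witnessing valuations, we get $\rel(H)=\sem{T}$, so it remains to produce a closed formula $\Theta_T$ with $L(\Theta_T)=H$.

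For this I would take $\Theta_T=\exists X_1,X_2.\,\exists P_1,\dots,P_k.\,\psi$, where $P_1,\dots,P_k$ are fresh node-set variables enumerating the parameters of $\cV$, and $\psi$ is the conjunction of: (i) $(X_1,X_2)$ is the ordered partition of a pair graph, no $\Gamma$-edge joins $X_1$ and $X_2$, and every $\Delta$-edge runs from $X_1$ to $X_2$; (ii) each $P_i\subseteq X_1$, and the relativization $\chi|_{X_1}$ holds under $P_1,\dots,P_k$; (iii) for every $d\in D$ and $u\in X_1$, the node $u$ has an outgoing $d$-edge iff there is exactly one $\sigma\in\Sigma$ with $(\psi_{\sigma,d})|_{X_1}(u)$, such a $d$-successor of $u$ is unique and lies in $X_2$ with label $\sigma$, and every node of $X_2$ has exactly one incoming $\Delta$-edge; (iv) for all $\gamma\in\Gamma$, $d,d'\in D$ and all $u,u'\in X_1$, $w,w'\in X_2$ with $\edge_d(u,w)$ and $\edge_{d'}(u',w')$, we have $\edge_\gamma(w,w')$ iff $(\phi_{\gamma,d,d'})|_{X_1}(u,u')$. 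Since $D,\Sigma,\Gamma$ are finite, all the conjunctions and disjunctions above are finite, so $\psi$ is a genuine MSO formula; and because $\chi,\psi_{\sigma,d},\phi_{\gamma,d,d'}$ mention only $\Gamma$-edge and node-label atoms, their relativizations to $X_1$ evaluate on the induced subgraph $h[X_1]$ — that is, on $g_1$ with valuation $(P_1,\dots,P_k)$ — by the relativization property recalled in Section~\ref{sec:graphs-MSO}.

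It then remains to verify $L(\Theta_T)=H$. For ``$\subseteq$'', if $h\models\Theta_T$ with witnesses $X_1,X_2,P_1,\dots,P_k$, set $g_1=h[X_1]$ and let $\rho$ be the valuation given by the $P_i$; then (ii) gives $(g_1,\rho)\models\chi$, and (iii)--(iv) say exactly that the intermediate $\Delta$-edges determine a bijection from $X_2$ onto $\{(d,u)\mid d\in D,\ u\in V_{g_1},\ (g_1,\rho,u)\models\psi_{\sigma,d}(x)\text{ for exactly one }\sigma\}$ under which $h[X_2]$ becomes an isomorphic copy of $T(g_1,\rho)$; hence $\pair(h)=(g_1,T(g_1,\rho))\in\sem{T}$ and $h\in H$. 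The inclusion ``$\supseteq$'' is immediate from the construction of $H$. I expect the only real work to be the bookkeeping in (iii) and (iv): phrasing the ``exactly one $\sigma$'' node-existence test and the existence-and-uniqueness of the origin edges so that the $\Delta$-edges genuinely encode the map $(d,u)\mapsto u$; conceptually there is no obstacle, and the soundness of recovering $\rho$ by quantification over the first component is precisely the relativization property.
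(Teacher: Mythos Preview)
Your proposal is correct and follows essentially the same approach as the paper: both encode the origin map $(d,u)\mapsto u$ by $D$-labelled intermediate edges (taking $\Delta=D$), recover the parameter valuation by existentially quantifying over node-set variables confined to the first component, and use relativizations $\chi|_{X_1}$, $(\psi_{\sigma,d})|_{X_1}$, $(\phi_{\gamma,d,d'})|_{X_1}$ to reproduce the transducer's formulas on the input copy. Your write-up is in fact a bit more explicit than the paper's in spelling out both inclusions of $L(\Theta_T)=H$ and in noting that $g_2$ is nonempty and loop-free because $(g_1,g_2)\in\cG_{\Sigma,\Gamma}\times\cG_{\Sigma,\Gamma}$.
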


\begin{proof}
Let $R\subseteq \cG_{\Sigma,\Gamma} \times \cG_{\Sigma,\Gamma}$ be an MSO graph transduction, and 
let $T = (\chi,D,\Psi,\Phi)$ be an MSO graph transducer over $(\Sigma,\Gamma,\cV)$ such that $\sem{T}=R$,
with $\Psi = (\psi_{\sigma,d}(x) \mid \sigma \in \Sigma, d \in D)$ and 
$\Phi = (\phi_{\gamma,d,d'}(x,x') \mid \gamma \in \Gamma, d,d' \in D)$.
We define a formula $\phi$ in $\MSOL(\Sigma,\Gamma\cup D\cup\{\nu\})$ such that 
$\LL(\varphi)\subseteq \cP\cG_{\Sigma,\Gamma\cup D}$ and $\rel(\LL(\phi))=R$, as follows.

The set $\LL(\phi)$ consists of all pair graphs $h_{g,\rho}$ such that $(g,\rho)\models \chi$.
We will denote $g$ by $g_1$ and $T(g,\rho)$ by~$g_2$.
The graph $h_{g,\rho}$ is the disjoint union of $g_1$ and $g_2$,
with $\nu$-edges from every node of $g_1$ to every node of~$g_2$, and, moreover, 
with the following $D$-edges from nodes of $g_1$ to nodes of~$g_2$: 
there is a $d$-edge from node $u$ of~$g_1$ to node $(d,u)$ of $g_2$
if there is exactly one $\sigma \in \Sigma$ such that $(g_1,\rho,u) \models \psi_{\sigma,d}(x)$.
The $D$-edges can be called ``origin edges'', because they indicate for every node $(d,u)$ of $g_2$
that it originates from the node $u$ of $g_1$, cf. the ``origin semantics'' of 
MSO graph transductions (see, e.g., \cite{boj14,bojdavguipen17,bosmuspenpup18}). 

The formula $\phi$ is built from the formulas of $T$. We now describe the graphs $h=h_{g,\rho}$
in such a way that the existence of $\phi$ should be clear. If $\cV=\{Y_1,\dots,Y_n\}$, then $\phi$
is of the form $\exists Y_1,\dots,Y_n. \, \psi$ where $\psi$ expresses the following.
\begin{itemize}
\item The set of nodes of $h$ is partitioned into two nonempty sets $X_1$ and $X_2$ 
(node-set variables that correspond to $V_{g_1}$ and $V_{g_2}$), such that $h$ is a pair graph 
with ordered partition $(X_1,X_2)$. The sets $Y_1,\dots,Y_n$ are subsets of~$X_1$. 
\item There are no $D$-edges between nodes of $X_1$ or between nodes of $X_2$,
and there are no $\Gamma$-edges between nodes of $X_1$ and nodes of $X_2$. 
\item The subgraph of $h$ induced by $X_1$ satisfies the formula $\chi$, i.e.,
$h$ satisfies the relativized formula $\chi|_{X_1}$. 
\item For every $y\in X_2$ there are a unique $d\in D$ and a unique $x\in X_1$ such that $\edge_d(x,y)$, and 
for every $x\in X_1$ and $d\in D$ there is at most one $y\in X_2$ such that $\edge_d(x,y)$.
Thus, intuitively, $\edge_d(x,y)$ means that $y$ is the node $(d,x)$ of $g_2$ 
(or, in other words, that $x$ is the ``origin'' of $y$, and $y$ is the $d$-th duplicate of $x$).
\item For every $x\in X_1$ and $d\in D$, there is a $y\in X_2$ such that $\edge_d(x,y)$ if and only if
there is exactly one $\sigma\in \Sigma$ such that $\psi_{\sigma,d}(x)|_{X_1}$. 
\item If $\edge_d(x,y)$ and $\edge_{d'}(x',y')$, then there is a $\gamma$-edge from $y$ to $y'$ 
if and only if $\phi_{\gamma,d,d'}(x,x')|_{X_1}$.
\item If $\edge_d(x,y)$, then $y$ has label $\sigma$ if and only if $\psi_{\sigma,d}(x)|_{X_1}$. \qedhere
\end{itemize}
\end{proof}

Let us say that a storage type is \emph{MSO-definable}
if it is isomorphic to a storage type $S=(C,c_\init,\Theta,m)$ such that $C$ is an 
MSO-definable set of graphs, and 
$m(\theta)$ is an MSO graph transduction for every $\theta\in\Theta$.

\begin{corollary}\rm\label{cor:defexp}
Every MSO-definable storage type is MSO-expressible.
\end{corollary}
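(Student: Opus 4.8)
The plan is to invoke Theorem~\ref{thm:MSO-graph-transduction-MSO-expressible} on each instruction separately and then to amalgamate the resulting MSO descriptions into one MSO graph storage type, following the bookkeeping already sketched in Section~\ref{sec:graphstorage}. Since MSO-expressibility is preserved under isomorphism of storage types and ``is isomorphic to'' is transitive, I may assume without loss of generality that $S=(C,c_\init,\Theta,m)$ is presented so that $C=L(\phi_\rc)$ for some closed $\phi_\rc\in\MSOL(\Sigma,\Gamma)$, that $c_\init=g_\init\in C$, and that $m(\theta)$ is an MSO graph transduction for every $\theta\in\Theta$. By Theorem~\ref{thm:MSO-graph-transduction-MSO-expressible}, for each $\theta\in\Theta$ there are an alphabet $\Delta_\theta$ and a closed formula $\phi_\theta\in\MSOL(\Sigma,\Gamma\cup\Delta_\theta\cup\{\nu\})$ with $L(\phi_\theta)\subseteq\cP\cG_{\Sigma,\Gamma\cup\Delta_\theta}$ and $\rel(L(\phi_\theta))=m(\theta)$. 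Renaming the labels in the $\Delta_\theta$ if necessary, I may take them to be pairwise disjoint and disjoint from $\Gamma$, and I put $\Delta=\bigcup_{\theta\in\Theta}\Delta_\theta$.

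Next I would lift everything to the single edge alphabet $\Gamma\cup\Delta$. For the configurations I take $\phi_\rc'=\phi_\rc\wedge\forall x,y.\bigwedge_{\delta\in\Delta}\neg\,\edge_\delta(x,y)$; read over $(\Sigma,\Gamma\cup\Delta)$ this still defines exactly $C$ (every abstract graph of $C$ carries only $\Gamma$-edges), and $g_\init\in L(\phi_\rc')$. For an instruction $\theta$ I replace $\phi_\theta$ by $\phi_\theta\wedge\forall x,y.\bigwedge_{\delta\in\Delta\setminus\Delta_\theta}\neg\,\edge_\delta(x,y)$, still a closed formula over $(\Sigma,\Gamma\cup\Delta\cup\{\nu\})$. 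Reading $\phi_\theta$ over the larger alphabet can only add to its model class graphs that carry labels outside $\Gamma\cup\Delta_\theta\cup\{\nu\}$, and the extra conjunct excludes precisely those; hence the new formula still defines the set $L(\phi_\theta)\subseteq\cP\cG_{\Sigma,\Gamma\cup\Delta}$, and applying $\rel$ still yields $m(\theta)$ — the alphabet change is invisible to $\rel$ because $\rel$ only inspects the subgraphs induced by the two components of a pair graph, and these contain neither intermediate edges nor $\nu$-edges. Finally, to turn $\Theta'=\{\phi_\theta\mid\theta\in\Theta\}$ into an exclusive set, I would apply the construction described just after the definition of MSO graph storage type in Section~\ref{sec:graphstorage}: add a fresh edge label $d_\theta$ for each $\theta$ to $\Delta$, insert $d_\theta$-edges from every node of the first component to every node of the second component in each pair graph of $L(\phi_\theta)$, and adjust $\phi_\theta$ and $\phi_\rc'$ accordingly; since $\rel$ ignores intermediate edges this leaves every $m(\theta)$ unchanged, while now the pair graphs of distinct instructions disagree on the presence of $d_\theta$-edges, so the set of instructions is exclusive.

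Putting this together, $S^*=(\phi_\rc',g_\init,\Theta')$ is an MSO graph storage type over $(\Sigma,\Gamma\cup\Delta)$, and the identity on $C$ together with the bijection $\theta\mapsto\phi_\theta$ is an isomorphism from $(C,c_\init,\Theta,m)$ to $S^*$; as $S$ is isomorphic to $(C,c_\init,\Theta,m)$, it is isomorphic to $S^*$, i.e.\ MSO-expressible. I expect the only genuinely non-routine ingredient to be Theorem~\ref{thm:MSO-graph-transduction-MSO-expressible} itself, which is already proved; the rest is alphabet bookkeeping of the kind anticipated in Section~\ref{sec:graphstorage}, and the single point needing care is to check that enlarging the edge alphabet (and adding the exclusiveness labels) does not silently change $L(\phi_\rc)$, the sets $L(\phi_\theta)$, or the relations $\rel(L(\phi_\theta))$.
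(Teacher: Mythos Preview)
Your proposal is correct and follows exactly the route the paper intends: the corollary is stated without proof in the paper because it is meant to be immediate from Theorem~\ref{thm:MSO-graph-transduction-MSO-expressible} together with the alphabet-extension and exclusiveness bookkeeping already described in Section~\ref{sec:graphstorage}, and you have simply spelled out those details carefully.
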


It is not difficult to see that the storage transformations of the MSO graph storage type STACK 
of Example~\ref{ex:stack} are in fact MSO graph transductions, with the set of duplicate names $D=\{d,d'\}$. 
For $\pop$-, $\movedown$-, and $\moveup$-operations this should be clear 
from Figures~\ref{fig:pop-alpha} and~\ref{fig:moveup-beta} (for which $d'$ is not needed). 
For $\push$-operations it should be clear after adding, in Figure~\ref{fig:push-alpha-beta}, 
an edge with label~$d'$ from the node with label~$\overline{\beta}$ 
to the node with label $\overline{\alpha}$. 
Thus, the storage type Stack of Example~\ref{ex:stackstring} is even MSO-definable. 

The results of this (sub)section can easily be generalized to ``$k$-dimensional'' MSO graph transductions
(for $k\geq 2$), which are defined in the same way as MSO graph transductions, but with the variable $x$ 
replaced by the sequence of variables $x_1,\dots,x_k$ (and similarly for $x'$). For an input graph $g_1$
and a valuation~$\rho$, the nodes of the output graph $g_2$ are now of the form $(d,u_1,\dots,u_k)$ 
such that $d\in D$ and $u_1,\dots,u_k$ is a $k$-tuple of nodes of $g_1$. 
In the proof of Theorem~\ref{thm:MSO-graph-transduction-MSO-expressible} we use edge labels 
from $D\times[k]$ instead of $D$; in a pair graph $h_{g_1,\rho}$ there is a $\tup{d,i}$-edge 
from node $u_i$ of~$g_1$ to node $(d,u_1,\dots,u_k)$ of~$g_2$ (for all $i\in[k]$) 
if there is exactly one $\sigma \in \Sigma$ such that 
$(g_1,\rho,u_1,\dots,u_k) \models \psi_{\sigma,d}(x_1,\dots,x_k)$.
Such $k$-dimensional MSO graph transductions were recently studied for strings in~\cite{bojkielho19}.

\subsection{Automatic Structures}
\label{sec:auto}

Roughly speaking, the MSO graph storage type is a generalization of the automatic structure
with binary relations only. The next definition of an automatic structure is taken from~\cite{khomin},
see also~\cite{kus09,rub08}. 

A (logical) \emph{structure} is a tuple $S=(C; R_1,\dots,R_n)$ where $C$ is a nonempty set,
called the domain of~$S$, and $R_1,\dots,R_n$ are relations on $C$, called the basic relations of $S$
(where a relation on $C$ is a subset of~$C^k$ for some $k\geq 1$, 
in which case it is called a $k$-ary relation). 
The structure $S$ is \emph{automatic} if there is an alphabet $\Omega$ such that 
its domain $C$ is a regular language over $\Omega$ and each of its basic relations $R_i$ is 
a regular relation on $\Omega^*$. This means that its domain and basic relations are recognized 
by finite automata. For relations this is defined as follows. 

To define regular relations, we need to define
the convolution of strings. Let $w_1,\dots,w_k$ be strings over $\Omega$,
let $\ell=\max\{|w_1|,\dots,|w_k|\}$ be the maximum of their lengths, and let $w'_j=w_j\#^{\ell-|w_j|}$
for every $j\in[k]$, where $\#$ is a new symbol. 
Thus, we append the symbol $\#$ to the end of $w_j$ as many times as necessary to make
the padded version $w'_j$ of $w_j$ have length $\ell$. 
The \emph{convolution of $w_1,\ldots,w_k$}, denoted 
by $\con(w_1,\dots,w_k)$, is the string $w$ over $(\Omega\cup\{\#\})^k$
of length $\ell$ such that $w(i)=\tup{w'_1(i),\dots,w'_k(i)}$ for every $i\in[\ell]$
(where $w(i)$ denotes the $i$-th symbol of $w$). 
Clearly, $\con$ is injective, i.e., $\con(w_1,\dots,w_k)$ uniquely represents the sequence $(w_1,\dots,w_k)$.

A $k$-ary relation $R$ on $\Omega^*$ is called \emph{regular} if 
its convolution $\con(R)$ is a regular language over $(\Omega\cup\{\#\})^k$.\footnote{For $k=2$ these relations are also called (left-)synchronous rational relations, cf.~\cite{car09,frosak93}. 
They are a proper subclass of the rational relations, which are the relations computed by (one-way) finite-state transducers. Recently, the relations computed by two-way deterministic finite-state transducers are called regular relations (e.g., in~\cite{bojkielho19}).}

\begin{example} \rm\label{ex:regrel}
Let $\Omega=\{\alpha,\beta,\gamma\}$, and let $R$ be the binary relation 
$\{(w,w\alpha\beta^n)\mid w\in \Omega^*,n\in\nat\}$
(intuitively, $R$ pushes an arbitrary string $\alpha\beta^n$ on the pushdown~$w$). 
Then $\con(R)$ is the regular language
\[
\{\tup{\alpha,\alpha},\tup{\beta,\beta},\tup{\gamma,\gamma}\}^*\tup{\#,\alpha}\tup{\#,\beta}^*
\]
and so, $R$ is a regular relation. 
\qed
\end{example}

\sloppy We will say that a storage type $S=(C,c_\init,\{\theta_1,\dots,\theta_n\},m)$ is \emph{automatic} 
if the structure $(C;m(\theta_1),\dots,m(\theta_n))$ is automatic.
It is straightforward to show that the storage type Stack of Example~\ref{ex:stackstring} is automatic. 
The storage type of the Turing machine is also automatic (cf.~\cite[Example~11]{khomin}).
 
The next theorem shows how MSO graph storage types generalize automatic structures with binary relations. 

\begin{theorem}\rm\label{thm:automatic}
Every automatic storage type is MSO-expressible.
\end{theorem}

\begin{proof}
In short, we define for every pair of strings $(w_1,w_2)$ a pair graph $p(w_1,w_2)$ 
such that $\pair(p(w_1,w_2))=(\edgr(w_1),\edgr(w_2))$ and the $i$-th node of $\edgr(w_1)$ has an 
intermediate edge to the $i$-th node of $w_2$ (as far as these nodes exist). Then 
there is an MSO graph transduction $f$ that translates $p(w_1,w_2)$ into 
$\edgr(\con(w_1,w_2))$. That implies, for every regular binary relation $R$, 
that the set of pair graphs $p(R)$ equals $f^{-1}(\edgr(\con(R)))$ and hence is MSO-definable.
Here is the long version of the proof.

Let $S=(C,c_\init,\Theta,m)$ be an automatic storage type, over the alphabet $\Omega$.
We first assume, for simplicity, that $\Theta$ is a singleton, i.e., $\Theta=\{\theta\}$.
We define an MSO graph storage type $S'$ that is isomorphic to~$S$. 
It is an MSO graph storage type over $(\Sigma,\Gamma)$, 
where $\Sigma=\{*\}$ and $\Gamma=\Omega\cup\{d\}$. 
It has the set of storage configurations $\edgr(C)$ and the initial storage configuration $\edgr(c_\init)$. 
Since $S$ is automatic, $C$ is a regular language over $\Omega$.
Hence $\edgr(C)$ is $\MSOL(\{*\},\Omega)$-definable by Proposition~\ref{pro:MSO-string-graph}, 
and so $\MSOL(\Sigma,\Gamma)$-definable.

For every binary relation $R\subseteq \Omega^*\times\Omega^*$ we define
\[ 
\edgr(R)=\{(\edgr(w_1),\edgr(w_2))\mid (w_1,w_2)\in R\}.
\]
Since $\edgr$ is a bijection between $C$ and $\edgr(C)$, 
it remains to define an $\MSOL(\{*\},\Omega\cup\{d,\nu\})$-definable 
set of pair graphs $\LL(\theta)\subseteq \cP\cG_{\Sigma,\Gamma}$ such that 
$\rel(\LL(\theta))=\edgr(m(\theta))$: taking the formula defining $\LL(\theta)$ as 
the unique instruction of $S'$, it is clear that $S'$ is isomorphic to $S$.
Since $S$ is automatic, $m(\theta)$ is a regular relation on $\Omega^*$.
Hence $\con(m(\theta))$ is a regular language over $(\Omega\cup\{\#\})^2$, and so 
$\edgr(\con(m(\theta)))$ is $\MSOL(\{*\},\Omega)$-definable by Proposition~\ref{pro:MSO-string-graph}. 

For strings $w_1,w_2\in\Omega^*$ we will represent the pair $(w_1,w_2)$, 
and hence their convolution $\con(w_1,w_2)$, by the pair graph $p(w_1,w_2)$ defined as follows. 
Let $w_1=a_1\cdots a_k$ and $w_2=b_1\cdots b_\ell$. 
Then $p(w_1,w_2)$ is the graph $h$ over $(\{*\},\Omega\cup\{d,\nu\})$ such that 
\begin{figure}[t]
  \begin{center}
   \includegraphics[scale=0.3]{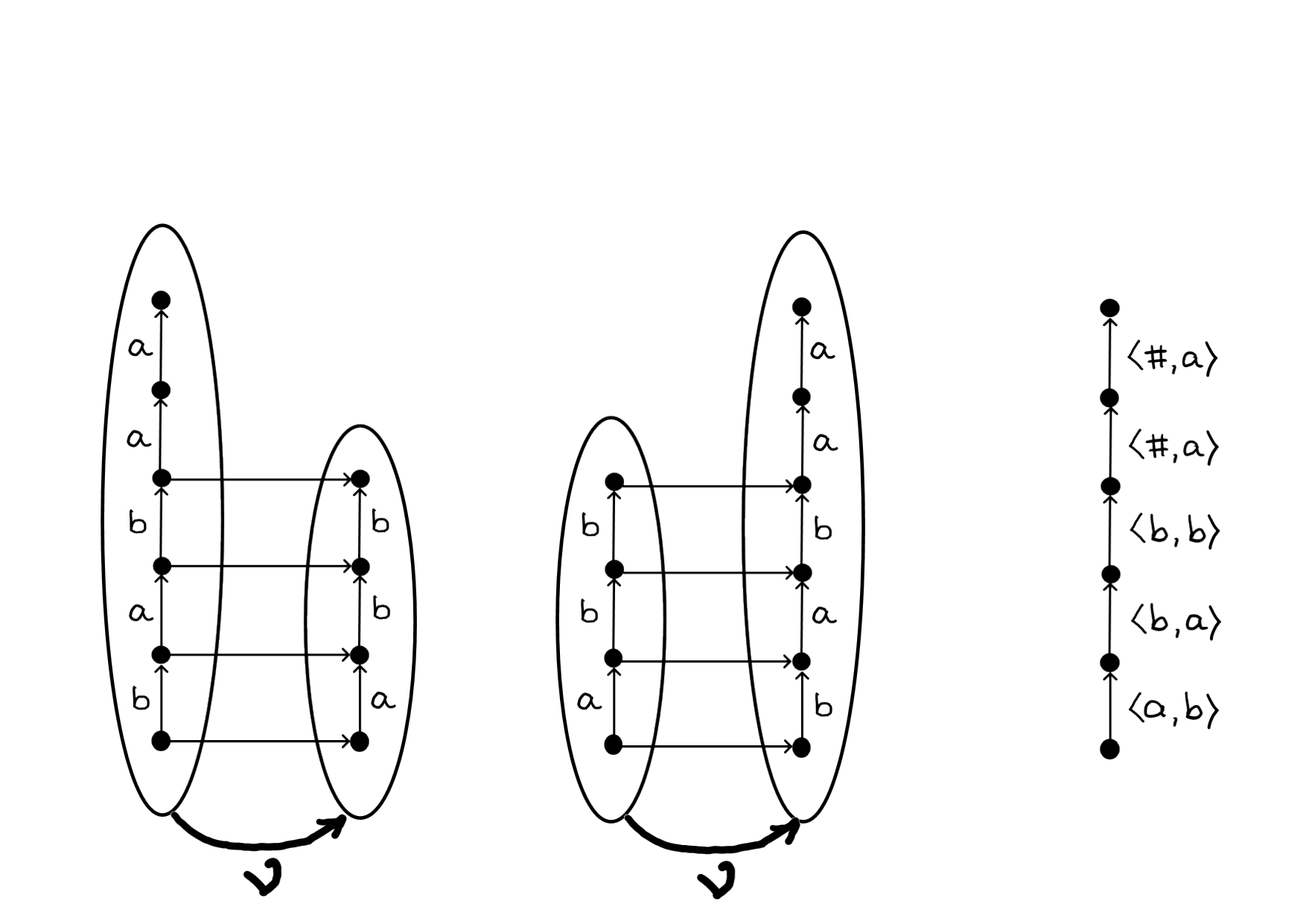}
    \end{center}
\caption{\label{fig:convol} The pair graphs $p(babaa,abb)$ (to the left) 
and $p(abb,babaa)$ (in the middle), and the string graph $\edgr(\con(abb,babaa))$ (to the right). 
All nodes have label $*$ and all straight horizontal edges have label~$d$.}
\end{figure}
$V_h=V_1\cup V_2$ where $V_1=\{u_1,\dots,u_{k+1}\}$ 
and $V_2=\{v_1,\dots,v_{\ell+1}\}$ 
and $E_h$ consists of 
\begin{itemize}
\item the edges $(u_i,a_i,u_{i+1})$ for every $i\in[k]$,
and the edges $(v_i,b_i,v_{i+1})$ for every $i\in[\ell]$,
\item the intermediate edges $(u_i,d,v_i)$ for every $i\in[\min\{k+1,\ell+1\}]$, and 
\item the edges $(u_i,\nu,v_j)$ for every $i\in[k+1]$ and $j\in[\ell+1]$.
\end{itemize}
It should be clear from the first item that $\pair(h)=(\edgr(w_1),\edgr(w_2))$, and 
from the third item that $h$ is a pair graph with the ordered partition $(V_1,V_2)$.
Two examples are shown in Figure~\ref{fig:convol}.

For every $R\subseteq \Omega^*\times\Omega^*$ we define
$p(R)=\{p(w_1,w_2)\mid (w_1,w_2)\in R\}$, and we denote 
$p(\Omega^*\times\Omega^*)$ by~$P_\Omega$.
Obviously, if $R\subseteq \Omega^*\times\Omega^*$, then 
$p(R)\subseteq P_\Omega$ and $\rel(p(R))=\edgr(R)$.

We now define $\LL(\theta)=p(m(\theta))$. So $\rel(\LL(\theta))=\edgr(m(\theta))$, as required.

Let $f$ be the function with domain $P_\Omega$ such that 
\[
f(p(w_1,w_2))=\edgr(\con(w_1,w_2))
\]
for all $w_1,w_2\in\Omega^*$. 
As an example, in Figure~\ref{fig:convol}, 
$f$ transforms the graph in the middle into the graph to the right. 
Since $\con$ and $\edgr$ are injective, 
\[
\LL(\theta)=f^{-1}(\edgr(\con(m(\theta)))).
\] 
Since $\edgr(\con(m(\theta)))$ is MSO-definable (as observed above)
and since MSO-definability is preserved by inverse MSO graph transductions
(see, e.g., \cite[Corollary~7.12]{coueng12}),
it now suffices to prove that $f$ is an MSO graph transduction
(see Section~\ref{sec:msogratra}). That is a straightforward exercise;
here are some details. 

We define an MSO graph transducer $T = (\chi,D,\Psi,\Phi)$ such that $\sem{T}=f$. 
In fact, $f$ is even a ``parameterless'' and ``noncopying'' MSO graph transduction, 
which means that its set $\cV$ of parameters is empty, and 
its set $D$ of duplicate names is a singleton. 
Hence, $D$ will be disregarded, and the duplicate names $d$ and~$d'$ 
will be dropped from the formulas of $T$.  

The MSO graph transducer $T$ is over $(\Sigma,\Gamma',\emptyset)$ where $\Sigma=\{*\}$ and 
$\Gamma'=\Omega\cup\{d,\nu\}\cup(\Omega\cup\{\#\})^2$.

It should be clear (e.g., from Examples~\ref{ex:pairgraphs} and~\ref{ex:stack}) 
that the domain $P_\Omega$ of $f$ is MSO-definable, 
by some formula~$\chi$ which we take as the domain formula of $T$. 
Let $P_{\Omega,1}=\{p(w_1,w_2)\mid |w_1|>|w_2|\}$ and $P_{\Omega,2}=\{p(w_1,w_2)\mid |w_1|\leq|w_2|\}$.
Then $P_{\Omega,1}$ is defined by the formula 
\[
\chi_1=\chi\wedge \exists x(\exists y\,.\edge_\nu(x,y)\wedge \neg \exists z.\,\edge_d(x,z)),
\]
and $P_{\Omega,2}$ by the formula $\chi_2=\chi\wedge \neg \chi_1$.

The (unique) node formula $\psi_*(x)$ of $T$ is defined by
\[
\begin{array}{lll}
\psi_*(x) & = & (\chi_1\wedge \exists y.\,\edge_\nu(x,y)) \hspace{1.5mm} \vee \\[1mm]
&& (\chi_2\wedge \exists y.\,\edge_\nu(y,x)).
\end{array}
\]
Thus, the set of nodes of $\sem{T}(p(w_1,w_2))$ is $V_1$ if $|w_1|>|w_2|$, and $V_2$ if $|w_1|\leq|w_2|$. 

For $a,b\in\Omega$, the edge formula $\phi_{\tup{a,b}}(x,x')$ is defined to be
\[
\begin{array}{l}
(\chi_1\wedge \edge_a(x,x') \wedge 
  \exists y,y'.(\edge_d(x,y)\wedge\edge_d(x',y')\wedge\edge_b(y,y'))) \hspace{1.5mm} \vee \\[1mm]
(\chi_2\wedge \edge_b(x,x') \wedge 
     \exists y,y'.(\edge_d(y,x)\wedge\edge_d(y',x')\wedge\edge_a(y,y'))),
\end{array}
\]
and the edge formulas $\phi_{\tup{a,\#}}(x,x')$ and $\phi_{\tup{\#,b}}(x,x')$ 
are defined by
\[
\begin{array}{lll}
\phi_{\tup{a,\#}}(x,x') & = & \chi_1 \wedge \edge_a(x,x')\wedge \neg \exists y'.\edge_d(x',y') \\[1mm]
\phi_{\tup{\#,b}}(x,x') & = & \chi_2 \wedge \edge_b(x,x')\wedge \neg \exists y'.\edge_d(y',x').
\end{array}
\]
This ends the definition of $T$.
It should be clear that 
\[
\sem{T}=\{(p(w_1,w_2),\edgr(\con(w_1,w_2)))\mid w_1,w_2\in\Omega^*\},
\]
i.e., $\sem{T}=f$. 

In the above proof we have assumed that $\Theta$ is a singleton. The proof for the general case
is exactly the same, except that since the sets $\LL(\theta)$ must be exclusive 
(i.e., if $\theta\neq \theta'$ then $\LL(\theta)\cap \LL(\theta')=\emptyset$), 
we replace the intermediate edge label $d$ by $d_\theta$, for each $\theta\in\Theta$. 
Thus, we now have $\Gamma=\Omega\cup\{d_\theta\mid \theta\in\Theta\}$, 
and the mappings $p$ and $f$ depend on $\theta$.  
\end{proof}

Note that not every automatic storage type $S$ is MSO-definable (as defined before Corollary~\ref{cor:defexp}).
If $S$~has an instruction $\theta$ such that $m(\theta)$ is the relation~$R$ of Example~\ref{ex:regrel},
then $S$ is not MSO-definable because for every MSO graph transducer~$T$ and every input graph $g$ 
the set of output graphs $\sem{T}(g)$ is finite (also in the $k$-dimensional case), 
cf. footnote~\ref{foot:inverse}.

A structure is \emph{automatic-representable} if it is isomorphic to an automatic structure.
(In fact, automatic-representable structures are often also called automatic structures.)
Let us say that a storage type is automatic-representable 
if it is isomorphic to an automatic storage type. 
Thus, by Theorem~\ref{thm:automatic}, 
every automatic-representable storage type is MSO-expressible.

As discussed in~\cite{khomin,rub08}, automatic structures have been generalized in the literature such that the domain~$C$ consists of trees (with the usual notion of a regular tree language defined by a finite tree automaton), and it is straightforward to generalize Theorem~\ref{thm:automatic} to that case. They have also been generalized to infinite strings and trees,  
for which regularity is defined by B\"uchi automata or Rabin automata. 
We cannot generalize Theorem~\ref{thm:automatic} to this case 
because we only consider finite graphs. 

\subsection{Iterated Pushdowns}
\label{sec:pushdown}

Let $S=(C,c_\init,\Theta,m)$ be a storage type.  
The storage type \emph{pushdown of $S$}, denoted $\rP(S)$, has 
configurations that are nonempty pushdowns of which each cell contains a pair $(\omega, c)$, where $\omega$ is a pushdown symbol in some alphabet~$\Omega$ and $c$ is a storage configuration of $S$. 
It has the instructions $\push(\omega,\theta)$, $\pop$, and $\ttop(\omega)$, 
for every $\omega\in\Omega$ and $\theta\in\Theta$, 
with the following meaning: the $\ttop(\omega)$-instruction checks that the top-most pushdown symbol 
is $\omega$, the $\pop$-instruction pops the top-most cell, and if the top-most cell contains the storage configuration~$c$ of $S$, then the $\push(\omega,\theta)$-instruction pushes a cell with content $(\omega,c')$ on the pushdown, where $c'$ is such that $(c,c') \in m(\theta)$.
We use the pushdown alphabet $\Omega=\{\alpha,\beta,\gamma\}$, 
with initial pushdown symbol~$\gamma$.  
As in Example~\ref{ex:stackstring}, we will view a pushdown configuration as a nonempty sequence 
of pairs $(\omega, c)$, such that the last pair represents the top of the pushdown. 
The bottom cell of the pushdown configuration cannot be changed; it always equals $(\gamma,c_\init)$.

Formally, we define $\rP(S)=(C',c'_\init,\Theta',m')$ where 
\begin{itemize}
\item $C'=\{c'_\init\}\cdot(\Omega\times C)^*$,
\item $c'_\init=(\gamma,c_\init)$,
\item $\Theta'= \{\ttop(\omega) \mid \omega\in\Omega\} \cup 
\{\push(\omega,\theta)\mid \omega\in\Omega, \,\theta\in\Theta\}\cup\{\pop\}$, and 
\item $m'(\theta') = m''(\theta') \cap (C'\times C')$ for every $\theta'\in\Theta'$, 
where 
\end{itemize}
for every $\omega\in\Omega$ and $\theta\in\Theta$, 
\begin{itemize}
\item $m''(\ttop(\omega))=\{(\xi(\omega,c),\xi(\omega,c))\mid \xi\in (\Omega\times C)^*, \,c\in C\}$, 
\item $m''(\pop)= \{(\xi(\omega',c),\xi)\mid \xi\in (\Omega\times C)^+, \,\omega'\in \Omega, \,c\in C\}$, and 
\item $m''(\push(\omega,\theta))= \{(\xi(\omega',c),\xi(\omega',c)(\omega,c'))\mid 
     \xi\in (\Omega\times C)^*, \,\omega'\in \Omega, \,(c,c')\in m(\theta)\}$.
\end{itemize}

Clearly, the operator $\rP$ preserves isomorphism, i.e., 
if the storage types $S$ and $S'$ are isomorphic, then so are $\rP(S)$ and $\rP(S')$.

We now prove the third general result mentioned in the introduction of this section. 

\newpage

\begin{theorem}\rm
\label{thm:P}
If $S$ is MSO-expressible, then $\rP(S)$ is MSO-expressible.
\end{theorem}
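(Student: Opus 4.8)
Since $\rP$ preserves isomorphism, we may assume that $S$ is itself an MSO graph storage type $S=(\phi_\rc,g_\init,\Theta)$ over $(\Sigma,\Gamma)$. The plan is to represent a pushdown of configurations of $S$ as a single graph in the style of the string-like graphs of Observation~\ref{ob:stringlike}, and to use ``duplicate'' edges to witness graph isomorphisms inside pair graphs, as in Example~\ref{ex:stack}. Concretely, I would build an MSO graph storage type $S^\circ=(\phi^\circ,g^\circ,\Theta^\circ)$ over $(\Sigma\times\Omega,\ \Gamma\cup\{\pi,d\})$, with $\pi,d,\nu$ fresh, together with bijections exhibiting $\rP(S)\cong S^\circ$. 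A pushdown configuration $(\omega_1,c_1)\cdots(\omega_k,c_k)$ is represented by the disjoint union of the graphs $c_1,\dots,c_k$ in which every node $u$ of $c_i$ is relabelled to $(\nlab(u),\omega_i)$ and a $\pi$-edge is added from every node of the $c_i$-part to every node of the $c_{i+1}$-part, for $i\in[k-1]$; thus the cells of the pushdown are exactly the $\equiv_\pi$-classes, ordered into a string just as the components of a string-like graph are ordered by $\Ae$-edges. Accordingly $\phi^\circ$ states: the $\equiv_\pi$-classes form a string (the analogue of $\sstring_{\Ae,\mathrm{eq}}$ from Observation~\ref{ob:stringlike}, with $\pi$ for $\Ae$); there are no $d$-edges and every $\Gamma$-edge joins $\equiv_\pi$-equivalent nodes; within each $\equiv_\pi$-class $X$ the $\Omega$-component of the labels is constant and the formula $\phi_\rc^\Omega|_X$ holds, where $\phi_\rc^\Omega$ is obtained from $\phi_\rc$ by replacing every atom $\lab_\sigma(x)$ with $\bigvee_{\omega\in\Omega}\lab_{(\sigma,\omega)}(x)$ (so $\phi_\rc^\Omega|_X$ says that the $\Sigma$-projection of the subgraph induced by $X$ satisfies $\phi_\rc$); and the first $\equiv_\pi$-class (the one with no incoming $\pi$-edge) has constant $\Omega$-label $\gamma$ and $\Sigma$-projection $g_\init$. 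We take $g^\circ$ to be the one-class representation of $(\gamma,g_\init)$.

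For the instructions, $\Theta^\circ=\{\psi_{\ttop(\omega)}\mid\omega\in\Omega\}\cup\{\psi_{\pop}\}\cup\{\psi_{\push(\omega,\theta)}\mid\omega\in\Omega,\,\theta\in\Theta\}$. Each of these formulas describes a pair graph $h$ with ordered partition $(X_1,X_2)$ for which $h[X_1]$ and $h[X_2]$ satisfy $\phi^\circ$ (by relativization) and the $d$-edges realize a graph isomorphism, preserving labels, $\Gamma$-edges and $\pi$-edges, from a ``stable'' part of $X_1$ onto the corresponding part of $X_2$, exactly in the spirit of the $d$-edges in Example~\ref{ex:stack}. For $\psi_{\ttop(\omega)}$ the stable part is all of $X_1$, there are no $\Gamma$-edges between $X_1$ and $X_2$, and the top class of $X_1$ (the unique $\equiv_\pi$-class with no outgoing $\pi$-edge) carries $\Omega$-label $\omega$. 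For $\psi_{\pop}$ the stable part is $X_1$ minus its top class, which is mapped isomorphically onto all of $X_2$, again with no $\Gamma$-edges between $X_1$ and $X_2$, and $X_1$ has at least two classes. These two cases are routine.

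The delicate case is $\psi_{\push(\omega,\theta)}$: one must express that the configuration $c$ in the top cell of $g_1$ becomes a configuration $c'$ with $(c,c')\in m(\theta)=\rel(L(\theta))$, i.e. that \emph{some} pair graph in $L(\theta)$ has $\pair$-image $(c,c')$. The key idea is not to quantify over such a witness but to carry it inside $h$. So $\psi_{\push(\omega,\theta)}$ requires that the $d$-edges map $X_1$ isomorphically onto $X_2$ minus its top class $T_2$, that $X_2$ has exactly one $\equiv_\pi$-class more than $X_1$, that $T_2$ carries $\Omega$-label $\omega$, that the only $\Gamma$-edges between $X_1$ and $X_2$ join the top class $T_1$ of $X_1$ with $T_2$, and --- the crucial clause --- that $h[T_1\cup T_2]$ satisfies $\theta^\Omega|_{T_1\cup T_2}$, where $\theta^\Omega$ replaces every atom $\lab_\sigma(x)$ in $\theta$ by $\bigvee_{\omega'\in\Omega}\lab_{(\sigma,\omega')}(x)$ and leaves the $\edge_\gamma$- and $\edge_\nu$-atoms unchanged. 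Indeed, the $\nu$-edges of $h$ restricted to $T_1\cup T_2$ run from every node of $T_1$ to every node of $T_2$, so $h[T_1\cup T_2]$ is precisely a pair graph with components $(T_1,T_2)$ whose intermediate edges are the $\Gamma$-edges we placed between $T_1$ and $T_2$; after forgetting the $\Omega$-components of the labels it is a pair graph $h_\theta$ over $(\Sigma,\Gamma)$ with $\pair(h_\theta)=(c,c')$, and $h[T_1\cup T_2]\models\theta^\Omega|_{T_1\cup T_2}$ holds exactly when $h_\theta\models\theta$. Passing to $\rel$ discards these intermediate $\Gamma$-edges together with the $d$- and $\nu$-edges, so $\rel(L(\psi_{\push(\omega,\theta)}))$ is exactly $m'(\push(\omega,\theta))$ read through the configuration bijection; conversely, any witness for $(c,c')\in\rel(L(\theta))$ supplies the intermediate $\Gamma$-edges needed to build an appropriate $h$.

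It then remains to verify that $\Theta^\circ$ is exclusive --- the three kinds of instruction are distinguished by whether $X_2$ has the same, one fewer, or one more $\equiv_\pi$-class than $X_1$; distinct $\omega$ by the $\Omega$-label of the relevant top cell; and, for a fixed $\omega$, $\psi_{\push(\omega,\theta)}$ and $\psi_{\push(\omega,\theta')}$ by exclusiveness of $\Theta$, since the intermediate $\Gamma$-edges between $T_1$ and $T_2$ are determined by $h$ and cannot witness both $\theta$ and $\theta'$ --- and to check that the configuration bijection together with $\ttop(\omega)\mapsto\psi_{\ttop(\omega)}$, $\pop\mapsto\psi_{\pop}$, $\push(\omega,\theta)\mapsto\psi_{\push(\omega,\theta)}$ yields an isomorphism $\rP(S)\cong S^\circ$, which is immediate from the definitions of $C'$, $c'_\init$ and $m'$ in the excerpt. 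The main obstacle is the $\psi_{\push(\omega,\theta)}$ clause above: arranging the pushdown pair graph so that its restriction to the two top cells is, modulo the relabelling $\Sigma\mapsto\Sigma\times\Omega$, exactly a pair graph for $\theta$ --- with no missing $\nu$-edges and no stray $\Gamma$- or $d$-edges between cells other than $T_1$ and $T_2$ --- so that $\theta$ can be checked there purely by relativization.
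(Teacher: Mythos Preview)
Your proof is correct and follows essentially the same approach as the paper: represent a pushdown as a string-like sequence of $S$-configurations, use $d$-edges to witness the isomorphism between the stable parts of the two sides of a pair graph, and for $\push(\omega,\theta)$ carry a $\theta$-pair-graph between the two top cells via intermediate $\Gamma$-edges so that $\theta$ can be checked by relativization. The only difference is cosmetic: the paper encodes the pushdown symbols as \emph{edge} labels (so that configurations of $\overline{\rP}(S)$ are literally string-like graphs in $\cG[S,\Omega]$, reusing Observation~\ref{ob:stringlike} with $\Omega$ in place of $\Ae$) rather than as a second component of the node labels, which spares it your $(-)^\Omega$ rewriting of $\phi_\rc$ and $\theta$.
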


\begin{proof}
Since the operator $\rP$ preserves isomorphism, it suffices to prove that 
if~$S$ is an MSO graph storage type, then $\rP(S)$ is isomorphic to an MSO graph storage type.
Let $S=(\phi_\rc,g_\init,\Theta)$ be an MSO graph storage type over $(\Sigma,\Gamma)$. 
We will construct an MSO graph storage type $\overline{\rP}(S)$ that is isomorphic to $\rP(S)$.
The storage configurations of $\overline{\rP}(S)$ are all string-like graphs 
$g\in \cG[S,\Omega]$ over~$S$ and~$\Omega$
without $e$-edges and without $\Gamma$-edges between consecutive components. 
Thus, $\overline{\rP}(S)$ is an MSO graph storage type over $(\Sigma,\Gamma\cup\Omega\cup\{d\})$,
where $d$ is a new symbol that will be used to label some of the intermediate edges of pair graphs 
(as in the MSO storage type STACK of Example~\ref{ex:stack}). 
Without loss of generality we assume that $\Gamma\cap\Omega=\emptyset$. 

Since the initial pushdown symbol is fixed to be $\gamma$, 
a pushdown configuration $(\omega_1,g_1)(\omega_2,g_2)\cdots(\omega_{n+1},g_{n+1})$ of $\rP(S)$, 
with $n\in\nat$ and $\omega_1=\gamma$ (and $g_1=g_\init$), is uniquely
represented by the string-like graph $g$, as above, such that $\tr(g)= \omega_2\cdots\omega_{n+1}$ and 
$g[V_i]=g_i$ for every $i\in[n+1]$, where $V_i$ is the $i$-th component of $g$. 
The formula $\overline{\phi}_\rc$ that defines these string-like graphs is
\[
\overline{\phi}_\rc = \sstringlike_\Omega \wedge 
\forall x,y. (\edge_\Gamma(x,y)\to \mathrm{eq}_\Omega(x,y))
\]
where $\mathrm{eq}_\Omega(x,y)$ is the fomula $\eq(x,y)$ defined before Observation~\ref{ob:stringlike}
with $\Ae$ replaced by $\Omega$, and 
$\sstringlike_\Omega$ is the formula $\sstringlike$ of Observation~\ref{ob:stringlike} 
with $\Ae$ replaced by $\Omega$. 
This formula $\overline{\phi}_\rc$ defines the set $\overline{C}$ of storage configurations 
of~$\overline{\rP}(S)$. 
The initial storage configuration of $\overline{\rP}(S)$ is $g_\init$. 

It remains to implement the instructions of $\rP(S)$. Whenever we consider a pair graph $h$ for this purpose, 
we will assume that its ordered partition is $(V_1,V_2)$, and 
that both $h[V_1]$ and $h[V_2]$ satisfy $\overline{\phi}_\rc$.
Let $\first(x) = (\neg \exists y. \edge_\Omega(y,x))$ and $\last(x) = (\neg \exists y. \edge_\Omega(x,y))$, as in Example~\ref{ex:formula} but with $\Ae$ replaced by $\Omega$.

To implement an instruction $\ttop(\omega)$ with $\omega\in\Omega$, 
we first consider the set $\overline{C}_\omega$ of graphs $g\in\overline{C}$ that represent a configuration of $\rP(S)$ with top-most pushdown symbol $\omega$. 
Let $\phi_\omega$ be the formula  $\forall x.(\last(x)\to \exists y.\,\edge_\omega(y,x))$. 
Moreover, let $\phi'_\omega$ be the formula such that 
$\phi'_\omega = \phi_\omega$ for $\omega\neq\gamma$, and 
\[
\phi'_\gamma=\phi_\gamma\vee \forall x.(\first(x)\wedge\last(x)) \enspace.
\]
Obviously $g\in\overline{C}$ satisfies $\phi'_\omega$ if and only if $g\in\overline{C}_\omega$.
The pair graphs $h$ for the instruction $\ttop(\omega)$ are defined such that $\pair(h)=(g,g)$ 
for some $g\in\overline{C}_\omega$, as follows. There are $d$-edges from $V_1$ to $V_2$ 
that establish an isomorphism between $h[V_1]$ and $h[V_2]$
(as in Examples~\ref{ex:pairgraphs} and~\ref{ex:stack}), and $h[V_1]$ satisfies $\phi'_\omega$.
It should be clear that this set of pair graphs is MSO-definable. 

To implement the pop-instruction, we consider all pair graphs $h$ with $d$-edges from $V_1$ to $V_2$ 
that establish an isomorphism between $h[V_1\setminus T_1]$ and $h[V_2]$, 
where $T_1\subseteq V_1$ is the last component of the string-like graph $h[V_1]\in \overline{C}$.

Finally we implement an instruction $\push(\omega,\theta)$ with $\omega\in\Omega$ and $\theta\in\Theta$.
Symmetrically to the previous case, each pair graph $h$ has $d$-edges from $V_1$ to $V_2$ 
that establish an isomorphism between $h[V_1]$ and $h[V_2\setminus T_2]$ 
where $T_2$ is the last component of $h[V_2]$. Moreover, as in the first case,
$h[V_2]$ satisfies the formula $\phi'_\omega$ (or equivalently, $\phi_\omega$). 
In addition to the $d$-edges, $h$ has intermediate $\Gamma$-edges
between $T_1$ and~$T_2$ (where $T_1$ is the last component of $h[V_1]$, as before) such that 
the pair graph $h[T_1\cup T_2]$ satisfies~$\theta$. 
These $\Gamma$-edges ensure that $(h[T_1],h[T_2])\in\rel(\LL(\theta))$ and hence, 
since $h[T_1]$ is isomorphic to $h[T'_2]$, 
where $T'_2$ is the one-before-last component of $h[V_2]$,
that $(h[T'_2],h[T_2])\in\rel(\LL(\theta))$ as required
(see the next paragraph for an example).
\end{proof}

\begin{figure}[t]
  \begin{center}
    \includegraphics[scale=0.39,trim={7cm 3cm 9cm 0},clip]{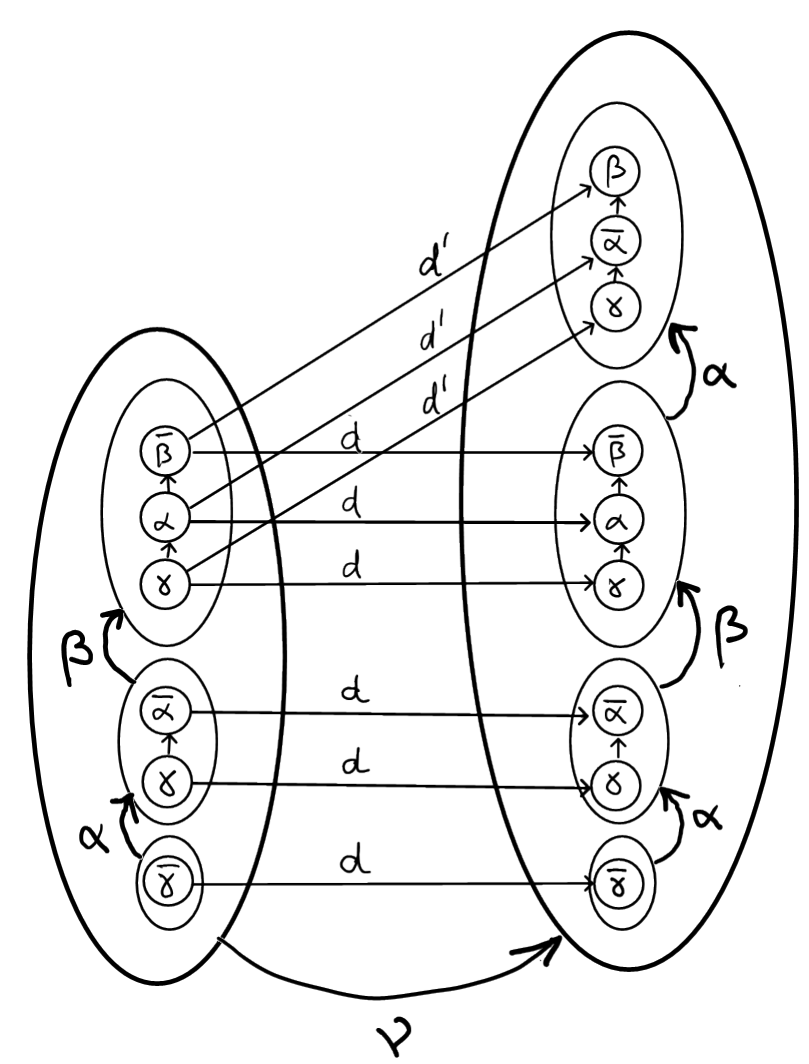}
    \end{center}
\caption{\label{fig:pd-stacks-J} A pair graph for the instruction $\push(\alpha,\movedown(\beta))$
as implemented in the MSO graph storage type $\overline{\rP}(\mathrm{STACK})$.}
\end{figure}

In Figure \ref{fig:pd-stacks-J} a pair graph is shown for the instruction $\push(\alpha,\movedown(\beta))$
as implemented in the MSO graph storage type $\overline{\rP}(\mathrm{STACK})$, 
where the MSO graph storage type $\mathrm{STACK}$ is defined in Example~\ref{ex:stack}. 
We assume here that the edge label alphabet of $\mathrm{STACK}$ is $\Gamma=\{*,d'\}$
(instead of $\Gamma=\{*,d\}$), because we use the new symbol $d$ in the edge label alphabet 
of $\overline{\rP}(\mathrm{STACK})$. 
For this pair graph $h$ we have, in the notation of the above proof, 
$h[T_1]=h[T'_2]=\ndgr(\gamma \alpha \overline{\beta})$ and 
$h[T_2]=\ndgr(\gamma \overline{\alpha} \beta)$.
Note that the storage configuration of $\overline{\rP}(\mathrm{STACK})$
in the first component of the pair graph can be reached from the initial storage configuration $\ndgr(\overline{\gamma})$ by the two consecutive instructions $\push(\alpha,\push(\alpha))$ and
$\push(\beta,\push(\beta))$.

For $n\in\nat$ we define the \emph{$n$-iterated pushdown} to be the storage type $\rP^n$, 
such that $\rP^0=\Triv$, as defined in Example~\ref{ex:triv}, and $\rP^{n+1}=\rP(\rP^n)$. 
The trivial storage type $\Triv$ is MSO-expressible by Example~\ref{ex:MSO-graph-storage-triv}. 
Hence, the next corollary is immediate from Theorem~\ref{thm:P}.

\begin{corollary}\rm \label{cor:Pn}
For every $n\in\nat$, the storage type $\rP^n$ is MSO-expressible.
\end{corollary}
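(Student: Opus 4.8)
The plan is to prove Corollary~\ref{cor:Pn} by a straightforward induction on $n$, relying entirely on the two ingredients already in place: the base case furnished by Example~\ref{ex:MSO-graph-storage-triv}, and the inductive step furnished by Theorem~\ref{thm:P}.

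For the base case $n=0$, recall that $\rP^0=\Triv$ by definition. Example~\ref{ex:MSO-graph-storage-triv} exhibits the MSO graph storage type $\TRIV$ over $(\{*\},\emptyset)$ and observes that it is isomorphic to $\Triv$; hence $\Triv$ is MSO-expressible, which settles $n=0$.

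For the inductive step, suppose $\rP^n$ is MSO-expressible, i.e., isomorphic to some MSO graph storage type. By Theorem~\ref{thm:P}, $\rP(\rP^n)$ is then MSO-expressible as well. Since $\rP^{n+1}=\rP(\rP^n)$ by definition, we conclude that $\rP^{n+1}$ is MSO-expressible, and the induction is complete.

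There is essentially no obstacle: all the substantive work lies in the proof of Theorem~\ref{thm:P}, which for an MSO graph storage type $S$ constructs the isomorphic MSO graph storage type $\overline{\rP}(S)$ whose storage configurations are the appropriate $e$-free string-like graphs over $S$ and $\Omega$. The only point worth recording explicitly is that the pushdown operator $\rP$ preserves isomorphism of storage types (as already noted just before Theorem~\ref{thm:P}), so that the property "isomorphic to an MSO graph storage type" is correctly propagated from $\rP^n$ to $\rP^{n+1}=\rP(\rP^n)$ at each stage of the iteration.
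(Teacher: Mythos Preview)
Your proof is correct and follows exactly the same approach as the paper: the base case $\rP^0=\Triv$ is MSO-expressible by Example~\ref{ex:MSO-graph-storage-triv}, and the inductive step is immediate from Theorem~\ref{thm:P}. Your explicit remark that $\rP$ preserves isomorphism is already absorbed into the statement and proof of Theorem~\ref{thm:P}, so it is not needed separately here.
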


It is not difficult to see from the proof of Theorem~\ref{thm:P} that if $S$ is MSO-definable
(as defined in Section~\ref{sec:msogratra}), then the storage transformations of $\overline{\rP}(S)$
are MSO graph transductions, and so $\rP(S)$ is MSO-definable too. Obviously, $\Triv$ is also MSO-definable.
Hence the iterated pushdown storage types $\rP^n$ are even MSO-definable.

\section{Conclusion} 

We have considered a specific kind of (finitely encoded) storage types of automata, 
the MSO graph storage types.
Essentially, they are storage types of which each storage configuration is a graph, and 
each instruction executes a storage transformation that is an MSO-expressible graph relation, 
as defined in Section~\ref{sec:pairgraphs}. For every MSO graph storage type $S$ 
(and every alphabet $A$) we have designed an appropriate logical language $\MSOL(S,A)$ 
on string-like graphs, and we have proved a BET-theorem relating the languages over $A$ 
that are recognized by $S$-automata to those that can be expressed by a formula of $\MSOL(S,A)$.
We observe here that it is straightforward to extend the results of this paper to MSO graph storage types 
that, additionally, have an MSO-definable set of final configurations.

The notion of an MSO-expressible graph relation seems to be new, and needs further investigation.
The class of MSO-expressible graph relations seems to be quite large.
It is easy to see that it is closed under inverse (as mentioned already in Section~\ref{sec:msogratra}),
and under union and intersection (i.e., if $R_1$ and $R_2$ are MSO-expressible, 
then so are $R_1\cup R_2$ and $R_1\cap R_2$).
It is also straightforward to show that it contains (string) relations such as $\{(\edgr(a^n),\edgr(a^{2^n}))\mid n\in\nat\}$, 
$\{(\edgr(a^n),\edgr(a^{kn}))\mid n,k\in\nat\}$, and 
$\{(\edgr(a^nb^n),\edgr(a^nb^n))\mid n\in\nat\}$.\footnote{For instance, 
the pair graphs that define the relation 
$\{(\edgr(a^n),\edgr(a^{2^{n+1}-1}))\mid n\in\nat\}$ have intermediate $d$-edges from component $V_1$ to component $V_2$ satisfying the following two conditions: 
(1) for every $X\subseteq V_1$ there exists $v\in V_2$ such that $\inn_d(v)=X$, and 
(2) for every $v,v'\in V_2$, if $v\neq v'$ then $\inn_d(v)\neq\inn_d(v')$.
The relation $\{(\edgr(a^n),\edgr(a^{2^n}))\mid n\in\nat\}$ can be defined similarly, ignoring the last node of each of the two string graphs. 
The pair graphs that define the relation 
$\{(\edgr(a^n),\edgr(a^{k(n+1)-1}))\mid n,k\in\nat\}$ have $d$-edges 
from $V_2=\{v_1,\dots,v_m\}$ to $V_1=\{u_1,\dots,u_{n+1}\}$ such that
(1)~$\out_d(v_1)=\{u_1\}$,
(2)~for every $i\in[m]$ and $j\in[n+1]$, if $\out_d(v_i)=\{u_j\}$, then $\out_d(v_{i+1})=\{u_{f(j)}\}$, 
where $f(j)=j+1$ for $j\in[n]$ and $f(n+1)=1$, and 
(3)~$\out_d(v_m)=\{u_{n+1}\}$.
The pair graphs that define the relation 
$\{(\edgr(a^mb^n),\edgr(a^mb^n))\mid m,n\in\nat\}$
have intermediate $d$-edges establishing a bijection between $V_1$ and $V_2$ 
(as in Example~\ref{ex:pairgraphs})
such that corresponding nodes have the same label, where we define the label of a node to be the label of its outgoing $\{a,b\}$-edge. The pair graphs for $\{(\edgr(a^nb^n),\edgr(a^nb^n))\mid n\in\nat\}$ additionally have intermediate $d'$-edges establishing a bijection between the $a$-labeled nodes of $V_1$ 
and the $b$-labeled nodes of $V_2$, which implies $m=n$.}
Generalizing the last example, 
we can rather easily conclude from the BET-theorem of~\cite{lauschthe94} 
(and from Example~\ref{ex:pairgraphs}) that if~$L$ is a context-free language, 
then the identity on $\edgr(L)$ is MSO-expressible:  
the matching edges between the positions of the string can be simulated 
by intermediate edges in the corresponding pair graph. Hence the same holds if $L$ is 
an intersection of finitely many context-free languages, such as $L=\{a^nb^nc^n\mid n\in\nat\}$.  
On the other hand there are simple graph relations that are not MSO-expressible, such as 
$\{(\edgr(a^nb^n),\edgr(\epsilon))\mid n\in\nat\}$. In fact, it is rather easy to show that if $L$ is a 
language such that $\{(\edgr(w),\edgr(\epsilon))\mid w\in L\}$ is MSO-expressible, then $L$ is regular
(cf. Claim~18 in Appendix~A of~\cite{bojkielho19}): the intermediate edges of a pair graph $h$ such that 
$\pair(h)= (\edgr(w),\edgr(\epsilon))$ can be coded as additional labels of the nodes of $\edgr(w)$.  
This also shows that the class of MSO-expressible graph relations is not closed under composition
(cf.~\cite[Theorem~4.1]{bojkielho19} and its proof in Appendix~A of~\cite{bojkielho19}): 
composing the identity on $\edgr(\{a^nb^n\mid n\in\nat\})$
with the MSO-expressible relation $\{(\edgr(w),\edgr(\epsilon))\mid w\in \{a,b\}^*\}$ produces the above
non-MSO-expressible relation.

We have not been able to find an example of a (finitely encoded) storage type
that is \emph{not} isomorphic to an MSO graph storage type, i.e., that is not MSO-expressible
(as defined in Section~\ref{sec:MSO-expres}).\footnote{Of course we want such an example to be 
a storage type $S=(C,c_\init,\Theta,m)$ such that $C$ is a countable set and $m(\theta)$ is 
a partially computable binary relation for every $\theta\in\Theta$.
}
In the literature (e.g., \cite{engvog86,eng91c}),
the \emph{equivalence} of storage types is defined in such a way that 
(1)~isomorphic storage types are equivalent, 
(2)~equivalent storage types are language equivalent, and 
(3)~the pushdown operator $\rP$ preserves equivalence. 
Suppose that we would redefine a storage type to be MSO-expressible if it is equivalent 
(rather than isomorphic) to an MSO graph storage type. 
Then Theorem~\ref{thm:main-string} can still be viewed as a BET-theorem for MSO-expressible storage types,
and Theorem~\ref{thm:P} still holds. So now the even harder question is: 
are there examples of storage types that are \emph{not} equivalent to an MSO graph storage type?

A similar question is whether there exist MSO graph storage types 
(or even MSO-definable storage types, as defined in Section~\ref{sec:msogratra})
that are not automatic-representable (as defined in Section~\ref{sec:auto}).

It follows from Lemma~\ref{lm:twolang} that for every $S$-automaton $\cA$, 
$\LL(\cA)=\emptyset$ if and only if $\GLL(\cA)=\emptyset$. 
Since $\GLL(\cA)$ is MSO-definable,
and since for every closed MSO-logic formula $\phi$ and every $k\in\nat$ it is decidable whether 
$\LL(\phi)$ contains a graph of tree-width at most $k$ (cf.~\cite[Corollary~5.81]{coueng12}), 
it follows from this equivalence that the emptiness problem 
for $S$-automata over $A$ is decidable if the set $\{g\in \cG[S,A] \mid g\models \beh\}$ 
is of bounded tree-width. Unfortunately, this result is not very helpful: even for the pushdown storage type  this set contains all rectangular grids as subgraphs (disregarding labels) and hence is not of bounded tree-width, cf.~\cite[Corollary~2.60(1) and Example~2.56(3)]{coueng12}.\footnote{For the $2\times 5$ grid see Figure~\ref{fig:comput-intro}. See also Figure~\ref{fig:comput} for the $3\times 5$ grid for $S=\mathrm{Stack}$.} 
In~\cite{madpar11} this idea is successfully applied to an alternative definition of $\GLL(\cA)$ for
various restrictions of multi-pushdown and multi-queue automata. For an \mbox{$n$-pushdown} automaton~$\cA$, i.e., an automaton with $n$ independent pushdowns, the alternative $\GLL(\cA)$ consists of string graphs augmented with edges that model $n$ matchings, such that each matching corresponds to the pushes and pops of $\cA$ on one of its pushdowns. This generalizes the case $n=1$ in~\cite{lauschthe94}. 
It is shown in~\cite{madpar11} that the graph language $\GLL(\cA)$ is MSO-definable 
and that $\LL(\cA)=\emptyset$ if and only if $\GLL(\cA)=\emptyset$ (but no BET-theorem is proved). 
Moreover, for a number of restrictions on the behaviour of $\cA$ it is shown that $\GLL(\cA)$ is 
of bounded tree-width, and hence its emptiness is decidable.
Since for $n=1$ the set of string graphs with one matching has tree-width~2, 
this includes the case of (unrestricted) \mbox{1-pushdown} automata.  
Similar results are shown for multi-queue automata.
We leave it as an open problem to find an alternative definition of $\GLL(\cA)$ for which the above strategy 
is applicable to $S$-automata $\cA$ under certain natural conditions on $S$ that would be satisfied by most 
of the known storage types with a decidable emptiness problem 
(such as the $n$-iterated pushdown, see~\cite[Theorem~7.8]{dam82} and~\cite[Theorem~7.12]{eng91c}).
We finally note that a completely different strategy is investigated in~\cite{zet17}.

\acknowledgements We wish to thank Helmut Seidl for pointing out 
the relationship to automatic structures.
We are grateful to the reviewers for their constructive comments.


\end{document}